\title{An Algorithm for the Assignment Game Beyond Additive Valuations}
\author{Eric Balkanski\thanks{eb3224@columbia.edu}}
\author{Christopher En\thanks{Corresponding Author: ce2456@columbia.edu}}
\author{Yuri Faenza\thanks{yf2414@columbia.edu}}
\affil{Columbia University, IEOR Department}
\theoremstyle{plain}
\newtheorem{theorem}{Theorem}
\newtheorem{corollary}[theorem]{Corollary}
\newtheorem{lemma}[theorem]{Lemma}
\newtheorem{proposition}[theorem]{Proposition}
\theoremstyle{definition}
\newtheorem{definition}[theorem]{Definition}
\newtheorem{example}[theorem]{Example}
\theoremstyle{remark}
\newtheorem{remark}[theorem]{Remark}
\renewcommand{\[}{\begin{equation}}
\renewcommand{\]}{\end{equation}}
\newcommand{\bea}{\begin{eqnarray}}
\newcommand{\eea}{\end{eqnarray}}
\newcommand{\argmax}{\mathop{\mathrm{argmax}}}
\begin{document}

\maketitle

\begin{abstract}

The assignment game, introduced by Shapley and Shubik (1971), is a classic model for two-sided matching markets between buyers and sellers. In the original assignment game, it is assumed that payments lead to transferable utility and that buyers have unit-demand valuations for items. Two important and mostly independent lines of work have studied more general settings with imperfectly transferable utility and gross substitutes valuations. Multiple efficient algorithms have been proposed for computing a competitive equilibrium, the standard solution concept in assignment games, in these two settings. Our main result is an efficient algorithm for computing competitive equilibria in a setting with both imperfectly transferable utility and gross substitutes valuations. Our algorithm combines augmenting tree techniques from maximum matching and algorithms for matroid intersection. We also show that, in two mild generalizations of our model, computing a competitive equilibrium is NP-hard.

\end{abstract}

\section{Introduction}

The assignment game, introduced over half a century ago by \citet{shapley1971assignment}, is a standard model to study two-sided markets between buyers and sellers. One of the main solution concepts in this game, the competitive equilibrium, is an efficient outcome composed of allocations and prices where the buyers are allocated a set of items that maximizes their utility at the prices specified by the outcome. Analyzing and computing competitive equilibria in the assignment game has applications in housing markets between buyers and sellers \citep{shapley1971assignment}, hiring markets between employers and candidates \citep{gul2000english, dupuy2020taxation}, and online advertising markets between advertisers and publishers \citep{dutting2011expressive}.

In the original model of the assignment game, each seller sells a single copy of an item (unit supply), and each buyer is interested in acquiring a single item (unit demand). Buyers and sellers are assumed to have \emph{transferable utility} (TU), meaning that payments are used to share surplus between agents without losing utility to frictions. \citet{shapley1971assignment} showed that competitive equilibria always exist by casting the problem as a linear program. This existential result thus also serves as an efficient algorithm for computing the entire set of competitive equilibria. There has since been substantial work studying various extensions of the assignment game.

A first body of work has relaxed the transferable utility assumption and instead considers settings with \emph{imperfectly transferable utility} (ITU). This relaxation is motivated by the various impediments to the transfer of utility between agents, such as nonlinear taxes or fees a buyer may face when making a trade. \citet{kaneko1982central} and \citet{quinzii1984core} gave nonconstructive proofs that competitive equilibria exist when buyers have unit demands, even under ITU, and \citet{demange1985strategy} described various structural properties of the set of competitive equilibria under unit demands and ITU. Unlike in the TU setting, the initial existential and structural results under ITU do not imply efficient algorithms for computing competitive equilibria. A subsequent line of work has thus focused on the computational side of competitive equilibria under ITU. In particular, \citet{alkan1990core} provided an algorithm for computing a competitive equilibrium in this setting, though they did not provide a polynomial bound on the running time. \citet{alkan1992equilibrium} later improved the algorithm and showed that the new version runs in polynomial time.

Another line of work that generalizes the initial model of \citet{shapley1971assignment} has relaxed the unit demand assumption for buyers in the market by considering richer classes of valuations. \citet{crawford1981job} extend the assignment game to a many-to-one model where each buyer may match with multiple sellers, and prove the existence of competitive equilibria. In their model, buyers are assumed to have \emph{additive} valuations, and to demand a fixed number of matches. \citet{kelso1982job} consider an even broader class of valuations by introducing \emph{gross substitutes} valuations as a sufficient condition for the existence of competitive equilibria when buyers have valuations over substitutable items. Since this introduction, gross substitutability has proven to be a very general yet powerful condition on valuations that captures many desirable properties. On the computational side, \citet{kelso1982job} show that under gross substitutes valuations, a natural ascending auction procedure called \emph{Walrasian t\^atonnement} is guaranteed to find a competitive equilibrium. While this process is primarily used as a proof device in \citet{kelso1982job}, it was later adapted to run in polynomial time (see, e.g., \citet{leme2017gross}). \citet{murota1996valuatedi} and \citet{murota1996valuatedii} present the first purely combinatorial cycle-cancelling algorithm for the same problem that runs in strongly polynomial time.
\smallskip

\noindent {\bf Applications.} Even though gross substitutes and imperfectly transferable utilities have been studied mostly independently, there are many application areas where markets exhibit both item substitutabilities and nonlinear frictions, taxes, or fees. Here we discuss one common setting; we present additional applications in Appendix~\ref{sec:proofs:applications}.

\paragraph{Hiring markets.} In hiring markets, workers often exhibit substitutable skills. As a result, the valuation functions of  firms for workers are typically not additive. For example, a technology firm would often value two software engineers less than the sum of their value for each individual software engineer because of a significant overlap in the tasks that the two engineers are qualified to complete. As described in the classic hiring market model of \citet{kelso1982job}, this phenomenon of ``diminishing marginal returns'' can be represented by gross substitutes valuation functions. In addition, each firm must pay its workers a salary, and each worker's salary is subject to progressive  income taxes that assign a tax rate to different tax brackets. Furthermore, if a worker is a remote worker located in another state, their salary is subject to increased tax rates. These  tax rates that depend on tax brackets imply that the post-tax salary received by a worker is a piecewise linear function of the pre-tax salary paid by the firm. This gap between the pre-tax and post-tax salaries causes an imperfect transfer of utility that has been modeled using ITU (see, e.g., \citet{dupuy2020taxation, galichon2019costly}). 

The scenarios discussed here and in Appendix~\ref{sec:proofs:applications} have motivated the study of generalized assignment games that feature both ITU and gross substitutes valuations. As we discuss in more detail in Section \ref{sec:intro:related-works}, even though some existential and structural results have been shown in similar or more general settings, there is, to the best of our knowledge, no existing computational result in such a setting. The combination of nonlinearity in buyers' valuations (gross substitutes) and nonlinearity of their loss in utility due to frictions (ITU) introduces novel algorithmic challenges that are further discussed in Section~\ref{sec:intro:technical-overview}.

\smallskip

\noindent {\bf Our Contributions.} Our main contribution is a polynomial-time algorithm for computing a competitive equilibrium under imperfectly transferable utilities and gross substitutes valuations. Our model for ITU, which we call the \textsc{QITU} model (short for \emph{Quasilinear and Imperfectly Transferable Utility} model, and described formally in Section~\ref{sec:prelim:problem-statement}) assumes gross substitutes valuations and generalizes the gross substitutes TU models of \citet{kelso1982job} and the unit-demand ITU models of \citet{demange1985strategy}, \citet{alkan1990core}, and \citet{alkan1992equilibrium}, under the additional assumption that transfers are piecewise linear. Our model also generalizes the model of \cite{dutting2011expressive} with the additional assumptions that transfers are continuous and surjective.\footnote{\cite{dutting2011expressive} allow transfers to be noncontinuous, and simply assume that the transfer functions are sufficiently positive instead of surjective, as they do not account for negative prices.}
\begin{theorem}\label{thm:main}
    In the \textsc{QITU} model, there is a polynomial time algorithm that computes a competitive equilibrium.
\end{theorem}
As in previous models, we take as input oracles for the valuation functions of each buyer, along with the full description of functions defining the rate at which utility can be transferred between agents. In addition to achieving the first efficient algorithm under ITU and gross substitutes valuations, our result also unifies many of the existing algorithmic results in the previously listed models that are special cases of the \textsc{QITU} model. In particular, we show the the bidder-optimality of the output of our algorithm, see Theorem~\ref{thm:min_ce}. As such, our algorithm generalizes the ascending auction style mechanisms in various related settings.

We next show that the assumption that buyers have quasilinear utilities is required for our result by considering two relaxations of our setting, called the \textsc{NQBu} model (short for \emph{Non-Quasilinear with Budget}) and \textsc{NQBr} model (short for \emph{Non-Quasilinear with Breakpoint}). In the \textsc{NQBu} model, each buyer has gross substitutes valuations and perfectly transferable utility, but a single buyer also has an additional budget constraint. In the \textsc{NQBr} model, each buyer has gross substitutes valuations and perfectly transferable utility, except for a single buyer whose utility function is piecewise linear in money, with a single breakpoint. These models are of interest as a way to model budgets or risk aversion in buyers, which are common properties in many real-world markets. However, we prove negative results showing that computing a competitive equilibrium in these settings is NP-hard.

\begin{theorem}\label{thm:hardness2}
    Computing a competitive equilibrium in the \textsc{NQBu} model is NP-hard, even when an equilibrium is known to exist.
\end{theorem}

\begin{theorem}\label{thm:hardness}
    Computing a competitive equilibrium in the \textsc{NQBr} model is NP-hard, even when an equilibrium is known to exist.
\end{theorem}

The paper is organized as follows. We conclude the introductory section with an overview of the techniques used to prove Theorem~\ref{thm:main} (Section~\ref{sec:intro:technical-overview}) and a discussion of additional related work (Section~\ref{sec:intro:related-works}). In Section \ref{sec:preliminaries}, we formally introduce the \textsc{QITU} model and state the problem of interest. In Section \ref{sec:framework}, we introduce the basic structures on which the algorithm from Theorem~\ref{thm:main} operates and formally describe the algorithm. In Section \ref{sec:FPI}, we introduce and analyze the main subroutine of the algorithm, used to increase prices in the spirit of Walrasian t\^atonnement. This section is the most technical section and the main technical contribution of the paper. In Section \ref{sec:main_result}, we analyze the remaining subroutines and combine our results into the proof for Theorem \ref{thm:main}. In Section~\ref{sec:lattice_IC}, we explore the following structural and algorithmic features:   lattice structures, bidder-optimality, and incentive-compatibility. In Section \ref{sec:negative}, we present Theorems \ref{thm:hardness2} and \ref{thm:hardness} Appendix~\ref{sec:proofs:applications} and Appendix~\ref{sec:proofs:prelims} present extended discussion on applications and the model, respectively. Appendix \ref{sec:proofs:gs} contains helper results on gross substitutes valuations used in the other sections. Missing proofs from Sections \ref{sec:framework}, \ref{sec:FPI}, \ref{sec:main_result}, \ref{sec:lattice_IC},
 and \ref{sec:negative} appear in Appendix \ref{sec:proofs:ce_cert}, \ref{sec:proofs:FPI}, \ref{sec:proofs:main_result}, \ref{sec:proofs:lattice_IC},
 and \ref{sec:proofs:negative}, respectively. Section~\ref{sec:r-gross} contains an analysis of the limits of extending the concept of gross substitutes.

\subsection{Technical overview}\label{sec:intro:technical-overview}

Similar to auction mechanisms for the maximum weight bipartite matching problem (e.g.,~\citep{demange1986multi}) and for generalizations of the assignment game including \citet{alkan1992equilibrium} and \citet{dutting2011expressive}, our algorithm maintains a matching $\mu$ and a set of item prices $p$ at all stages, and alternates between two subroutines: a search for a $\mu$-augmenting path and a price increase.

The invariant maintained throughout the algorithm by the pair $(\mu,p)$ is \emph{partial stability}, a new concept that can be seen as a relaxed version of the equilibrium property. The algorithm operates throughout on a structure we call the \emph{marginal demand graph}, which indicates which buyers demand which items at the current outcome $(\mu, p)$. The augmenting path subroutine is employed to increase the size of $\mu$, while preserving partial stability. When no partially stable matching larger than $\mu$ can be found at the current prices $p$, the price increase kicks in. We employ a modified version of Walrasian t\^atonnement on the marginal demand graph, which finds a price increase that preserves partial stability. We further show that the alternating of these two procedures converges to an equilibrium. 

The key difficulty, and the main difference from prior work by \citet{alkan1992equilibrium} and \citet{dutting2011expressive}, comes from handling gross substitutes valuations. Unlike in their simpler unit-demand models, these valuations mean that a buyer's marginal value for an item changes based on the other items they hold. This dynamic creates a major challenge: every time the allocation is adjusted, buyers' marginal values shift, which in turn alters the sets of items they would be willing to trade (their indifference sets). As a result, the marginal demand graph we use to find better allocations is constantly changing. A standard ``augmenting path'' in such a graph could actually harm a buyer. Our solution is to leverage the special matroidal properties of gross substitutes demand sets, allowing us to find augmenting paths that improve the overall matching without reducing any buyer's utility.

Furthermore, when calculating price increases, \citet{dutting2011expressive} use a linear programming duality trick seen in \citet{alkan1989existence} to compute the price increase via a maximum weight matching problem. The nonadditive nature of our setting means this trick no longer applies directly, as the entire family of demanded sets for any given buyer is not easily expressible within a static graph or a single linear program. Instead, we can leverage properties of gross substitutes valuations to solve a related weighted common matroid basis problem and then use the solution to define an appropriate linear program. From there, we can reapply the duality trick to compute the desired price increase.

\subsection{Further related work}\label{sec:intro:related-works}

A large body of work extends the models of \citet{shapley1971assignment} and \citet{demange1985strategy}. \citet{rochford1984symmetrically} presents an alternative solution concept for the unit demand TU model based on ``threat utility'' and shows that these solutions also always exist within the set of competitive equilibria. \citet{roth1988interior} show that the same results hold in the setting with unit demand and ITU. \citet{hatfield2005matching} present a very general setting encompassing gross substitutes valuations. They show that the set of competitive equilibria is nonempty and forms a finite lattice, and present an algorithm for computing competitive equilibria. However, because of the discretization of prices, their model is incomparable to ours. \citet{hatfield2013stability} present a more complex model that removes the two-sided assumption. They identify \emph{full substitutability}, a generalization of gross substitutability, as a sufficient condition for the existence of competitive equilibria. \citet{fleiner2019trading} then develop a similar one-sided matching model that incorporates ITU, encompassing our model. They show that equilibria are guaranteed to exist under full substitutability; however, their results rely on fixed-point theorems and are thus intrinsically inefficient.

There has also been substantial work studying gross substitutes, as this class of functions naturally appears in many other contexts. \citet{nisan2006communication} provide another competitive equilibrium algorithm for the gross substitutes TU model of \citet{kelso1982job} based on linear programming. \citet{dress1990valuated}, \citet{dress1995rewarding}, \citet{murota1996convexity} study various functions related to matroids and discrete convexity, and \citet{fujishige2003note} show that their concepts are all equivalent to gross substitutes.

Algorithmically, there has been less work beyond the basic ITU model of \citet{demange1985strategy} and the gross substitutes model of \citet{kelso1982job}. \citet{colini2015multiple} develop a polynomial algorithm for a setting inspired by online ad auctions where buyers demand multiple items, but valuations for items are correlated across buyers, and utility is perfectly transferable except for a single discontinuity used to model a budget constraint. \citet{goel2015polyhedral} consider a related setting where items have public valuations and allocations must satisfy certain polyhedral or polymatroidal constraints. \citet{dutting2015auctions} provide a polynomial algorithm for a setting with private budgets, where values are additive and correlated across buyers.

\section{Preliminaries}\label{sec:preliminaries}

\subsection{Problem statement}\label{sec:prelim:problem-statement}

In the assignment game, there are finite sets of buyers $B$ and items $S$. Each buyer $i\in B$ has a \emph{valuation function} $v_i: 2^S\to \mathbb R$, where $v_i(T)$ is their value for receiving items $T \subseteq S$, and a \emph{utility function} $u_i: S\times \mathbb R^S\to\mathbb R$, where $u_i(T, p)$ is their utility for receiving items $T$ when the  prices are $p \in \mathbb R^S$. In the original model of \citet{shapley1971assignment}, the valuation functions are assumed to satisfy unit-demand, i.e., $v_i(T) = \max_{j\in T}v_i(\{j\})$, and the utility functions, defined as $\mathring u_i(T, p) := v_i(T) - \sum_{j\in T}p_j$, exhibit transferable utility.

\citet{kelso1982job} generalized the assignment game by allowing more complex valuations than unit-demand. In their model, the \emph{demand correspondence} $\mathring F_p(i) := \{T \subseteq S\mid \mathring u_i(T, p)\ge \mathring u_i(T', p) \text{ for all } T'\subseteq S\}$ of buyer $i$ at prices $p$ is the family of sets  of items $T$ that maximize  buyer $i$'s utility at prices $p$. Then, valuations are assumed to satisfy \emph{gross substitutes}:

\begin{definition}[Gross Substitutes \citep{kelso1982job}]\label{def:gs_kc} A valuation function
    $v_i$ satisfies gross substitutes if for any price vectors $p, p'\in \mathbb R^S$ such that $p\le p'$, and $T\in \mathring F_p(i)$, there is a set $T'\in \mathring F_{p'}(i)$ such that $T\cap\{j\mid p_j=p_j'\}\subseteq T'$.
\end{definition}

The second generalization is the assignment game with \emph{imperfectly transferable utility} (ITU),  studied by \citet{alkan1989existence} and \citet{dutting2011expressive}. For each buyer-item pair $(i, j)$, there is an \emph{effective price} function $q_{ij}:\mathbb R\to\mathbb R$ where $q_{ij}(p_j)$ represents the ``true'' price paid by agent $i$ for an item $j$ with ``listed'' price $p_j$. ITU is the case where the effective price functions need not be the identity. In particular, $q_{ij}$ is assumed to be piecewise linear, strictly increasing in $p_j$, and $q_{ij}(0) = 0$. The right derivative of the effective price function is denoted $q'_{ij}$ and it is assumed to be defined at every point of the domain of $q_{ij}$. The utility functions are then defined as \begin{equation}\label{eq:utility}u_i(T, p, q) := v_i(T) - \sum_{j\in T}q_{ij}(p_j),\end{equation} where as in the assignment game, valuations $v_i:2^S \rightarrow \mathbb{R}$ are unit-demand.

Now consider the following model incorporating both gross substitute valuations and imperfectly transferable utility. Each buyer $i$ has a valuation function $v_i:2^S\to \mathbb R$ and an effective price function $q_{ij}:\mathbb R\to\mathbb R$ for each $j\in S$, so that the utility function $u_i(T,p,q)$ is defined as in~\eqref{eq:utility}, and the demand correspondence is given by $F_{p,q}(i):= \{T\in S\mid u_i(T, p, q)\ge u_i(T', p, q),\ \forall T'\subseteq S\}$. The main model we study is the QITU model, where the following restrictions are imposed. Effective price functions $q_{ij}$ are assumed to be piecewise linear, continuous, strictly increasing, surjective onto $\mathbb R$, and such that $q_{ij}(0) = 0$. The valuation functions satisfy \emph{ITU-gross substitutes}:

\begin{definition}[ITU-Gross Substitutes]\label{def:defn_gs}
    A valuation function $v_i$ satisfies ITU-gross substitutes if for any effective price functions $q = \{q_{i,j}\}_{i\in B, j \in S}$ that are  continuous, strictly increasing, and surjective onto $\mathbb{R}$, price vectors $p, p'\in \mathbb R^S$ such that $p\le p'$, and $T\in F_{p,q}(i)$, there is a set $T'\in F_{p', q}(i)$ such that $T\cap\{j\mid p_j=p_j'\}\subseteq T'$. \end{definition}

We can show that ITU-gross substitutes is equivalent to gross substitutes. Thus, going forward we refer to all the $v_i$ simply as gross substitutes.

\begin{lemma}\label{lem:gs_equals_gskc}
    A valuation function $v_i$ satisfies gross substitutes if and only if it satisfies ITU-gross substitutes. 
\end{lemma}

The proof is deferred to Appendix~\ref{sec:proofs:prelims}. Note that the utility functions in the QITU model are additively separable over the effective prices $q_{ij}(p_j)$. This assumption is in line with the motivating examples described in the introduction and Appendix~\ref{sec:proofs:applications}, such as hiring markets and advertising markets. For instance, in the former case, employee income taxes are independent of other employee salaries, or the total amount of salary paid by an employer. We also show in Theorem~\ref{thm:hardness} that nonseparable utilities lead to NP-hardness.

We note that the structural equivalence of ITU-gross substitutes and standard gross substitutes relies only on the effective price functions being continuous, strictly increasing, and surjective. However, the additional assumption of piecewise linearity within the QITU model is strictly necessary for our computational results. Specifically, the finite convergence of our price-adjustment algorithm relies on the slopes of the effective price functions remaining constant within discrete intervals, allowing us to compute valid step sizes for price increases.

In the QITU model, a \emph{one-to-many matching} $\mu\subseteq B\times S$ is a subset of buyer-item pairs such that each item $j$ is matched to at most one buyer. We also write $\mu(i)$ to denote the set of items $i$ is matched to, and $\mu(j)$ as the buyer to which $j$ is matched. An \emph{outcome} is a pair $(\mu, p)$ with a matching $\mu$ and a vector of prices $p = (p_1,\ldots, p_{|S|})$ for the items. An outcome is \emph{feasible} if all unmatched items have price $0$. The primary solution concept is the \emph{competitive equilibrium}, which is a feasible outcome such that each buyer is matched to a set of items that maximizes their utility subject to the prices specified by the outcome.

\begin{definition}[Competitive equilibrium]\label{def:ce}
    An outcome $(\mu, p)$ with $p\ge 0$ is a \emph{competitive equilibrium} if the outcome is feasible, and for every buyer $i$, we have $\mu(i)\in F_{p,q}(i)$.
\end{definition}

The competitive equilibrium solution concept was also the focus of previous models by \cite{shapley1971assignment}, \cite{kelso1982job}, \cite{alkan1992equilibrium}, and \cite{dutting2011expressive}. Our goal is to compute a competitive equilibrium in polynomial time. Unless otherwise stated, we assume from now on that valuations, effective prices, and utility functions satisfy the assumptions of the QITU model. Moreover, when the effective price functions are clear from context, we abuse notation and write $u_i(T, p)$ and $F_p(i)$ instead of $u_i(T, p, q)$ and $F_{p, q}(i)$.

\subsection{Gross substitutes and stability}\label{subsec:gross_subs_stability}

The \emph{marginal value} of a set of items $T'$ given $T$ is defined as $v_i(T'\mid T) := v_i(T\cup T') - v_i(T)$. Similarly, the \emph{marginal utility} of $T'$ given $T$ at prices $p$ is defined as $u_i(T'\mid T, p) := u_i(T\cup T', p) - u_i(T, p)$.

To simplify definitions and proofs, we assume the existence of $|B||S|$ dummy items, with the property that for any buyer $i$, any dummy item $j_0$, and any bundle of items $T$, the marginal value of the dummy item is $v_i(j_0\mid T) = 0$. Notice that this assumption is without loss of generality since, at any competitive equilibrium $(\mu, p)$, if dummy item $j_0$ is matched to buyer $i$, then $u_i(\mu(i), p) \ge u_i(\mu(i)\setminus\{j_0\}, p)$, which means $p_{j_0} = 0$. Thus, $u_i(j_0\mid T, p) = 0$ for all buyers $i$ and all bundles $T$. Essentially, dummy items are always worth nothing and cost nothing: from a competitive equilibrium in the instance with dummy items, we can obtain a competitive equilibrium for the original instance by removing dummy items from all buyers' bundles.

We can show that, under gross substitutes valuations, competitive equilibria coincide with the notion of (pairwise) stability from matching theory, because a buyer has their most preferred bundle if and only if there are no single improvements. This result is well-known (see, e.g., \citet{fleiner2019trading}), but we include a simple proof in Appendix~\ref{sec:proofs:gs} for completeness.

\begin{definition}[Stability]
     An outcome $(\mu, p)$ is \emph{stable} if every buyer $i$ has no profitable \emph{additions}, \emph{swaps}, or \emph{drops}: 
    \begin{itemize}
        \item \textbf{Additions}: For every item $j\not\in \mu(i)$, we have $u_i(j\mid \mu(i),p)\le 0$.
        \item \textbf{Swaps}: For every $j_1\in\mu(i)$ and $j_2\not\in\mu(i)$, we have $u_i(\mu(i), p)\ge u_i(\mu(i)\cup\{j_2\}\setminus\{j_1\}, p)$.
        \item \textbf{Drops}: For every $j\in \mu(i)$, we have $u_i(\mu(i)\setminus\{j\}, p)\le u_i(\mu(i), p)$.
    \end{itemize}
\end{definition}

\begin{lemma}[Competitive equilibrium coincides with feasible and stable]\label{lem:ce_fsbl_stable}
    An outcome $(\mu, p)$ with $p\ge 0$ is a competitive equilibrium if and only if it is feasible and stable.
\end{lemma}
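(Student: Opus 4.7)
The plan is to prove each direction separately, with the forward direction following from definitions and the reverse direction using the Single Improvement property (Lemma~\ref{lem:SI_property}) to convert a demand violation into a violation of one of the three stability conditions.

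For the forward direction, suppose $(\mu, p)$ is a competitive equilibrium. Feasibility is part of the definition. To verify stability, note that $\mu(i) \in F_p(i)$ means $u_i(\mu(i), p) \ge u_i(T, p)$ for every $T \subseteq S$. Specializing to $T = \mu(i) \cup \{j\}$, $T = \mu(i) \cup \{j_2\} \setminus \{j_1\}$, and $T = \mu(i) \setminus \{j\}$ respectively rules out profitable additions, swaps, and drops.

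For the reverse direction, I suppose $(\mu, p)$ is feasible and stable but, for contradiction, not a competitive equilibrium. Then there is a buyer $i$ with $\mu(i) \notin F_p(i)$, i.e., $u_i(\mu(i), p) < \max_{T \subseteq S} u_i(T, p)$. By the Single Improvement property (Lemma~\ref{lem:SI_property}, applicable since $v_i$ satisfies gross substitutes), there exist $X, Y \subseteq S$ with $|X|, |Y| \le 1$ such that $u_i(\mu(i) \cup X \setminus Y, p) > u_i(\mu(i), p)$. Without loss of generality I may take $X \cap \mu(i) = \emptyset$ and $Y \subseteq \mu(i)$, since elements of $X \cap \mu(i)$ and of $Y \setminus \mu(i)$ do not affect the set $\mu(i) \cup X \setminus Y$. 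I then split into cases based on the sizes of $X$ and $Y$: if $X = \{j\}$ and $Y = \emptyset$, this is a profitable addition of $j \notin \mu(i)$; if $X = \emptyset$ and $Y = \{j\}$, this is a profitable drop of $j \in \mu(i)$; if $X = \{j_2\}$ and $Y = \{j_1\}$, this is a profitable swap. Each case directly contradicts one of the three stability conditions, so $\mu(i) \in F_p(i)$ for every buyer $i$, and $(\mu, p)$ is a competitive equilibrium.

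The only mild subtlety, rather than a true obstacle, is making sure the Single Improvement conclusion lines up exactly with the three stability operations; this is handled by the case analysis above and by the WLOG reduction $X \cap \mu(i) = \emptyset$, $Y \subseteq \mu(i)$. The presence of the dummy items introduced earlier plays no role here: they have zero marginal value and, being of infinite capacity, always carry price zero at any feasible outcome, so additions, drops, and swaps involving them produce zero utility change and cannot be strict improvements.
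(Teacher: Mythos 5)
Your proof is correct and follows essentially the same route as the paper's: the forward direction is definitional, and the reverse direction applies the Single Improvement property (Lemma~\ref{lem:SI_property}) and splits on $|X|,|Y|$ to match the three stability conditions. The WLOG normalization $X\cap\mu(i)=\emptyset$, $Y\subseteq\mu(i)$ and the remark on dummy items are small additions the paper leaves implicit, but they do not change the argument.
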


\subsection{Matroids}

A key component of our algorithm and its analysis is the concept of matroid.
\begin{definition}[Matroid]
    A matroid $M$ is a pair $(E,\mathcal I)$, where $E$ is a finite \emph{ground set} and $\mathcal I$ is a family of subsets of $E$ known as \emph{independent sets}, that satisfies:
    \begin{enumerate}
        \item $\emptyset\in \mathcal I$.
        \item If $A_1\in \mathcal I$ and $A_2\subseteq A_1$, then $A_2\in \mathcal I$.
        \item If $A_1,A_2\in\mathcal I$ and $|A_1| > |A_2|$, then there exists $x\in A_1\setminus A_2$ such that $A_2\cup \{x\}\in\mathcal I$.
    \end{enumerate}
    A \emph{basis} $T\in\mathcal I$ of a matroid is an inclusionwise maximal independent set.
\end{definition}

We will primarily make use of the well known basis exchange property (see, e.g., \cite{oxley2006matroid}).

\begin{lemma}[Strong basis exchange property]\label{lem:basis_exchange}
    Let $E$ be a finite ground set, and $\mathcal F$ be a nonempty family of subsets of $E$ of uniform cardinality. Then, $\mathcal F$ is the set of bases of a matroid if and only if for any $T_1,T_2\in \mathcal F$ and $x\in T_1\setminus T_2$, there exists $y\in T_2\setminus T_1$ such that $T_1\cup\{y\}\setminus \{x\}, T_2\cup\{x\}\setminus \{y\}\in \mathcal F$.
\end{lemma}

We will take advantage of existing algorithms for solving minimum weight matroid intersection problems. These algorithms require access to independence oracles, defined below.

\begin{definition}[Independence oracle]
    Given a matroid $M = (E,\mathcal I)$, an independence oracle for $M$ is a subroutine which indicates whether a set $F\subseteq E$ is a member of $\mathcal I$.
\end{definition}

\section{The Algorithmic Framework}\label{sec:framework}

In this section, we first introduce the main structures necessary to present the algorithm. In Section~\ref{sec:framework:mdg}, we discuss the \emph{marginal demand graph}, which is the basic graph on which the algorithm operates. In Section~\ref{sec:framework:MAT}, we introduce \emph{maximal alternating trees}, which are the fundamental structures used in the algorithm. In Section~\ref{sec:framework:alg}, we describe the main algorithm in detail.

To describe and analyze our algorithm, it is helpful to think of a buyer $i$ as a collection of $|S|$ identical unit-demand buyers, or \emph{unit-buyers}, who can be matched to at most one item. We define $\text{ubuy}(i)$ to be the set of unit-buyers corresponding to $i$, and we define the set $U$ of all the unit-buyers as $U = \bigcup_{i\in B} \text{ubuy}(i)$. Given a unit-buyer $k$, we let $\text{buy}(k)$ be the buyer such that $k \in \text{ubuy}(\text{buy}(k))$. Let $\text{copy}(k) = \text{ubuy}(\text{buy}(k))$ be the set of all unit-buyers who share the same original buyer as $k$. When it is clear from context, given a unit-buyer $k$, we write $u_k := u_{\text{buy}(k)}$ to refer to the utility function of the buyer associated with $k$, and $q_{kj} := q_{\text{buy}(k)j}$ to refer to their effective price function. We refer to this setting as the \emph{one-to-one market} and to the original as the \emph{one-to-many market}.

\subsection{Marginal demand graphs}\label{sec:framework:mdg}

A \emph{one-to-one outcome} $(\tau, p)$ is a matching $\tau\subseteq U\times S$ and a set of prices $p$ such that each unit-buyer and item is matched at most once. An outcome $(\mu, p)$ of the one-to-many market is said to be the \emph{one-to-many projection of the one-to-one outcome $(\tau, p)$} if $(i,j)\in\mu$ if and only if $(k,j)\in \tau$ for some $k\in \text{ubuy}(i)$. We define a one-to-one outcome $(\tau, p)$ to be stable and feasible if its one-to-many projection is stable and feasible. Similarly, we define a one-to-one outcome to be a competitive equilibrium if its one-to-many projection is a competitive equilibrium.

\begin{definition}[Partial stability]\label{def:partial_stability}
    An outcome $(\mu, p)$ in the one-to-many market is \emph{partially stable} if, for all buyers $i$ and all sets of items $T$ with $|T|\le |\mu(i)|$, we have $u_i(T, p)\le u_i(\mu(i), p)$. A one-to-one outcome $(\tau, p)$ is partially stable if its one-to-many projection is partially stable. 
\end{definition}

When the price vector $p$ is clear from context, we may also refer to the matching $\mu$ as partially stable. Partial stability means that a buyer's current matched bundle is their most preferred bundle of the current size (or smaller). Partial stability can also be interpreted as ``no profitable swaps or drops.'' The proof is deferred to Appendix \ref{sec:proofs:ce_cert}.

\begin{lemma}[Partial stability relaxes stability]\label{lem:partial_stability_alt_defn}
    An outcome $(\mu, p)$ is partially stable if and only if every buyer $i$ has no profitable drops or swaps:
    \begin{itemize}
        \item \textbf{Swaps}: For every $j_1\in\mu(i)$ and $j_2\not\in\mu(i)$, we have $u_i(\mu(i), p)\ge u_i(\mu(i)\cup\{j_2\}\setminus\{j_1\}, p)$.
        \item \textbf{Drops}: For every $j\in \mu(i)$, we have $u_i(\mu(i)\setminus\{j\}, p)\le u_i(\mu(i), p)$.
    \end{itemize}
\end{lemma}

Given a partially stable one-to-one outcome $(\tau, p)$, define the \emph{marginal demand correspondence} of a unit-demand buyer $k$ as the set of items $F_{\tau, p}(k)\subseteq S$ where $j_1\in F_{\tau, p}(k)$ if
\begin{itemize}
    \item \textbf{Matched}: $j_1 = \tau(k)$, or
    \item \textbf{Potential alternative}: $j_1\not\in \tau(\text{copy}(k))$, $k$ is matched to an item other than $j_1$, and $u_{k}(j_1\mid \tau(\text{copy}(k)) \setminus \tau(k), p) = 0$, or
    
    \item \textbf{Favorite item}: $j_1\not\in \tau(\text{copy}(k))$, $k$ is unmatched in $\tau$, and $u_{k}(j_1\mid \tau(\text{copy}(k)), p)\ge \max_{j_2\not\in \tau(\text{copy}(k))} u_{k}(j_2\mid \tau(\text{copy}(k)), p)$. Note that $u_{k}(j_1\mid \tau(\text{copy}(k)), p)\ge 0$ due to dummy items.

\end{itemize}
Essentially, if $k$ is matched in $\tau$, $F_{\tau, p}(k)$ contains the item $\tau(k)$, as well as any other item $j_1$ not matched to a copy of $k$ such that $j_1$ can be swapped with an item currently matched to a copy of $\text{buy}(k)$ without changing the utility of $\text{buy}(k)$. If $k$ is unmatched in $\tau$, then $F_{\tau, p}(k)$ contains each unmatched item whose addition to $\tau(\text{copy}(k))$ would lead to the largest utility increase, which is always nonnegative due to dummy items. The marginal demand correspondence of an item $j$ is given by $F^{-1}_{\tau, p}(j):=\{k\mid j\in  F_{\tau, p}(k)\}$. The marginal demand correspondence for a set of unit-buyers $U'\subseteq U$ is given by $F_{\tau, p}(U') := \bigcup_{k\in U'}F_{\tau, p}(k)$. Similarly, for a set of items $S'\subseteq S$, we define $F_{\tau, p}^{-1}(S') := \bigcup_{j\in S'}F_{\tau, p}^{-1}(j)$.
Next, we define the main graph upon which the algorithm will operate.

\begin{definition}[Marginal demand graph]
    Given a partially stable one-to-one outcome $(\tau, p)$, the \emph{marginal demand graph} $D(\tau, p)$ is the graph $(U\cup S, E)$ where $(k,j)\in E$ if and only if $j\in F_{\tau, p}(k)$. 
\end{definition}

Given a subgraph $H$ of the marginal demand graph $D(\tau, p)$, we define $U(H) = U\cap V(H)$ and $S(H) = S\cap V(H)$, where $V(G)$ are the vertices of a graph $G$. Also, define $B(H) = \{\text{buy}(k)\mid k\in U(H)\}$. Similarly, if $E'$ is a set of edges in $D(\tau, p)$, then we define $V(E') = \{i\mid(i,j)\in E'\}\cup\{j\mid (i,j)\in E'\}$, as well as $U(E') = U\cap V(E')$, and $S(E') = S\cap V(E')$.

Our first important result describes the conditions necessary to verify an outcome is a competitive equilibrium. This result is based on the concept of maximal alternating trees, described next.

\subsection{Maximal alternating trees}\label{sec:framework:MAT}

\begin{definition}[Maximal alternating tree (MAT)]
    Let $H = (U\cup S, E)$ be a graph, $\tau\subseteq E\subseteq U\times S$ be a one-to-one matching, and $k_0\in U$ be an unmatched unit-buyer. An \emph{alternating tree} $\mathcal T$ in $H$ rooted at $k_0$ is a tree subgraph of $H$ such
 that all paths from the root $k_0$ alternate between edges in $\tau$ and edges not in $\tau$. An alternating tree $\mathcal T$ in a marginal demand graph $D(\tau, p)$ is \emph{maximal} if
    \begin{itemize}
        \item for every unit-buyer $k\in \mathcal T$, we have $F_{\tau, p}(k)\subseteq \mathcal T$, and
        \item for every item $j\in\mathcal T$, we have $\tau(j)\subseteq \mathcal T$.
    \end{itemize}
    See Figure~\ref{fig:MDG_MAT} for an example. If $j\in \mathcal T$ and $(k,j)\in\tau$, then $(k,j)$ is an edge in $\mathcal T$, as the (unique) path from the root to $k$ is alternating, has an even number of edges, the first of which is unmatched. 
\end{definition}

\begin{remark}\label{rem:mat_exists}
    MATs are a common structure seen in bipartite matching problems. It is a standard result that whenever a vertex (i.e. a unit-buyer) is not matched, a MAT exists and can be found using a process similar to breadth-first search (see, e.g., \cite{schrijver2003combinatorial}), which we prove in Lemma~\ref{lem:existmat}. Conversely, the existence of a MAT also implies the existence of an unmatched unit-buyer by definition. We leverage these facts in Algorithm \ref{alg:mat}.
\end{remark}

Note that given the existence of dummy items, every competitive equilibrium can be expanded into a competitive equilibrium such that all unit-buyers are matched, by matching any unmatched unit-buyers to dummy items. We can then show that the absence of a MAT serves as a certificate of a competitive equilibrium. The proof is deferred to Appendix \ref{sec:proofs:ce_cert}. 

\begin{lemma}[MATs and competitive equilibrium]\label{lem:smo_core_cm}
    Let $(\tau, p)$ be a feasible and partially stable outcome. Then, $(\tau, p)$ is a competitive equilibrium where all unit-buyers are matched if and only if there does not exist a MAT in $D(\tau, p)$.
\end{lemma}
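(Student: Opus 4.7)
The plan is to prove both directions of the bi-implication separately.

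For the backward direction (no MAT $\Rightarrow$ CE), the argument is almost immediate. By Remark \ref{rem:mat_exists}, the absence of a MAT in $D(\nu, p)$ forces every unit-buyer in $\nu$ to be matched, so each buyer $i$ has $|\mu(i)| = |S|$. Partial stability then yields $u_i(\mu(i), p) \ge u_i(T, p)$ for every $T$ with $|T| \le |\mu(i)| = |S|$, which covers every $T \subseteq S$. Combined with feasibility (by hypothesis), we obtain $\mu(i) \in F_p(i)$ for every $i$, so $(\nu, p)$ is a CE.

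For the forward direction (CE $\Rightarrow$ no MAT), I would argue by contrapositive: assume a MAT $\mathcal T$ exists, rooted at an unmatched unit-buyer $k_0 \in \text{ubuy}(i_0)$. By maximality, $F_{\nu, p}(k_0) \subseteq \mathcal T$, and each $j_1 \in F_{\nu, p}(k_0)$ is a favorite item attaining the maximum $M := \max_{j \notin \mu(i_0)} u_{k_0}(j \mid \mu(i_0), p)$, where $M \ge 0$ by the existence of dummy items. If $M > 0$, then for any $j_1 \in F_{\nu, p}(k_0)$ we have $u_{i_0}(\mu(i_0) \cup \{j_1\}, p) > u_{i_0}(\mu(i_0), p)$, directly contradicting $\mu(i_0) \in F_p(i_0)$. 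If $M = 0$, I would invoke the remark preceding the lemma: extend $(\nu, p)$ by matching each unmatched unit-buyer to an available dummy item (always possible, since there are $|S|$ dummies and at most $|S| - 1$ copies of $i_0$ already take dummies). This extension preserves buyer utilities, feasibility, and partial stability; the extended outcome has all unit-buyers matched, hence admits no MAT, and it agrees with $(\nu, p)$ on CE status. Combining with the backward direction implies that $(\nu, p)$ is a CE iff its dummy extension is fully matched, which holds iff $(\nu, p)$ itself has no unmatched unit-buyer—contradicting the existence of the MAT.

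The main obstacle is the tied case ($M = 0$), where the existence of the MAT provides no immediate utility improvement for any buyer. The cleanest resolution is the dummy-extension convention from the remark preceding the lemma: treating CE outcomes as implicitly extended so that all unit-buyers are matched lets us identify the absence of a MAT with the CE property, side-stepping the need to force strict utility inequalities and leveraging instead the observation that matching $k_0$ to a free dummy preserves every property (utilities, feasibility, partial stability) because dummies have $v_{i_0}(j_0 \mid T) = 0$ and $q_{i_0 j_0}(p_{j_0}) = 0$.
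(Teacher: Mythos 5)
Your backward direction is correct and is essentially the same as the paper's: absence of a MAT forces every unit-buyer to be matched (Remark~\ref{rem:mat_exists}), so $|\mu(i)|=|S|$ for every buyer $i$, and then partial stability over bundles of size $\le|\mu(i)|$ covers all of $2^S$ and yields $\mu(i)\in F_p(i)$. Combined with feasibility, this is a competitive equilibrium.

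The forward direction is where your proposal goes astray. Your $M>0$ sub-case is fine, but the $M=0$ sub-case contains two genuine errors. First, the claim that matching unmatched unit-buyers to dummies ``preserves \ldots partial stability'' is false for an \emph{arbitrary} partially stable outcome: the dummy-extension raises $|\mu(i)|$, which strictly strengthens the constraint in Definition~\ref{def:partial_stability}, and nothing in partial stability of $(\nu,p)$ controls bundles of size larger than the original $|\mu(i)|$. (It \emph{does} preserve CE status, because CE already bounds utility over all of $2^S$; that is exactly why the note before the lemma is phrased for competitive equilibria, not for partially stable outcomes.) Second, the concluding chain ``$(\nu,p)$ is a CE iff its dummy extension is fully matched, which holds iff $(\nu,p)$ itself has no unmatched unit-buyer'' is a non sequitur: the dummy extension is fully matched \emph{by construction}, independently of whether $(\nu,p)$ had unmatched unit-buyers, so no contradiction is derived. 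The paper's resolution is much shorter and does not use a case split on $M$ at all: it reads the note before the lemma as a convention that a competitive equilibrium is presented with all unit-buyers matched (its dummy-extended form), and then there is simply no unmatched unit-buyer left to serve as the root of a MAT. That one-line observation is the intended forward direction; your case analysis correctly detects that the indifference case $M=0$ is the obstacle, but the argument you give to handle it does not close the gap.
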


Since a MAT exists if and only if an unmatched unit-buyer exists, Lemma~\ref{lem:smo_core_cm} shows that if we can match every unit-buyer while maintaining partial stability, we have a competitive equilibrium.

\begin{figure}[]
    \centering
    \begin{subfigure}[t]{.25\linewidth}
    \centering\includegraphics[scale=1]{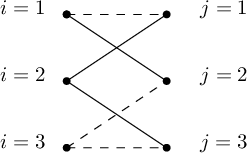}
    \caption{\footnotesize The original graph and partially stable matching}
    \end{subfigure}
    \begin{subfigure}[t]{.25\linewidth}
    \centering\includegraphics[scale=1]{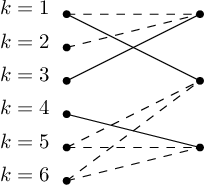}
    \caption{\footnotesize Marginal demand graph}
    \end{subfigure}
    \begin{subfigure}[t]{.25\linewidth}
    \centering\includegraphics[scale=1]{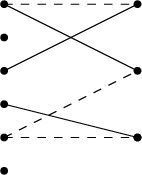}
    \caption{\footnotesize The MAT rooted at $k = 5$}
    \end{subfigure}
    \caption{A marginal demand graph and MAT, with $\text{ubuy}(1) = \{1,2\}$, $\text{ubuy}(2) = \{3,4\}$, and $\text{ubuy}(3) = \{5,6\}$. Solid (resp., dashed) lines represent the current matching (resp., unmatched edges in each (unit-)buyer's demanded set). We omit some unit-buyer copies for clarity.}
    \label{fig:MDG_MAT}
\end{figure}

\subsection{Description of the algorithm}\label{sec:framework:alg}

A key component of the algorithm is the use of augmenting paths, a classical tool from graph theory. Given a matching $\tau$ in a graph $G$, an augmenting path is a path that alternates between edges in $\tau$ and $G\setminus \tau$ and begins and ends at unmatched nodes. Flipping every edge in an augmenting path allows us to increase the size of the matching by 1.

MATs are the main structure on which the algorithm operates. At each iteration, the algorithm identifies a MAT that we call the \emph{current MAT}. If an item in the current MAT is unmatched, the algorithm finds an augmenting path from the unmatched root buyer $k_0$ to the unmatched item and increases the size of the matching $\tau$ by flipping all edges in the path from unmatched to matched or vice versa. Otherwise, it carefully increases the prices of items in the MAT until buyers in the MAT demand other items, thus increasing the size of the current MAT.

The algorithm is formally defined in Algorithm~\ref{alg:gs_core_mech}. Starting with prices $p=0$ and an empty matching $\tau$, it iterates as long as there exists an unmatched unit-buyer $k_0$. At each outer-iteration (lines \ref{alglin:gs_core_mech:outer_while}-\ref{alglin:gs_core_mech:outer_endwhile}), the algorithm constructs the marginal demand graph $D(\tau, p)$ and identifies a MAT $\mathcal T$ rooted at $k_0$ using the subroutine \textsc{FindMAT} (line \ref{alglin:gs_core_mech:findmat1}), described in Section~\ref{sec:main_result:findmat}. If the MAT $\mathcal T$ has an unmatched item, the algorithm uses the subroutine \textsc{AugmentingPath}, described in Section~\ref{sec:main_result:augment}, to augment $\tau$ along a path from $k_0$ to that item (lines \ref{alglin:gs_core_mech:augment1}-\ref{alglin:gs_core_mech:augment2}). Otherwise, the algorithm uses the subroutine \textsc{FindPriceIncrease}, described in Section~\ref{sec:FPI}, to conduct a version of Walrasian t\^atonnement by finding a price increase that expands the MAT $\mathcal T$ (lines \ref{alglin:gs_core_mech:inner_while}-\ref{alglin:gs_core_mech:inner_endwhile}).

\begin{algorithm}[H]
    \caption{Competitive equilibrium algorithm for the one-to-many assignment game}
    \label{alg:gs_core_mech}
    \begin{algorithmic}[1]
        \Require Buyers $B$, items $S$, valuation functions $v_i$, effective price functions $q_{ij}$
        \Ensure Competitive equilibrium $(\mu, p)$
        \State Initialize prices $p_j \gets 0$ for all  $j\in S$ and matching $\tau \gets \emptyset$
        \While{there is a unit-buyer  $k_0 \in [|B||S|]$ s.t. $\tau(k_0)=\emptyset$}\label{alglin:gs_core_mech:outer_while}
            \State $\mathcal T \gets \textsc{FindMAT}(k_0, \tau, D(\tau, p))$ \label{alglin:gs_core_mech:findmat1}\Comment{Compute current MAT rooted at $k_0$}
            \While{$\not\exists j\in S(\mathcal T)$ s.t. $\tau(j) = \emptyset$}\Comment{loop: Walrasian t\^atonnement}\label{alglin:gs_core_mech:inner_while}
                \State $\tau, p' \leftarrow \textsc{FindPriceIncrease}(\mathcal T, p,\tau)$ \label{alglin:gs_core_mech:findpriceinc1}
                \State $p\gets p + p'$  \label{alglin:gs_core_mech:findpriceinc2}
                \State $\mathcal T\gets \textsc{FindMAT}(k_0, \tau, D(\tau, p))$ \label{alglin:gs_core_mech:findmat2} \Comment{Update current MAT rooted at $k_0$}
            \EndWhile \label{alglin:gs_core_mech:inner_endwhile}
            \State $P\gets \textsc{AugmentingPath}(\mathcal T, \tau)$ \label{alglin:gs_core_mech:augment1}\Comment{Once MAT has unmatched item, augment}
            \State $\tau\gets \tau\triangle P$ \label{alglin:gs_core_mech:augment2}
        \EndWhile \label{alglin:gs_core_mech:outer_endwhile}
        \State $\mu\gets \{(\text{buy}(k),j)\mid (k,j)\in\tau\}$
        \State \Return $(\mu, p)$ 
    \end{algorithmic}
\end{algorithm}

\section{The \textsc{FindPriceIncrease} Subroutine}\label{sec:FPI}

The primary subroutine of Algorithm~\ref{alg:gs_core_mech} and the main technical contribution of this paper is \textsc{FindPriceIncrease}, described in Algorithm \ref{alg:findpriceincrease}. Given a MAT $\mathcal T$ where all items are matched, the subroutine finds a direction vector $d$ and step size $\lambda^*>0$ such that after increasing the current set of prices by $p \gets p+ \lambda^* d$, the size of the MAT $\mathcal T$ increases. This process is conceptually similar to the classical Walrasian t\^atonnement process, although price increases are not uniform and thus much more difficult to compute. Proofs omitted from Section \ref{sec:FPI} are presented in Appendix \ref{sec:proofs:FPI}.

\subsection{MAT-preserving price increases}\label{sec:FPI:MATppi}

In this section, we introduce MAT-preserving price increases, which are similar to structures from \citet{alkan1992equilibrium} and \citet{dutting2011expressive}. We start with some additional notation.

\begin{definition}[MAT-preserving price increase]\label{def:mat_preserving_price_increase}
    Given a partially stable and feasible one-to-one outcome $(\tau, p)$, a \emph{price increase} at $(\tau, p)$ is a vector $d\ge 0$ over the items $S$. A MAT-preserving price increase $d$ with respect to the marginal demand graph $D(\tau, p)$ and a MAT $\mathcal T$ rooted at $k_0$ where all items are matched, satisfies:
    \begin{enumerate}[(a)]
    \item \label{def:mat_preserving_price_increase:positive}
            $d_j>0$ if and only if $j\in S(\mathcal T)$.
    \item \label{def:mat_preserving_price_increase:matching_tree}
        There is some partially stable matching $\tau'$  at $p$ and MAT $\mathcal T'$ rooted at $k_0$ in $D(\tau', p)$ such that
        \begin{enumerate}[i.]
            \item  \label{def:mat_preserving_price_increase:matching} 
                $\tau'$  coincides with $\tau$ on $U\setminus U(\mathcal T)\times S\setminus S(\mathcal T)$.
            \item \label{def:mat_preserving_price_increase:tree}
                $V(\mathcal T) = V(\mathcal T')$.
            \item \label{def:mat_preserving_price_increase:utility}
                For sufficiently small $\lambda > 0$, we have $(\tau', p+\lambda d)$ is partially stable and the edges of $\mathcal T'$ are contained in the marginal demand graph. We write this compactly as $\mathcal T'\subseteq D(\tau', p+\lambda d)$.
        \end{enumerate}
    \end{enumerate}
\end{definition}

Note that the current MAT rooted at $k_0$ changes from $\mathcal T$ to $\mathcal T'$ after the MAT-preserving price increase. The reason is that given a specific MAT $\mathcal T$, there may not exist a price increase that preserves exactly the edges of $\mathcal T$ (as shown in Example~\ref{example} below). However, as we will later see in Algorithm \ref{alg:findpriceincrease}, the matching $\tau$ can be adjusted to find a price increase that preserves the edges of another MAT $\mathcal T'$ that contains the same unit-buyers and items. In this way, the vertices of the MAT rooted at $k_0$ remain the same after a sufficiently small price increase along $d$.

\begin{example}
\label{example}
    Consider the MAT rooted at $k_0=3$ in Figure \ref{fig:MAT_price_inc}(a). The three unit-buyers correspond to distinct buyers. For each unit buyer $k$, we have $v_{k}(T) = |T|$. We have $p_1 = p_2 = 0$. Then, define the effective price functions as follows: $q_{11}(p_1) = 2p_1$, $q_{12}(p_2) = p_1$, $q_{21}(p_1) = p_1$, $q_{22}(p_2) = 2p_2$, $q_{31}(p_1) = p_1$, and $q_{32}(p_2) = p_2$. We can verify that both $\tau = \{(1,1), (2,2)\}$ and $\tau' = \{(1,2), (2,1)\}$ are partially stable, and that the marginal demand graph $D(\tau, p)$ is a complete bipartite graph. However, there is no price increase $d$ such that for small $\lambda >0$, we have $u_1(2\mid 1, p+\lambda d)\le 0$ and $u_2(1\mid 2, p+\lambda d)\le 0$. Instead, we can change the matching to $\tau'$ and the MAT to $\mathcal T'$ as given in Figure \ref{fig:MAT_price_inc}(b). Setting $d_1 = d_2 = 1$, we can verify that for small $\lambda > 0$, $\tau'$ is partially stable at $p+\lambda d$, and that $\mathcal T'\subseteq D(\tau', p+\lambda d)$.
\end{example}

\begin{figure}[H]
    \centering
    \begin{subfigure}[t]{.3\linewidth}
    \centering\includegraphics[scale=1]{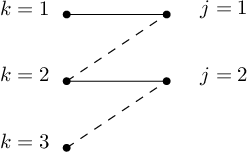}
    \caption{\footnotesize Original matching $\tau$, MAT $\mathcal T$}
    \end{subfigure}
    \begin{subfigure}[t]{.3\linewidth}
    \centering\includegraphics[scale=1]{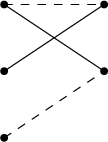}
    \caption{\footnotesize Adjusted matching $\tau'$, MAT $\mathcal T'$}
    \end{subfigure}
    \caption{An example of a MAT to be preserved in a MAT-preserving price increase. Solid lines represent the current matching, and dashed lines represent other edges of the marginal demand graph in the MAT given the current matching.}
    \label{fig:MAT_price_inc}
\end{figure}
 
The following lemma gives a sufficient condition for a price increase $d$ to be a MAT-preserving price increase. This condition depends on the slopes $q'_{kj}(p)$ of the effective price functions $q_{kj}(p)$. The proof is deferred to Appendix \ref{sec:proofs:FPI:MATppi}.

\begin{lemma}[MAT-preserving price increase and slopes of effective price functions]\label{lem:envyfree_slopes}
    Given a partially stable outcome $(\tau, p)$ and MAT $\mathcal T$, let $d$ be a price increase that satisfies condition (\ref{def:mat_preserving_price_increase:positive}) from Definition \ref{def:mat_preserving_price_increase}. Let $\tau'$ and $\mathcal T'$ satisfy conditions (\ref{def:mat_preserving_price_increase:matching}.) and (\ref{def:mat_preserving_price_increase:tree}.). If $q'_{kj_1}(p_{j_1})d_{j_1} \le q'_{kj_2}(p_{j_2})d_{j_2}$ for all $(k, j_1)\in \mathcal T'$ and $j_2\in F_{\tau', p}(k)$, then condition (\ref{def:mat_preserving_price_increase:utility}) holds. Thus, $d$ is a MAT-preserving price increase with respect to the marginal demand graph $D(\tau, p)$ and MAT $\mathcal T$.
\end{lemma}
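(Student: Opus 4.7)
The plan is to verify condition (iii.) of Definition~\ref{def:mat_preserving_price_increase}: for all sufficiently small $\lambda>0$, $\mathcal T'\subseteq D(\nu',p+\lambda d)$ and $(\nu',p+\lambda d)$ is partially stable. Since each $q_{ij}$ is continuous and piecewise linear with right derivative $q'_{ij}$, I can pick $\lambda_0>0$ small enough that on $(0,\lambda_0]$ we have $q_{ij}(p_j+\lambda d_j)=q_{ij}(p_j)+\lambda q'_{ij}(p_j)d_j$ for every $i,j$, reducing everything to a first-order comparison in $\lambda$.

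For the tree containment $\mathcal T'\subseteq D(\nu',p+\lambda d)$, I would do a case analysis over the three types of edges defining $F_{\nu',p}(\cdot)$. Matched edges are preserved automatically since $\nu'$ does not change. For a potential-alternative edge $(k,j_1)\in\mathcal T'$, the matched edge $(k,\nu'(k))$ also lies in $\mathcal T'$ (the unique matched edge by which the alternating tree reaches $k$), and both $\nu'(k),j_1\in F_{\nu',p}(k)$; applying the slope hypothesis in both directions pins $q'_{kj_1}(p_{j_1})d_{j_1}=q'_{k\nu'(k)}(p_{\nu'(k)})d_{\nu'(k)}$, so the tight swap remains exactly utility-neutral after the perturbation. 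For a favorite-item edge $(k_0,j_1)$ at the unmatched root, MAT maximality places all of $F_{\nu',p}(k_0)$ inside $S(\mathcal T')$, and the symmetric slope argument equates the marginal decay rates of all of $k_0$'s favorites. The nonnegativity threshold must in fact be strict in the \textsc{FindPriceIncrease} regime: every item of $\mathcal T$ is fully matched, while dummy items are never fully matched, so a zero-marginal-utility favorite would force a dummy into $\mathcal T$ by maximality, contradicting the regime.

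For partial stability I split on whether $i\in B(\mathcal T)$. If $i\notin B(\mathcal T)$, MAT maximality combined with $\nu'$ agreeing with $\nu$ off $U(\mathcal T)\times S(\mathcal T)$ gives $\mu'(i)\cap S(\mathcal T)=\emptyset$, so $u_i(\mu'(i),p+\lambda d)=u_i(\mu'(i),p)$, while any competing $T$ only sees prices weakly rise, hence $u_i(T,p+\lambda d)\le u_i(T,p)\le u_i(\mu'(i),p)$ by partial stability of $(\nu',p)$. If $i\in B(\mathcal T)$, any strict inequality $u_i(T,p)<u_i(\mu'(i),p)$ is preserved to first order, leaving only the danger of \emph{tight} bundles $T$ with $|T|\le|\mu'(i)|$ and $u_i(T,p)=u_i(\mu'(i),p)$.

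The tight-bundle case for $i\in B(\mathcal T)$ is the hard part and the main obstacle. I would invoke the matroidal structure of gross substitutes from Appendix~\ref{sec:proofs:gs}: because $u_i(\cdot,p)$ is $M^\natural$-concave, the size-$|\mu'(i)|$ tight bundles (padding shorter ones with dummies) form the bases of a matroid, so basis exchange yields a chain of utility-preserving single swaps $j_1\mapsto j_2$ carrying $\mu'(i)$ to $T$. For each such swap, let $k\in\text{ubuy}(i)$ be the unit-buyer matched to $j_1$; then $j_2\in F_{\nu',p}(k)$. If $j_1\in S(\mathcal T')$, MAT maximality forces $k\in U(\mathcal T')$ and then $j_2\in F_{\nu',p}(k)\subseteq S(\mathcal T')$, so both $(k,j_1),(k,j_2)\in\mathcal T'$ and the two-sided slope condition equates their rates, leaving $\sum_j q'_{ij}(p_j)d_j$ unchanged by the swap; if $j_1\notin S(\mathcal T')$ then $d_{j_1}=0$, and the swap only weakly increases this sum. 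Telescoping along the chain gives $\sum_{j\in T}q'_{ij}(p_j)d_j\ge\sum_{j\in\mu'(i)}q'_{ij}(p_j)d_j$, which is exactly the first-order inequality needed for $u_i(T,p+\lambda d)\le u_i(\mu'(i),p+\lambda d)$. The delicate piece is constructing such a tight-swap chain cleanly when $|T|<|\mu'(i)|$, and this is where the $M^\natural$-concavity/matroid-exchange machinery of gross substitutes is essential.
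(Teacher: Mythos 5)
Your proposal is correct and follows essentially the same strategy as the paper: reduce to a first-order comparison using piecewise linearity of the $q_{ij}$, then use the slope hypothesis $q'_{kj_1}(p_{j_1})d_{j_1}\le q'_{kj_2}(p_{j_2})d_{j_2}$ to show that utility comparisons that are tight at $p$ tilt the right way at $p+\lambda d$ while strict inequalities persist. You diverge from the paper in emphasis at two points, both in a way that makes your write-up more explicit rather than different in substance. For partial stability, the paper only verifies single-item swap dominance at every matched edge $(k,j_1)\in\nu'$ and implicitly appeals to the gross-substitutes fact (noted right after Definition~\ref{def:partial_stability}, via Lemma~\ref{lem:SI_property}) that ``no profitable swaps or drops'' is equivalent to partial stability; you instead unpack this by invoking the matroid property (Lemma~\ref{lem:gs_matroid_property}) and telescoping along a basis-exchange chain from $\mu'(i)$ to a competing bundle $T$. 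Both are valid; the paper's route is shorter, yours is self-contained. Second, you explicitly rule out a zero-marginal-utility favorite at the root $k_0$, noting that in the \textsc{FindPriceIncrease} regime (all items in $\mathcal T$ fully matched, which is built into Definition~\ref{def:mat_preserving_price_increase}) such a favorite would force an undermatched dummy item into $\mathcal T$; this is a boundary case the paper's proof does not spell out, and your dummy-item argument correctly closes it. One minor caveat: your case $i\notin B(\mathcal T)$ relies on $\mu'(i)\cap S(\mathcal T)=\emptyset$, which does hold, but the justification should go through the requirement that $\mathcal T'$ is a MAT in $D(\nu',p)$ (so $\nu'(j)\subseteq U(\mathcal T')$ for $j\in S(\mathcal T')=S(\mathcal T)$) rather than through condition (b.i.) alone, since (b.i.) is silent on $(U\setminus U(\mathcal T))\times S(\mathcal T)$.
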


\paragraph{Overview of the remainder of Section~\ref{sec:FPI}.} To find a MAT-preserving price increase with respect to a marginal demand graph $D(\tau, p)$ and a MAT $\mathcal T$, it remains to find a price increase $d$, matching $\tau'$, and a MAT $\mathcal T'$ that satisfy the conditions of Lemma~\ref{lem:envyfree_slopes}. The first step is to find a minimum weight perfect matching $\tau^*$ over $U(\mathcal T) \setminus \{k_0\}\times S(\mathcal T)$ such that $\tau \setminus E(\mathcal T) \cup \tau^*$ is partially stable, where the weights depend on the slopes of the effective price functions. Section~\ref{sec:FPI:matroid} shows such a matching can be found by solving a minimum weight common basis problem over two matroids. Section~\ref{sec:FPI:duality} then shows that this minimum weight perfect matching $\tau^*$ can be used to construct a linear program whose optimal solution is then transformed into a price increase that satisfies the conditions of Lemma~\ref{lem:envyfree_slopes}. Then, in Section \ref{sec:FPI:alg} we give the global description of the subroutine.

\subsection{Finding a minimum weight partially stable perfect matching via the weighted common basis problem}\label{sec:FPI:matroid}

For the remainder of Section \ref{sec:FPI}, fix an outcome $(\tau, p)$, with one-to-many projection $\mu$. Also, fix a MAT $\mathcal T$ with root $k_0$, where all items in $\mathcal T$ are matched (recall that subroutine \textsc{FindPriceIncrease} is only called over MATs $\mathcal T$ whose items are all matched). We define $B'(\mathcal T) := \{\text{buy}(k)\mid k\in U(\mathcal T)\setminus \{k_0\}\}$ to be the buyers corresponding to unit-buyers in $\mathcal T$, other than the root $k_0$, and $W(\mathcal T) := B'(\mathcal T)\times S(\mathcal T)$ to be the complete set of edges between these buyers and the items in $\mathcal T$. The first step of the \textsc{FindPriceIncrease} subroutine is to assign each edge $(i,j)\in W(\mathcal T)$ a weight $\log q'_{ij}(p_j)$, which is the log of the slope of the effective price function of buyer $i$ for item $j$ at price $p_j$. We wish to compute a minimum weight set of edges $\mu^*\subseteq W(\mathcal T)$ such that each agent has the same number of matches under $\mu$ and $\mu^*$, and $\mu\setminus E(\mathcal T)\cup \mu^*$ is partially stable. We call such a matching $\mu^*$ a \emph{minimum weight partially stable perfect matching}.

The partial stability constraint can be formulated as a matroid constraint over the edges. This constraint also ensures that the buyers have the same number of matches as before. A second matroid constraint is used to ensure that the items are perfectly matched as well. Restricting to bases of the matroids imposes the maximum cardinality constraint. Putting these results together, we can cast the minimum weight partially stable perfect matching problem as a minimum weight common basis problem. The proofs in this section are deferred to Appendix \ref{sec:proofs:FPI:matroid}. 

We first define the matroid $M_B$. Essentially, a basis in $M_B$ is a set of edges  that maximizes the sum of the utilities to the buyers, subject to the bundle sizes originally allocated to each buyer. Thus, each buyer is assigned a bundle that it demands at the current prices under the size constraint.

\begin{lemma}[Buyer demand matroid]\label{lem:MB_matroid}
Let $(\tau,p)$ be a partially stable outcome with one-to-many projection $\mu$ and ${\mathcal T}$ a MAT in $D(\tau,p)$. For $i \in B'({\mathcal T})$, let ${\mathcal E}_i= \{E \subseteq \{i\} \times S({\mathcal T}) : |E| = |U'(\mathcal T) \cap \text{ubuy}(i)|\}$.
Define the family of sets
$$\mathcal J_B = \left\{\bigcup_{i\in B'(\mathcal T)}F_i \ \bigg |\ F_i\in \arg\max_{F \in {\mathcal E}_i} u_i(S(F) \mid \mu(i)\setminus S({\mathcal T}),p)\right\}.$$ 
That is, $\mathcal J_B$ is the family of sets of edges such that each buyer is connected to a set of items which maximize their individual utility, subject to the capacity constraint. Let ${\mathcal I}_B$ be the family of all subsets of ${\mathcal J}_B$. Then, $M_B=(W({\mathcal T}),{\mathcal I}_B)$ is a matroid with bases ${\mathcal J}_B$.
\end{lemma}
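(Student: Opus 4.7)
The plan is to exhibit $M_B$ as a direct sum of simpler matroids, one per buyer $i \in B'(\mathcal T)$. The ground set $W(\mathcal T) = B'(\mathcal T)\times S(\mathcal T)$ partitions into the disjoint pieces $\{i\}\times S(\mathcal T)$, each candidate $S_i$ lives entirely in one such piece, and the arg-max condition defining $\mathcal J_B$ is imposed independently per buyer. Since a direct sum of matroids is itself a matroid whose bases are the disjoint unions of bases of the summands, it suffices to prove that for each fixed $i \in B'(\mathcal T)$ the family
$$\mathcal J_i := \arg\max_{F \in \mathcal E_i} u_i(S(F) \mid \mu(i)\setminus S(\mathcal T),\, p)$$
is the basis family of a matroid $M_i$ on $\{i\}\times S(\mathcal T)$, with independent sets given by the downward closure of $\mathcal J_i$.

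Fix $i$ and identify $\{i\}\times S(\mathcal T)$ with $S(\mathcal T)$. Let $C := \mu(i)\setminus S(\mathcal T)$ and $k := |U(\mathcal T)\cap \text{ubuy}(i)|$. By construction every element of $\mathcal J_i$ has size exactly $k$, so the downward closure automatically satisfies the first two matroid axioms: $\emptyset \in \mathcal I_i$ and $\mathcal I_i$ is hereditary. The only nontrivial property left is the basis exchange property (Lemma~\ref{lem:basis_exchange}): given $T_1, T_2 \in \mathcal J_i$ and $j_1 \in T_1\setminus T_2$, produce some $j_2 \in T_2\setminus T_1$ with $(T_1\setminus\{j_1\})\cup\{j_2\} \in \mathcal J_i$.

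The key observation is that on $T \subseteq S(\mathcal T)$ the marginal utility simplifies to $u_i(T \mid C,\, p) = v_i(T\mid C) - \sum_{j\in T} q_{ij}(p_j)$, so maximizing the objective of $\mathcal J_i$ over size-$k$ subsets of $S(\mathcal T)$ is exactly maximizing the valuation $v_i(\cdot\mid C)$ against the \emph{linear} prices $c_j := q_{ij}(p_j)$ frozen in by $p$. Two ingredients then finish the argument: (a) the marginal $v_i(\cdot\mid C)$ of a gross substitutes valuation is itself gross substitutes, a closure property I will invoke from Appendix~\ref{sec:proofs:gs}; and (b) for a gross substitutes valuation at linear prices, the collection of cardinality-$k$ maximizers forms the bases of a matroid (the standard matroidal characterization of gross substitutes, equivalent to $M^\natural$-concavity and the valuated matroid framework of \citet{dress1990valuated} and \citet{fujishige2003note}).

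The main obstacle is establishing fact (b) cleanly. A raw application of the single improvement property (Lemma~\ref{lem:SI_property}) to two optimal bundles $T_1,T_2$ could in principle return a pure deletion or a pure addition, producing a set of size $k{-}1$ or $k{+}1$ rather than the size-$k$ swap I need. The resolution is to invoke the simultaneous ($M^\natural$) exchange form of gross substitutes: starting from $j_1 \in T_1\setminus T_2$, that exchange delivers some $j_2$ satisfying $u_i(T_1\setminus\{j_1\}\cup\{j_2\}\mid C,p) + u_i(T_2\cup\{j_1\}\setminus\{j_2\}\mid C,p) \ge u_i(T_1\mid C,p) + u_i(T_2\mid C,p)$; the optimality of $T_1$ and $T_2$ among size-$k$ sets then forces equality and $j_2 \in T_2\setminus T_1$, yielding the required exchange. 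With each $M_i$ established as a matroid with basis family $\mathcal J_i$, the direct sum $M_B = \bigoplus_{i\in B'(\mathcal T)} M_i$ is the desired matroid with basis family $\mathcal J_B$ and independent set family $\mathcal I_B$.
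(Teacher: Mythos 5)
Your proof is correct and takes a genuinely different, and arguably cleaner, route than the paper's. Both proofs lean on the same two closure facts about gross substitutes --- that conditioning on an endowment preserves GS (Lemma~\ref{lem:gs_endowment_prices}), and that the size-$\ell$ maximizers of a GS utility form the bases of a matroid (Lemma~\ref{lem:gs_matroid_property}) --- but you exploit them differently. The paper builds a single auxiliary valuation over the whole ground set $W(\mathcal T)$: it introduces the capacity-penalized functions $\tilde v_i$ and $\hat v_i$, verifies the (ISO) characterization for each $\hat v_i$, takes their supremal convolution $v^*$ (Lemma~\ref{lem:gs_convolution}), and applies the matroid property once to $v^*$. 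You instead observe that $W(\mathcal T)=\bigsqcup_i \bigl(\{i\}\times S(\mathcal T)\bigr)$ is already a disjoint decomposition respected by the definition of $\mathcal J_B$, so it suffices to establish a per-buyer matroid $M_i$ on $\{i\}\times S(\mathcal T)$ and take the direct sum. Each $M_i$ then comes out in one step from Lemma~\ref{lem:gs_matroid_property} applied to the GS valuation $v_i(\,\cdot \mid \mu(i)\setminus S(\mathcal T))$ restricted to $S(\mathcal T)$ at effective prices $q_{ij}(p_j)$, with no need for the auxiliary capacity penalty or the (ISO) verification; your re-derivation of the exchange step via the Fujishige--Yang inequality (Lemma~\ref{lem:fujishige_yang}) is also valid, though you could simply cite Lemma~\ref{lem:gs_matroid_property} outright. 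Since $\hat v_i(E)$ depends only on $E(i)$, the optimal partition in the paper's convolution always assigns edge $(i,j)$ to block $\pi_i$, so the convolution degenerates to a direct sum here anyway; the paper's framework would earn its keep in a setting where the ground set did not already partition by buyer, but in this lemma your decomposition is the natural one. The one small point worth making explicit is that restriction of a gross substitutes valuation to the sub-ground-set $S(\mathcal T)$ preserves gross substitutes, which you use implicitly and which is standard.
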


We can efficiently implement an independence oracle for $M_B$.

\begin{lemma}[Independence oracle for $M_B$]\label{lem:M1_independence_oracle}
    There exists an independence oracle for $M_B$ that runs in $O(|B|^2|S|^2)$ time.
\end{lemma}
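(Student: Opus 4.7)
The plan is to exploit the fact that the ground sets $\{i\}\times S(\mathcal T)$ are pairwise disjoint across buyers $i \in B'(\mathcal T)$, which, combined with the characterization of the bases $\mathcal J_B$ given in Lemma~\ref{lem:MB_matroid}, realizes $M_B$ as the direct sum $\bigoplus_{i \in B'(\mathcal T)} M_i$. Here $M_i$ is the matroid on $\{i\} \times S(\mathcal T)$ whose bases are the size-$n_i$ subsets maximizing the corresponding marginal utility for buyer $i$, with $n_i := |U'(\mathcal T) \cap \text{ubuy}(i)|$; that each $M_i$ is itself a matroid is an expression of the matroidal structure of gross substitutes demand sets and is packaged inside Lemma~\ref{lem:MB_matroid}. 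Consequently, for any $F \subseteq W(\mathcal T)$ with $F_i := F \cap (\{i\} \times S(\mathcal T))$, we have $F \in \mathcal I_B$ if and only if $F_i$ is independent in $M_i$ for every $i$, so the oracle decomposes buyer-by-buyer.

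For a single buyer $i$, I would reduce the independence test in $M_i$ to two cardinality-constrained gross substitutes maximizations. After immediately rejecting if $|F_i| > n_i$, and setting $R_i := \mu(i) \setminus S(\mathcal T)$, the oracle computes
\[
u^*_i := \max_{T \subseteq S(\mathcal T),\, |T| = n_i} u_i(T \mid R_i, p), \qquad u^{F_i}_i := \max_{S(F_i) \subseteq T \subseteq S(\mathcal T),\, |T| = n_i} u_i(T \mid R_i, p),
\]
and declares $F_i$ independent precisely when $u^{F_i}_i = u^*_i$, which is immediate from the definition of the bases of $M_i$. Both quantities can be produced by the standard greedy routine for gross substitutes maximization: iteratively add the candidate item with the largest marginal utility until the set reaches size $n_i$, starting from $\emptyset$ for $u^*_i$ and from $S(F_i)$ for $u^{F_i}_i$.

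The main obstacle is justifying correctness of the constrained greedy that computes $u^{F_i}_i$. The key ingredient is that the gross substitutes property is preserved under contraction by a fixed set of items (an equivalence established among the characterizations in Appendix~\ref{sec:proofs:gs}), which reduces the constrained problem to a vanilla cardinality-constrained gross substitutes maximization on the residual ground set $S(\mathcal T) \setminus S(F_i)$ for the contracted valuation $T \mapsto v_i(T \cup S(F_i) \cup R_i) - v_i(S(F_i) \cup R_i)$; the additive cost terms $\sum_j q_{ij}(p_j)$ enter separably and do not disturb the gross substitutes structure. The runtime bound is then routine bookkeeping: each greedy run performs at most $|S|$ rounds, each scanning at most $|S|$ remaining candidates and performing one $O(1)$ marginal utility evaluation, giving $O(|S|^2)$ per greedy run and hence $O(|S|^2)$ per buyer; summing over at most $|B|$ buyers yields $O(|B||S|^2)$, comfortably within the claimed $O(|B|^2 |S|^2)$ bound.
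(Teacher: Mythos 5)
Your proof is correct and takes a genuinely different route from the paper. The paper builds an independence oracle by starting from the known basis $\mu'$ obtained by restricting the current partially stable matching to $\mathcal T$, then repeatedly invoking the matroid basis exchange property to swap one element of $\mu'\setminus E$ for one element of $E\setminus\mu'$ (rejecting if ever no valid exchange exists). You instead observe that, since the ground sets $\{i\}\times S(\mathcal T)$ are pairwise disjoint and the bases $\mathcal J_B$ are defined componentwise over buyers, $M_B$ is a direct sum $\bigoplus_{i}M_i$, so independence decomposes buyer-by-buyer; each per-buyer test then reduces to comparing an unconstrained cardinality-constrained maximization of $u_i(\cdot\mid \mu(i)\setminus S(\mathcal T),p)$ with a constrained one forced to contain $S(F_i)$, each solvable by greedy since gross substitutes is preserved under contraction (Lemma~\ref{lem:gs_endowment_prices}) and cardinality-constrained greedy optimality is (WL) in Proposition~\ref{prop:gs_equivalent}. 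Both arguments rely on the same structural core (gross substitutes demanded sets of fixed size form matroid bases), but yours avoids the exchange walk and even gives a tighter $O(|B||S|^2)$ bound. One small imprecision: the fact that each $M_i$ is a matroid is not literally ``packaged inside Lemma~\ref{lem:MB_matroid}'' (which concerns $M_B$); it follows instead from Lemma~\ref{lem:gs_matroid_property} applied to the contracted valuation from Lemma~\ref{lem:gs_endowment_prices}, so cite those directly.
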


Next, we can show that a matching is $\mathcal T$-perfect (i.e., matches each item) if and only if it is a basis in the \emph{item capacity matroid}.

\begin{definition}[Item capacity matroid $M_S$]\label{def:item_capacity_matroid} 
    Let $(\tau, p)$ be a partially stable outcome and  $\mathcal T$ a MAT at $(\tau, p)$. Let $\mathcal I_S=\{E\subseteq W(\mathcal T): \, |E(j)|\le 1 \, \hbox{for all } j\in S(\mathcal T)\}$ and $M_S = (W(\mathcal T), \mathcal I_S)$, where $E(j) = \{(i,j)\in E\}$.
\end{definition}

$M_S$ is a partition matroid (see, e.g.,~\cite{oxley2006matroid}). The next two lemmas characterize the bases of $M_S$ and give an independence oracle for $M_S$. They are well known facts from bipartite matching (see, e.g., \cite{schrijver2003combinatorial}), so the proofs are omitted.

\begin{lemma}[Basis of $M_S$ if and only if saturates item capacities]\label{lem:M2_basis}
    The family $\mathcal J_S$ of bases of $M_S$ is given by all and only the set of edges $E\subseteq W(\mathcal T)$ such that $|E(j)| = 1$ for all $j\in S(\mathcal T)$.
\end{lemma}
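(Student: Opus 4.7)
The plan is to recognize that $M_S$ is a partition matroid whose ground set $W(\mathcal T) = B'(\mathcal T) \times S(\mathcal T)$ is partitioned into blocks $W_j = \{(i,j) : i \in B'(\mathcal T)\}$ indexed by items $j \in S(\mathcal T)$, and where the independent sets are exactly those $E$ with $|E \cap W_j| = |E(j)| \le cap(j)$ for every $j$. From this point of view the claim is the standard characterization of bases of a partition matroid, so the proof breaks into showing two containments.

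For the first containment, suppose $E \subseteq W(\mathcal T)$ satisfies $|E(j)| = cap(j)$ for all $j \in S(\mathcal T)$. Then $E \in \mathcal I_S$ by definition. I would argue $E$ is maximal by contradiction: if there existed $E' \in \mathcal I_S$ with $E \subsetneq E'$, pick $(i,j) \in E' \setminus E$; then $|E'(j)| \ge |E(j)| + 1 = cap(j) + 1$, violating independence.

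For the converse, let $E$ be a basis and suppose for contradiction that $|E(j^\star)| < cap(j^\star)$ for some $j^\star \in S(\mathcal T)$. I would produce an edge $(i^\star, j^\star) \in W(\mathcal T) \setminus E$ to adjoin; this exists provided $|B'(\mathcal T)| > |E(j^\star)|$. This is where the key structural observation comes in: since \textsc{FindPriceIncrease} is only invoked when every item of $\mathcal T$ is fully matched, and a many-to-one matching sends distinct unit-buyers of the same original buyer to distinct items, the item $j^\star$ is already matched in $\nu$ to $cap(j^\star)$ unit-buyers corresponding to $cap(j^\star)$ distinct original buyers, all lying in $B'(\mathcal T)$ (since $k_0$ is unmatched, its original buyer's only contribution to $B'(\mathcal T)$ is via its other unit-copies, and in any case $k_0$ itself is excluded). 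Hence $|B'(\mathcal T)| \ge cap(j^\star) > |E(j^\star)|$, giving the required unused edge. Adjoining $(i^\star, j^\star)$ keeps every block constraint satisfied, contradicting maximality of $E$.

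The main (mild) obstacle is exactly this head-counting step verifying that the block $W_{j^\star}$ is not yet exhausted by $E$; everything else is pure partition-matroid bookkeeping. Once those two containments are established, $\mathcal J_S$ is precisely the family of edge sets saturating every item capacity, which is the statement of the lemma.
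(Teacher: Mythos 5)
Your proof is correct. The paper omits the proof of this lemma entirely, citing it as a ``well known fact from bipartite matching,'' so there is no paper argument to compare line by line; what you give is a legitimate and self-contained partition-matroid argument. The one place where this is \emph{not} purely a textbook fact is precisely the step you flag: the generic basis characterization of a partition matroid with blocks $W_j = B'(\mathcal{T})\times\{j\}$ gives $|E(j)| = \min(cap(j),\,|B'(\mathcal{T})|)$, so equality $|E(j)| = cap(j)$ requires $|B'(\mathcal{T})| \ge cap(j)$ for every $j \in S(\mathcal{T})$. Your head-counting argument --- each item of $\mathcal{T}$ is fully matched when \textsc{FindPriceIncrease} is called, a many-to-one matching never repeats an original buyer on the same item, maximality of the MAT puts all those $cap(j)$ matched unit-buyers (none of which is the unmatched root $k_0$) into $U(\mathcal{T})\setminus\{k_0\}$, hence their distinct originals into $B'(\mathcal{T})$ --- closes that gap correctly. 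This is a small but genuine point that the paper's ``well-known'' citation glosses over, since the claim is false for a partition matroid over an arbitrary ground set; you have supplied the hypothesis (items of $\mathcal{T}$ fully matched) and the reasoning that makes it true here.
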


We can also efficiently compute an independence oracle for $M_S$.

\begin{lemma}[Independence oracle for $M_S$]\label{lem:M2_independence_oracle}
    There exists an algorithm that identifies whether a set $E$ is independent in $M_S$ in $O(|B|^2|S|)$ time.
\end{lemma}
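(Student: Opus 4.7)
The plan is to give a direct counting algorithm that exploits the fact that $M_S$ is a partition matroid. By Definition~\ref{def:item_capacity_matroid}, a set $E \subseteq W(\mathcal T)$ lies in $\mathcal I_S$ if and only if $|E(j)| \le cap(j)$ for every $j \in S(\mathcal T)$, where $E(j) = \{(i,j) \in E\}$. Since this is a per-item constraint with no coupling across items, a single pass over $E$ suffices to check independence.

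Concretely, the algorithm I would give initializes a counter $c_j \gets 0$ for each $j \in S(\mathcal T)$, scans the edges of $E$ once, and for each $(i,j) \in E$ increments $c_j$ by one. After the scan, it returns \emph{independent} if and only if $c_j \le cap(j)$ for every $j \in S(\mathcal T)$. Correctness is immediate from the definition of $M_S$, and does not need Lemma~\ref{lem:M2_basis} (which concerns bases, not arbitrary independent sets).

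For the runtime, initializing the counters and performing the final comparison each take $O(|S|)$ time, while the scan of $E$ takes $O(|E|)$ time. Since $E \subseteq W(\mathcal T) = B'(\mathcal T) \times S(\mathcal T)$ and $|B'(\mathcal T)| \le |B|$, we have $|E| \le |B||S|$, giving a total runtime of $O(|B||S|)$, which is safely within the claimed bound of $O(|B|^2|S|)$. There is no real obstacle in this proof — the only content is unpacking the partition-matroid structure already present in Definition~\ref{def:item_capacity_matroid}, and the stated bound is loose enough that no delicate bookkeeping is required.
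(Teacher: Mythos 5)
Your proof is correct, and it is the natural argument: the paper actually omits the proof of this lemma entirely, stating only that Lemmas~\ref{lem:M2_basis} and~\ref{lem:M2_independence_oracle} are ``well known facts from bipartite matching,'' so there is no paper proof to compare against. Your single-pass counting argument is exactly what one would write out when unpacking that folklore claim, and your observation that the true cost is $O(|B||S|)$ — comfortably inside the stated $O(|B|^2|S|)$ — is accurate; the paper's bound is simply loose.
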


\begin{definition}[Lift]\label{defn:lift}
    Given a one-to-many matching $\mu$, a \emph{lift} of $\mu$ is a one-to-one matching $\tau$ that satisfies $\{(\text{buy}(k), j)\mid (k,j)\in \tau\} = \mu$. We define the function $\textsc{Lift}(\mu)$ to output a lift of $\mu$, selecting one arbitrarily if more than one exist.
\end{definition}

We can then show that a set of edges in $W(\mathcal T)$ that is a minimum weight common basis of $M_B$ and $M_S$ has a lift that is a partially stable perfect matching on $V(\mathcal T)\setminus \{k_0\}$.

\begin{lemma}[Min weight common basis finds min weight partially stable perfect matching]\label{lem:min_weight_common_basis}
    A lift $\tau^* = \textsc{Lift}(\mu^*)$ of the optimal solution $\mu^*$ to the weighted minimum common basis problem
    \[\min_{E \in \mathcal J_B \cap \mathcal J_S}\sum_{(i,j)\in W(\mathcal T)}\log q'_{ij}(p_j)\]
    is a minimum weight partially stable perfect matching on $V(\mathcal T)\setminus \{k_0\}$, where the weight of an edge $(k,j)$ is also given by $\log q'_{kj}(p_j)$.
\end{lemma}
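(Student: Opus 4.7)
The plan is to establish a weight-preserving bijection between common bases of $M_B \cap M_S$ and (lifts of) partially stable perfect matchings on $V(\mathcal T) \setminus \{k_0\}$ that extend the fixed portion $\nu \setminus E(\mathcal T)$. Once this bijection is in place, both the structural claim (that $\nu^*$ is a partially stable perfect matching) and the optimality claim follow at once, because the weight $\log q'_{ij}(p_j)$ depends only on $(\text{buy}(k), j)$ and is therefore preserved under both lifting and projection.

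For the forward direction, I would take an arbitrary common basis $\mu^* \in \mathcal J_B \cap \mathcal J_S$ and argue that a lift $\nu^* = \textsc{Lift}(\mu^*)$, when combined with $\nu \setminus E(\mathcal T)$, is partially stable and perfectly matches $S(\mathcal T)$. The perfect-matching side is immediate from Lemma \ref{lem:M2_basis}: membership in $\mathcal J_S$ gives $|\mu^*(j)| = cap(j)$ for every $j \in S(\mathcal T)$. A valid lift exists because for each $i \in B'(\mathcal T)$ the basis contributes exactly $|U(\mathcal T) \setminus \{k_0\} \cap \text{ubuy}(i)|$ edges in $\{i\} \times S(\mathcal T)$, all going to distinct items, and these can be assigned bijectively to the unit-buyers of $i$ in $U(\mathcal T) \setminus \{k_0\}$. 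The key observation for partial stability is that since $\mathcal T$ is a MAT whose only unmatched unit-buyer is $k_0$, every other unit-buyer in $U(\mathcal T)$ was originally matched inside $S(\mathcal T)$; hence $|\mu(i) \cap S(\mathcal T)| = |U(\mathcal T) \setminus \{k_0\} \cap \text{ubuy}(i)|$, and the new bundle $\mu'(i) := (\mu(i) \setminus S(\mathcal T)) \cup S(\mu^*_i)$ satisfies $|\mu'(i)| = |\mu(i)|$. The $M_B$-basis property from Lemma \ref{lem:MB_matroid}, applied with $\mu(i) \cap S(\mathcal T)$ as a feasible element of $\mathcal E_i$, then yields $u_i(\mu'(i), p) \ge u_i(\mu(i), p)$. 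Combining this with partial stability of the original outcome $(\nu, p)$, for any $T$ with $|T| \le |\mu'(i)|$ we get
\[
u_i(T, p) \le u_i(\mu(i), p) \le u_i(\mu'(i), p),
\]
which is exactly partial stability for the new outcome. Buyers outside $B'(\mathcal T)$ are untouched and remain partially stable.

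For the converse, needed to conclude that $\nu^*$ attains the minimum weight over partially stable perfect matchings, I would take any such matching $\tilde\nu$ extending $\nu \setminus E(\mathcal T)$, project it to a many-to-many matching $\tilde\mu$, and show that its restriction $\tilde\mu \cap W(\mathcal T)$ lies in $\mathcal J_B \cap \mathcal J_S$. Membership in $\mathcal J_S$ follows from the perfect-matching condition via Lemma \ref{lem:M2_basis}. Membership in $\mathcal J_B$ is obtained by instantiating partial stability of $\tilde\nu$ at precisely the bundles of the form $(\mu(i) \setminus S(\mathcal T)) \cup S$ with $S \subseteq S(\mathcal T)$ and $|S| = |U(\mathcal T) \setminus \{k_0\} \cap \text{ubuy}(i)|$: these all have cardinality $|\tilde\mu(i)|$, so partial stability forces $\tilde\mu \cap (\{i\} \times S(\mathcal T))$ to be a utility-maximizer over $\mathcal E_i$. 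Since weights are preserved under lifting/projection, the minima coincide, and the lift of the minimum weight common basis is indeed a minimum weight partially stable perfect matching.

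The main obstacle I anticipate is not any deep insight but the careful bookkeeping at the $\mathcal T$-boundary: making sure that $|\mu'(i)| = |\mu(i)|$ for every $i \in B'(\mathcal T)$ and that the ``test bundles'' $(\mu(i) \setminus S(\mathcal T)) \cup S$ really are the right objects to compare against on both sides of the bijection. The cleanest way to handle this is to invoke up front the MAT property that every matched unit-buyer in $U(\mathcal T) \setminus \{k_0\}$ has its matched item inside $S(\mathcal T)$; this locks down each buyer's out-of-tree bundle and reduces partial stability to a statement purely about the in-tree portion, which is exactly what the matroid $M_B$ encodes.
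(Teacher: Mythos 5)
Your proof is correct and follows essentially the same route as the paper's: encoding marginal-utility maximality in $M_B$ and capacity saturation in $M_S$, then transporting weights under lift/projection. You merely spell out both directions of the bijection and the MAT bookkeeping fact (each matched unit-buyer of $U(\mathcal T)\setminus\{k_0\}$ has its match inside $S(\mathcal T)$) more explicitly than the paper's terser version, which anchors the argument on the original restricted matching $\mu'$ being a basis of $M_B$.
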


We also show in the proof of Lemma \ref{lem:min_weight_common_basis} that the original partially stable matching $\mu$ is a basis of both $M_B$ and $M_S$, though it may not have minimum weight. We also know that an efficient algorithm exists for finding a common basis of minimum weight.

\begin{lemma}[Common matroid basis algorithm]\label{lem:min_weight_common_basis_alg}
    There exists an algorithm \textsc{CommonBasis} that finds the minimum weight common matroid basis in $O(|B|^3|S|^3)$ oracle calls.
\end{lemma}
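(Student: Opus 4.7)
The plan is to reduce the statement to a direct invocation of a known algorithm for weighted matroid intersection (equivalently, minimum weight common basis), applied to the two matroids $M_B$ and $M_S$ already constructed in Lemmas~\ref{lem:MB_matroid} and \ref{lem:M2_basis}.

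First, I would observe that both matroids share the same ground set $W(\mathcal T) = B'(\mathcal T) \times S(\mathcal T)$, which has size at most $|B| \cdot |S|$. Independence oracles for both matroids are provided by Lemmas~\ref{lem:M1_independence_oracle} and~\ref{lem:M2_independence_oracle}, so the only remaining question is what is the oracle complexity of finding a minimum weight common basis. Next, I would invoke a classical algorithm for weighted matroid intersection such as Frank's weight-splitting algorithm (see, e.g., \citet{schrijver2003combinatorial}), which finds a maximum (or minimum) weight common independent set of two matroids on a ground set of size $n$ using $O(n^3)$ independence oracle calls. Since any common independent set of maximum cardinality is a common basis whenever one exists, and since Lemma~\ref{lem:min_weight_common_basis} (together with the fact observed in its proof that the current partially stable matching $\mu$ is itself a common basis of $M_B$ and $M_S$) guarantees that $\mathcal J_B \cap \mathcal J_S \neq \emptyset$, the minimum weight common independent set returned by the algorithm must in fact be a common basis.

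Substituting $n = |W(\mathcal T)| \le |B| \cdot |S|$ then yields an oracle complexity of $O((|B||S|)^3) = O(|B|^3 |S|^3)$, which is the desired bound. The algorithm \textsc{CommonBasis} is thus obtained by running the weighted matroid intersection algorithm on $M_B$ and $M_S$ with the edge weights $\log q'_{ij}(p_j)$ prescribed in Lemma~\ref{lem:min_weight_common_basis}, using the independence oracles from Lemmas~\ref{lem:M1_independence_oracle} and~\ref{lem:M2_independence_oracle}.

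There is no real technical obstacle here; the work has been front-loaded into formulating the minimum weight partially stable perfect matching problem as a common basis problem (Lemma~\ref{lem:min_weight_common_basis}) and into exhibiting the two independence oracles. The only care needed is to state the complexity bound in terms of independence oracle calls (rather than total runtime, which would additionally incur the cost per oracle call from Lemmas~\ref{lem:M1_independence_oracle} and~\ref{lem:M2_independence_oracle}) and to confirm that the chosen reference algorithm indeed achieves the cubic bound in $n$ for the weighted version, not merely the unweighted one.
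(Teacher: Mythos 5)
Your high-level reduction --- take the ground set $W(\mathcal T)$ of size $O(|B||S|)$, plug in the independence oracles from Lemmas~\ref{lem:M1_independence_oracle} and~\ref{lem:M2_independence_oracle}, and invoke a classical $O(n^3)$-oracle-call weighted matroid intersection algorithm --- is exactly the plan the paper follows, and your complexity accounting $O((|B||S|)^3) = O(|B|^3|S|^3)$ is correct. However, the sentence ``the minimum weight common independent set returned by the algorithm must in fact be a common basis'' is a genuine gap. It is true that a maximum-cardinality common independent set is a common basis once one exists, but nothing you wrote forces the algorithm to return a set of maximum cardinality: the global minimum weight common independent set need not be a common basis (for instance, if all weights are positive it is $\emptyset$). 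What you actually need is the minimum weight common independent set \emph{subject to having maximum cardinality}, and you neither state this nor argue that the cited algorithm produces it. The paper closes this gap explicitly via a weight transformation: it cites a maximum weight matroid intersection algorithm (Lemma~\ref{lem:max_weight_matroid_int_alg}), shifts the weights $\log q'_{ij}(p_j)$ to be nonnegative, negates them, and adds the cardinality-dominating constant $\bigl(\sum_{j\in S(\mathcal T)} cap(j)\bigr)\cdot \max_{(i',j')} w'((i',j')) + 1$, so that any maximum weight common independent set under the new weights necessarily has maximum cardinality (hence is a common basis, using that a common basis exists) and among common bases has minimum original weight. An alternative fix you could use is to note that Frank's weight-splitting or Lawler's algorithm computes a max weight common independent set for each cardinality $k=0,1,\dots$, so after negating weights one can read off the min weight common basis directly; but as written your argument skips the essential step that ties minimum weight to maximum cardinality.
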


The algorithm repurposes an existing weighted matroid intersection algorithm, such as those by \citet{lawler1975matroid} and \citet{frank1981weighted}, and applies it to a transformed set of weights. 

\subsection{The duality trick}\label{sec:FPI:duality}

In this section, we describe the linear programming duality trick that is used to find a MAT-preserving price increase. In particular, we show that an optimal solution to a linear program defined as a function of $\tau^* = \textsc{Lift}(\mu^*)$, the lift of the optimal solution to the minimum weight common basis problem defined in Section~\ref{sec:FPI:matroid}, satisfies complementary slackness constraints with $\tau^*$ itself. We begin with the definition of this linear program. 
 
\begin{definition}[Linear program for MAT-preserving price increase]
    Given a partially stable outcome $(\tau, p)$, with unmatched unit-buyer $k_0$ and corresponding MAT $\mathcal T$ where all items in $S(\mathcal T)$ are matched, define the MAT-preserving price increase linear program $LP(\tau, p, \mathcal T)$ over variables $\omega,\rho$ as:
    \begin{align*}
        \max & \sum_{k\in U(\mathcal T)}\omega_i + \sum_{j\in S(\mathcal T)}\rho_j&\\
         \text{s.t. } & \omega_k + \rho_j \le \log q'_{kj}(p_j) & \forall (k, j)\in H
     \end{align*}     
     where $H = D(\tau, p)\cap V(\mathcal T)\setminus\{k_0\}$.
\end{definition}

We now state the main technical contribution of this section, Lemma \ref{lem:MWlegalmatching_correct}: the constraints in $LP(\tau^*, p,\mathcal T)$ corresponding to edges $(k,j) \in \tau^*$ are tight. These constraints being tight for $(\omega^*, \rho^*)$ are crucial to use $\rho^*$ to compute the desired price increase via complementary slackness.

\begin{lemma}[Min-weight partially stable perfect matching satisfies equality constraints]\label{lem:MWlegalmatching_correct}
    Let $\mu^*$ be an optimal solution to the weighted minimum common basis problem given by
    \[\min_{E\in\mathcal J_B\cap\mathcal J_S} \sum_{(i,j)\in W(\mathcal T)}\log q'_{ij}(p_j)\]
    and let $\tau^* = \textsc{Lift}(\mu^*)$. Let $(\omega^*, \rho^*)$ be an optimal solution to $LP(\tau^*, p,\mathcal T)$. Then, for all $(k,j)\in \tau^*$, we have $\omega^*_k + \rho^*_j = \log q'_{kj}(p_j)$.
\end{lemma}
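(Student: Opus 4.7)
The strategy is to identify $\nu^*$ with a feasible dual solution of $LP(\nu^*, p, \mathcal T)$ and show it is optimal; complementary slackness then yields the claimed equalities. The dual of $LP(\nu^*, p, \mathcal T)$ is a minimum-cost transportation LP on the subgraph of $D(\nu^*, p)$ induced on $V(\mathcal T) \setminus \{k_0\}$: choose $x_{kj} \ge 0$ for each edge so that each non-root unit-buyer has total outgoing flow $1$ and each item $j \in S(\mathcal T)$ has total incoming flow $cap(j)$, minimizing $\sum_{(k,j)} x_{kj} \log q'_{kj}(p_j)$. If $x^{\nu^*}_{kj} := \mathbf{1}[(k,j) \in \nu^*]$ is optimal in this dual, then for every $(k,j) \in \nu^*$ the corresponding primal constraint $\omega_k + \rho_j \le \log q'_{kj}(p_j)$ must be tight at any primal optimum $(\omega^*, \rho^*)$ by complementary slackness, which is precisely the claim.

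First I would check feasibility of $x^{\nu^*}$. Every edge $(k,j) \in \nu^*$ lies in $D(\nu^*, p)$ by the \textbf{Matched} clause of the marginal demand correspondence. The degree conditions follow from Lemmas~\ref{lem:MB_matroid} and~\ref{lem:M2_basis}: being in $\mathcal J_S$ ensures each item $j \in S(\mathcal T)$ receives exactly $cap(j)$ edges of $\mu^*$, while being in $\mathcal J_B$ ensures that lifting $\mu^*$ assigns exactly one item to each unit-buyer in $\text{ubuy}(i) \cap (U(\mathcal T) \setminus \{k_0\})$ for every buyer $i \in B'(\mathcal T)$.

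The main work is proving optimality of $x^{\nu^*}$. Because the transportation constraint matrix is totally unimodular, the dual LP admits an integral optimum $\tilde\nu$ supported on $D(\nu^*, p) \cap (V(\mathcal T) \setminus \{k_0\})$ whose many-to-many projection $\tilde\mu$ saturates item capacities in $\mathcal T$. It suffices to prove $\tilde\mu \in \mathcal J_B \cap \mathcal J_S$: once this holds, the minimality of $\mu^*$ in the weighted common-basis problem gives $\sum_{(k,j)\in\tilde\nu} \log q'_{kj}(p_j) = \sum_{(i,j) \in \tilde\mu} \log q'_{ij}(p_j) \ge \sum_{(i,j)\in \mu^*} \log q'_{ij}(p_j) = \sum_{(k,j)\in\nu^*} \log q'_{kj}(p_j)$, so $x^{\nu^*}$ is optimal. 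Membership in $\mathcal J_S$ is immediate from capacity saturation; the nontrivial step is $\tilde\mu \in \mathcal J_B$. For this I would decompose $\nu^* \triangle \tilde\nu$ into alternating cycles in the bipartite graph on $(U(\mathcal T)\setminus\{k_0\}) \cup S(\mathcal T)$, observe that each non-$\nu^*$ edge on such a cycle is a \textbf{Potential alternative} (by dual feasibility), and use the matroidal structure of gross-substitutes demanded sets of a fixed size (Appendix~\ref{sec:proofs:gs}) together with Lemma~\ref{lem:basis_exchange} to traverse the cycles one swap at a time, showing by induction that after every swap the restricted bundle of each buyer $i$ remains a utility maximizer among bundles of cardinality $|\text{ubuy}(i) \cap (U(\mathcal T)\setminus\{k_0\})|$. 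This yields $\tilde\mu \in \mathcal J_B$.

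The main obstacle is making this cycle-by-cycle exchange rigorous. The subtlety is that while a single potential-alternative edge encodes a utility-preserving swap relative to the current bundle, chaining many such swaps along a cycle requires every intermediate bundle to remain in each affected buyer's demanded set of the prescribed size. The right tool is to view each buyer $i$'s portion of the symmetric difference inside $i$'s matroid of maximum-utility bundles of that cardinality, and use the basis exchange axiom to order the swaps so that every prefix is still a basis. Once $\tilde\mu \in \mathcal J_B$ is established, the optimality of $x^{\nu^*}$ follows as above, and complementary slackness immediately gives $\omega^*_k + \rho^*_j = \log q'_{kj}(p_j)$ on every edge of $\nu^*$.
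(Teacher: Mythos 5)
Your high-level framework matches the paper's: both identify $\nu^*$ as a feasible solution of the transportation dual $DP(\nu^*, p, \mathcal T)$ and aim to conclude via complementary slackness once optimality of $\nu^*$ is established. The gap is in how you propose to establish that optimality. Your plan is to take an arbitrary integral optimum $\tilde\nu$ of $DP$ and prove its projection $\tilde\mu$ lies in $\mathcal J_B \cap \mathcal J_S$, but this intermediate claim is generally false: an integral optimal solution to the min-weight matching LP on the marginal demand graph need \emph{not} be partially stable. This is exactly what the paper's remark after Lemma~\ref{lem:MWlegalmatching_correct} warns about, and it is why the paper introduces the common-basis machinery in the first place. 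When a cycle of $\nu^*\triangle\tilde\nu$ touches several unit-buyer copies of the same buyer $i$, applying all of $i$'s swaps simultaneously can fail to yield a utility-maximizing bundle for $i$; this is precisely case~(b) of Lemma~\ref{lem:alt_cycles}, where the swapped bundle is suboptimal and the only consolation is the existence of chord edges.

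Your proposed repair --- using the matroid of maximum-utility bundles of fixed size together with basis exchange to "order the swaps so that every prefix is still a basis" --- is circular. Basis exchange (Lemma~\ref{lem:basis_exchange}) lets you interpolate between two \emph{bases}, but you do not yet know that $\tilde\nu(\text{ubuy}(i))$ is a basis of $i$'s demand matroid; that is precisely the assertion $\tilde\mu\in\mathcal J_B$ you are trying to prove. And you cannot bootstrap it from the single-swap edges: each ``potential alternative'' edge certifies a utility-preserving swap only relative to the fixed bundle $\nu^*(\text{copy}(k))$, not relative to a partially-swapped bundle, so chaining them does not compose under nonadditive gross substitutes valuations. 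The paper's route avoids this trap entirely: it supposes a negative-cost cycle $C$ in the residual network exists, picks one with fewest edges, shows via Lemma~\ref{lem:alt_cycles} that if $\nu^*\triangle C$ is not partially stable then there are chord edges in $D(\nu,p)\setminus\nu$ producing a strictly shorter negative cycle (a contradiction to the minimality of $C$), and if $\nu^*\triangle C$ \emph{is} partially stable it contradicts the minimality of $\mu^*$ in the common-basis problem. No claim of the form ``$\tilde\mu\in\mathcal J_B$'' is ever made or needed. As written, your proposal does not contain the key idea (the shortest-negative-cycle contradiction via Lemma~\ref{lem:alt_cycles}) and the alternative it substitutes does not go through.
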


As a first step towards the proof of Lemma~\ref{lem:MWlegalmatching_correct}, we need a helper result on the structure of the marginal demand graph. Essentially, if a buyer $i$ has many unit-buyer copies with both matched and unmatched edges in the marginal demand graph, we can either swap all of the edges simultaneously while preserving partial stability, or we can find additional edges in the marginal demand graph.

\begin{lemma}[Structure of marginal demand graph]\label{lem:alt_cycles}
    Let $(\tau, p)$ be a feasible and partially stable outcome. Assume there exist, for some $z \geq 2$, $(k_1,j_1),\ldots(k_z, j_z)\in \tau$, and $(k_1,j_1'),\ldots(k_z, j_z')\in D(\tau, p)\setminus\tau$, where $k_1,\ldots, k_z\in \text{ubuy}(i)$ for some $i$ and are distinct and all of the $j_\ell'$ are distinct. Then, at least one of the following is true:
    \begin{enumerate}[(a)]
        \item For all subsets $X\subseteq [z]$, we have 
        $\tau(\text{ubuy}(i))\cup \{j_x'\mid x\in X\}\setminus \{j_x\mid x\in X\} \in \argmax_{T\subseteq S:|T|=|\tau(\text{ubuy}(i))|}u_i(T, p)$, i.e., the bundle $\tau(\text{ubuy}(i))\cup \{j_x'\mid x\in X\}\setminus \{j_x\mid x\in X\}$ maximizes $u_i(\cdot, p)$, among bundles of size $|\tau(\text{ubuy}(i))|$.
        \item There exists a sequence of distinct indices $x_1,\ldots, x_s \in [z]$ s.t. $(k_{x_1}, j'_{x_2}, k_{x_2}, j'_{x_3}, k_{x_3}, \dots, k_{x_s}, j'_{x_1}, k_{x_1})$ is a cycle in $D(\tau,p)\setminus \tau$.
    \end{enumerate}
\end{lemma}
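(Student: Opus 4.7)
The plan is to prove the contrapositive: assume (b) fails, equivalently that the directed graph $G$ on $[z]$ with edge $\ell \to \ell'$ (for $\ell \ne \ell'$) whenever $(k_\ell, j'_{\ell'}) \in D(\nu, p) \setminus \nu$ is acyclic, and derive (a). The key structural fact is that, under gross substitutes, the family $\mathcal{M}$ of bundles of size $|T|$ (with $T := \nu(\text{ubuy}(i))$) that maximize $u_i(\cdot, p)$ forms the bases of a matroid; this follows from restricting the $M^\natural$-concavity of $u_i(\cdot, p)$ (equivalent to GS) to a fixed-cardinality slice. By partial stability $T \in \mathcal{M}$, and by the potential-alternative clause of the marginal demand correspondence each $T_{\{\ell\}} := T - j_\ell + j'_\ell$ lies in $\mathcal{M}$.

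I would proceed by strong induction on $|X|$, with cases $|X| \le 1$ immediate. For $|X| \ge 2$, assume $T_Y \in \mathcal{M}$ for all $Y \subsetneq X$. Pick distinct $a, b \in X$ and apply the strong matroid basis exchange to the bases $A = T_{X \setminus \{a\}}$ and $B = T_{X \setminus \{b\}}$ with exchange element $j_a \in A \setminus B = \{j_a, j'_b\}$: there is a partner $y \in B \setminus A = \{j_b, j'_a\}$ such that both $A - j_a + y$ and $B - y + j_a$ are in $\mathcal{M}$. In the ``diagonal'' case $y = j'_a$, we obtain $T_X = A - j_a + j'_a \in \mathcal{M}$, completing the inductive step. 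In the ``crossed'' case $y = j_b$, both $A - j_a + j_b$ and $B - j_b + j_a$ are ``twisted'' bases of size $|T|$: the first contains both $j_b$ and $j'_b$ (and excludes $j_a, j'_a$), and the second is symmetric.

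In the crossed case we derive a contradiction with acyclicity of $G$. Apply Brualdi's theorem to $T$ and the twisted basis $A - j_a + j_b$: the basis differences are $\{j_x : x \in X \setminus \{b\}\}$ and $\{j'_x : x \in X \setminus \{a\}\}$, so there is a bijection $\sigma : X \setminus \{b\} \to X \setminus \{a\}$ with $T - j_x + j'_{\sigma(x)} \in \mathcal{M}$ for every $x$ in the domain. By the marginal demand correspondence, $\sigma(x) \ne x$ implies $(k_x, j'_{\sigma(x)}) \in D(\nu, p) \setminus \nu$ and hence yields the directed edge $x \to \sigma(x)$ of $G$. A symmetric bijection $\sigma' : X \setminus \{a\} \to X \setminus \{b\}$ arises from the other twisted basis. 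Extending $\sigma$ to a permutation of $X$ via $\sigma(b) := a$, the cycle containing $a$ is $a \to \sigma(a) \to \sigma^2(a) \to \cdots \to b \to a$, and none of its elements other than $b$ is a fixed point of $\sigma$ (otherwise they would lie in separate singleton cycles), so its first $s - 1$ edges are genuine off-diagonal edges of $G$ forming a directed path from $a$ to $b$ in $G$. Symmetrically, $\sigma'$ yields a directed path from $b$ to $a$ in $G$, and their concatenation is a closed walk in $G$ from which a simple directed cycle can be extracted, contradicting acyclicity of $G$.

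The main obstacle is making the cycle-extraction step rigorous: even though the concatenated walk closes in $G$, one must check that a simple directed cycle can be extracted using only off-diagonal edges, handling carefully the interaction of the two paths (shared vertices, overlapping segments, etc.). A secondary obstacle is formalizing the matroid structure of $\mathcal{M}$ and the strong basis exchange inside it when other unit-buyers of $i$ hold items outside the $2z$ relevant items $\{j_1,\ldots,j_z,j'_1,\ldots,j'_z\}$; however, those items of $T$ are untouched by all exchanges considered above, so the argument effectively reduces to the natural submatroid on these $2z$ items.
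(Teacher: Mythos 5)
Your proposal is correct, and it follows the same arc as the paper's proof once the ``twisted bases'' appear, but reaches them by a different device. Both proofs rest on the matroid structure of the size-$|T|$ utility maximizers (Lemma \ref{lem:gs_matroid_property}) and on Brualdi's bijective basis-exchange theorem, which converts optimality of a twisted basis into a collection of single-swap optimal bundles, each yielding an edge of the marginal demand graph; both then close by concatenating two such paths and extracting a simple cycle. The difference is earlier: the paper argues non-contrapositively, isolates a minimal failing index set $[y+1]$, fixes a convenient background set $T'$, and applies the Fujishige--Yang exchange inequality (Lemma \ref{lem:fujishige_yang}) to the four items $j_1,j'_1,j_2,j'_2$ relative to $T'$ to conclude that the two twisted bundles $T'\cup\{j_1,j'_1\}$ and $T'\cup\{j_2,j'_2\}$ are both optimal. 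You instead run an induction on $|X|$ and invoke the symmetric (strong) basis-exchange theorem on the matroid $\mathcal M$ itself to obtain the same twisted pair. Your route stays entirely within generic matroid theory (strong exchange plus Brualdi) and avoids singling out a maximal $y$, which is arguably a bit cleaner; the paper's route substitutes a GS-specific inequality for strong exchange and produces the cycle constructively rather than via an acyclicity contradiction.

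The two obstacles you flagged are not actually problematic. For extraction, any nontrivial closed walk in a loopless digraph contains a simple directed cycle -- shortcut at the first repeated vertex -- and every edge of that cycle is a genuine edge of $G$ by construction, so there is nothing delicate about how the two paths interact, even if they share intermediate vertices or segments. For the matroid, no restriction to a submatroid over the $2z$ relevant items is needed: $\mathcal M$ is the family of size-$|T|$ maximizers over all of $S$, all exchanges you carry out are exchanges between bases of that matroid, and the elements of $T$ outside $\{j_1,\dots,j_z\}$ merely lie in every basis you touch, which is harmless.
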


The proof of this result is deferred to Appendix \ref{sec:proofs:FPI:duality}. See Figure \ref{fig:MDG_structure} for an example of the lemma. This intermediate result will help us show that if a negative weight alternating cycle exists in the marginal demand graph, then a negative weight alternating cycle exists that also preserves partial stability when augmented. We can now prove Lemma~\ref{lem:MWlegalmatching_correct}.

\begin{figure}[H]
    \centering
    \begin{subfigure}[t]{.3\linewidth}
    \centering\includegraphics[scale=1]{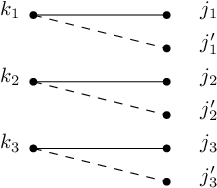}
    \caption{\footnotesize Original demanded set}
    \end{subfigure}
    \begin{subfigure}[t]{.3\linewidth}
    \centering\includegraphics[scale=1]{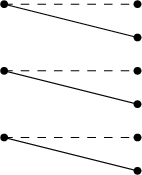}
    \caption{\footnotesize Possible alternative demanded set}
    \end{subfigure}
    \begin{subfigure}[t]{.3\linewidth}
    \centering\includegraphics[scale=1]{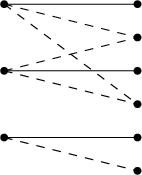}
    \caption{\footnotesize One possible version of the full marginal demand graph}
    \end{subfigure}
    \caption{A marginal demand graph as shown in Lemma \ref{lem:alt_cycles}, where $k_1,k_2,k_3\in \text{ubuy}(i)$. Consider the portion shown in Figure \ref{fig:MDG_structure}(a). The matching $\tau$ shown by the solid lines is partially stable. Then, by Lemma \ref{lem:alt_cycles}, either the matching in Figure \ref{fig:MDG_structure}(b) also provides $i$ with optimal utility $u_i(\cdot, p)$ among bundles of size $|\tau(\text{ubuy}(i))|$, or we can find additional unmatched edges in the marginal demand graph, such as in Figure \ref{fig:MDG_structure}(c).}
    \label{fig:MDG_structure}
\end{figure}

\begin{proof}{Proof of Lemma~\ref{lem:MWlegalmatching_correct}}
    Consider the linear program $ DP(\tau, p, \mathcal T):$
    \begin{align*}
        \min & \sum_{k\in U(\mathcal T),j\in S(\mathcal T)}\lambda_{kj}\log q'_{kj}(p_j) &\\
         \text{s.t. } & \sum_{j\in F_{\tau, p}(k)}\lambda_{kj} = 1 & \forall k\in U'(\mathcal T) \\
          & \sum_{k\in F^{-1}_{\tau, p}(j)}\lambda_{kj} = 1 & \forall j\in S(\mathcal T) \\
          & \lambda_{kj}\ge 0 & \forall (k, j)\in H
    \end{align*}
    
    where as before, $H = D(\tau, p)\cap V(\mathcal T)\setminus\{k_0\}$. Also, $U'(\mathcal T)$ is defined as $U(\mathcal T)\setminus \{k_0\}$. We see that $DP(\tau^*, p, \mathcal T)$ is a minimum weight perfect matching linear program, and is the dual of $LP(\tau^*, p, \mathcal T)$. We wish to show that $\tau^*$ is an optimal solution for $DP(\tau^*, p, \mathcal T)$. Let $y\in \{0,1\}^{U(\mathcal T)\setminus\{k_0\}\times S(\mathcal T)}$ be indicators for the edges in the  residual network at $\tau^*$, i.e,  $y_{kj} = 1$ if and only if $\tau^*_{kj} = 0$ and $y_{jk} = 1$ if and only if $\tau^*_{kj} = 1$, with $y_{kj}$ and $y_{jk}$ equal to 0 otherwise, for each $(k,j)\in D(\tau, p)\cap V(\mathcal T)\setminus\{k_0\}$. It is well known that $\tau^*$ is a minimum weight perfect matching if and only if there does not exist a negative-weight cycle in the residual network at $\tau^*$ (see, e.g., \cite{schrijver2003combinatorial}). Suppose for contradiction that there exists a negative weight cycle $C = (j_1, k_1, j_2, k_2,\ldots, j_z, k_z)$ with fewest edges, where $\tau^*_{k_\ell j_\ell} = 1$ and $\tau^*_{k_\ell j_{\ell+1}} = 0$ for all $\ell$ (and $j_{z+1}:= j_1$). Suppose first $\tau^*\triangle C$ is a partially stable perfect matching. Then by construction its weight is strictly smaller than the weight of $\tau^*$, a contradiction to Lemma~\ref{lem:min_weight_common_basis}. Thus, $\tau^* \triangle C$ is not partially stable. In particular, some buyer must appear in multiple unit-buyer copies in $C$. If not, then by definition of the marginal demand graph, we have $u_{k_\ell}(\tau^*(\text{copy}(k_\ell)), p) = u_{k_\ell}(\tau^*(\text{copy}(k_\ell))\cup\{j_{\ell+1}\}\setminus \{j_\ell\}, p)$. That is, in the augmented matching $\tau^*\triangle C$, every buyer has the same utility as in $\tau^*$, which means that $\tau^*\triangle C$ is again partially stable, a contradiction.
  
    Let therefore $i$ be a buyer with multiple unit-buyer copies in $V(C)$. Then, each such unit-buyer $k_{x}\in \text{ubuy}(i)\cap V(C)$ has an edge $(k_{x},j_{x}')\in D(\tau, p)\setminus \tau^*$, given by one of its edges in $C$. We can now apply Lemma \ref{lem:alt_cycles}. Suppose that Lemma \ref{lem:alt_cycles}(a) is true for every buyer $i$ with multiple unit-buyer copies in $V(C)$, where $k_1,\dots, k_z$ are the copies of the unit buyer $i$ in $C$, and $j_1,\dots, j_z$ (resp, $j_1',\dots, j_z'$) are the items adjacent to them in $\tau^*$ (resp., $C \setminus \tau^*$). Fixing any such $i$, we know by Lemma \ref{lem:alt_cycles}(a) that by setting $X = \{x\mid k_{x}\in\text{ubuy}(i)\cap V(C)\}$, the bundle \begin{equation}\label{eq:augmented}\tau^*(\text{ubuy}(i))\cup\{j_{x}'\mid x\in X\}\setminus \{j_{x}\mid x\in X\}\end{equation} maximizes $u_i(\cdot, p)$ among bundles of size $|\tau^*(\text{ubuy}(i))|$. However,~\eqref{eq:augmented} is exactly equal to $(\tau^*\triangle C)(\text{ubuy}(i))$. Thus, $\tau^*\triangle C$ matches this buyer $i$ with a bundle maximizing $u_i(\cdot, p)$ among sets of size $|\tau^*(\text{ubuy}(i))|$. Since $i$ was an arbitrary buyer with multiple copies in $C$ and clearly $\tau^*\triangle C$ matches every buyer $i'$ with exactly one unit-buyer copy in $C$ to a bundle of size $|\tau^*(\text{ubuy}(i'))|$ maximizing $u_{i'}(\cdot, p)$, $\tau^*\triangle C$ is partially stable, a contradiction. 

    Thus, there must exist a buyer $i$ with multiple unit-buyer copies in $V(C)$ for which Lemma \ref{lem:alt_cycles}(b) applies with $j_x$ (resp.~$j'_x$) being the items on edges in $C\cap \tau^*$ (resp.~$C\setminus \tau^*)$ incident to its unit-buyer copies. Fix one such buyer $i$, with unit-buyers $k_{1},\cdots, k_z\in \text{ubuy}(i)\cap V(C)$, matched edges $(k_x, j_x)\in C$, and unmatched edges $(k_x, j_x') = (k_x, j_{x+1})\in C$ for each $1\le x\le z$. Let $x_1,\cdots, x_s$ be the sequence of indices described in Lemma \ref{lem:alt_cycles}(b). Then, $(k_{x_1}, j'_{x_2}, k_{x_2}, j'_{x_3}, k_{x_3}, \dots, k_{x_s}, j'_{x_1})$ is a cycle in $D(\tau, p)\setminus\tau$. Call each of the edges of the form $(k_{x_\ell}, j'_{x_{\ell+1}})$ in this cycle a \emph{chord}. Now, consider the set of cycles $C_1,\ldots, C_s$, where $C_\ell$ is given by $C_\ell = (k_{x_\ell}, j'_{x_{\ell+1}} = j_{x_{\ell+1}+1}, k_{x_{\ell+1}+1}, j_{x_{\ell+1}+2},k_{x_{\ell+1}+2},\ldots, k_{x_{\ell-1}}, j_{x_\ell}, k_{x_\ell})$. That is, $C_\ell$ is the ``shortcutted'' cycle obtained by first traversing the chord $(k_{x_\ell}, j'_{x_{\ell+1}})$, then following $C$ until we return back to $k_{x_\ell}$. Note that $w(k_{x_\ell}, j'_{x_{\ell+1}}) = w(k_{x_{\ell+1}}, j'_{x_{\ell+1}}) = w(k_{x_{\ell+1}}, j_{x_{\ell+1}+1})$, where $w(k,j) = \log q'_{kj}(p_j)$ is the objective weight function from $DP$, as the unit-buyers $k_{x_\ell}$ and $k_{x_{\ell+1}}$ are copies of the same buyer. It follows that the weight of each $C_\ell$ is given by $w(C_\ell) = w(k_{x_\ell}, j'_{x_{\ell+1}}) + w(j_{x_{\ell+1}+1}, k_{x_{\ell+1}+1})+ w(k_{x_{\ell+1}+1}, j_{x_{\ell+1}+2}) + \dots+ w(k_{x_\ell-1}, j_{x_\ell-1}) + w(j_{x_\ell-1}, k_{x_\ell})$, which can be rewritten as $w(k_{x_\ell+1}, j_{x_{\ell+1}+1}) + w(j_{x_{\ell+1}+1}, k_{x_{\ell+1}+1})+ w(k_{x_{\ell+1}+1}, j_{x_{\ell+1}+2}) + \dots+ w(k_{x_\ell-1}, j_{x_\ell-1}) + w(j_{x_\ell-1}, k_{x_\ell})$. That is, the weight of $C_\ell$ is equal to the sum of the weights of the arcs of $C$ from $k_{x_{\ell+1}}$ to $k_{x_\ell}$. Adding the weights of $C_1,\ldots, C_s$ gives us the weight of the arcs of $C$ from $k_{x_2}$ to $k_{x_1}$, then $k_{x_3}$ to $k_{x_2}$, and so on, completing with the arc from $k_{x_{s+1}} = k_{x_1}$ to $k_{x_s}$. In other words, the sum of the weights of $C_1,\dots, C_s$ is equivalent in weight to a grand cycle starting from $k_{x_1}$, following $C$ an integer $m$ times around, and finishing back at $k_{x_1}$. It follows that $\sum_{\ell = 1}^s w(C_\ell) = m\cdot w(C) < 0$. There must then be at least one cycle $C_\ell$ with $w(C_\ell) < 0$. Furthermore, this cycle has strictly fewer edges than $C$ since two copies of the same buyer cannot be at distance $2$ in $C$. This contradicts the choice of $C$. 
 \end{proof}

Finally, once we have the optimal solution $(\omega^*, \rho^*)$ to $LP(\tau^*, p,\mathcal T)$, we apply a transformation to it, described in Algorithm \ref{alg:connectMAT}, that guarantees there exists a MAT $\mathcal T'$ containing $\tau^*$ such that, for the transformed solution, the constraints of $LP(\tau^*, p,\mathcal T)$ corresponding to edges in $\mathcal T'$ are all tight. We note that, before the transformation, complementary slackness guarantees that, for the optimal solution $(\omega^*, \rho^*)$, the constraints $\omega^*_k+\rho^*_j\le \log q'_{kj}(p_j)$ corresponding to edges $(k,j)\in \tau^*$ are tight. Also note that given two subgraphs $H, H'$ of the marginal demand graph, we may abuse notation and write $H\cap V(H')$ to denote the subgraph of $H$ contained in $U(H')\times S(H')$. 

In Algorithm \ref{alg:connectMAT}, $H_\rho$ is the graph of edges $(k,j)\in D(\tau^*, p)\cap V(\mathcal T)$ such that the constraint $\omega_k+\rho_j \le \log q'_{kj}(p_j)$ is tight. The goal is for $H_\rho$ to contain a MAT $\mathcal T'$ containing $\tau^*$. In line \ref{alglin:connectmat:T_prime}, Algorithm \ref{alg:connectMAT} computes the initial MAT $\mathcal T'$ rooted at $k_0$. The goal is to add edges to $H_\rho$ such that the MAT rooted at $k_0$ contains the same unit-buyers and items as $\mathcal T$. Then, in lines \ref{alglin:connectmat:epsilon} to \ref{alglin:connectmat:decr_omega}, the algorithm uniformly increases the entries of $\rho$ in $S(\mathcal T)\setminus S(\mathcal T')$ and decreases the entries of $\omega$ in $U(\mathcal T)\setminus U(\mathcal T')$. Translating $\omega$ and $\rho$ preserves tightness for all edges for which the dual constraint is tight, and tightens at least one more constraint, and an edge between $U(\mathcal T')$ and $S(\mathcal T)\setminus S(\mathcal T')$ enters $H_\rho$. The new edge is then added to $\mathcal T'$, as every edge from a unit-buyer $k\in U(\mathcal T')$ must be contained in any MAT containing $k$ by definition. This process repeats on the transformed solution $(\omega, \rho)$. The loop terminates once $\mathcal T'$ covers the same unit-buyers and items as $\mathcal T$, and Algorithm \ref{alg:connectMAT} returns the final solution $(\omega, \rho)$. The proof of next result is deferred to Appendix \ref{sec:proofs:FPI:duality}.

\begin{algorithm}[]
    \caption{\textsc{ConnectMAT}}
    \label{alg:connectMAT}
    \begin{algorithmic}[1]
        \Require Max weight partially stable matching $\tau^*$, prices $p$, original MAT $\mathcal T$ rooted at $k_0$, optimal solution $(\omega^*,\rho^*)$ to $LP(\tau^*, p,\mathcal T)$
        \Ensure Solution $(\omega, \rho)$ to  $LP(\tau^*, p,\mathcal T)$ such that there exists a MAT $\mathcal T'$ such that primal constraints corresponding to edges in $\mathcal T'$ are tight over $(\omega, \rho)$
        \State $\omega, \rho\gets \omega^*, \rho^*$
        \State $\omega_{k_0}\gets \min_{j\in S(\mathcal T)}\log q'_{k_0j}(p_j) - \rho_j$ \Comment{extend $\omega$ to $k_0$\label{alglin:connectmat:omega_k0}}
        \Do \label{alglin:connectmat:do}
            \State $H_\rho \gets \{(k, j)\in D(\tau^*, p)\cap V(\mathcal T)\mid \omega_k+\rho_j = \log q'_{kj}(p_j)\}$  \label{alglin:connectmat:H_rho}
            \State $\mathcal T'\gets \textsc{FindMAT}(H_\rho, \tau^*, k_0)$ \label{alglin:connectmat:T_prime}
            \State $\epsilon\gets \min_{(k, j)\in D(\tau^*, p):\ k\in U(\mathcal T'),\ j\in S(\mathcal T)\setminus S(\mathcal T')}\log q'_{kj}(p_j)-\omega_k-\rho_j$\label{alglin:connectmat:epsilon}\Comment{$\epsilon$ tightens a constr.}
            \State Set $\rho_j\gets \rho_j + \epsilon$ for all $j\in S(\mathcal T)\setminus S(\mathcal T')$ \label{alglin:connectmat:incr_rho}
            \State Set $\omega_k \gets \omega_k - \epsilon$ for all $k\in U(\mathcal T)\setminus U(\mathcal T')$ \label{alglin:connectmat:decr_omega}\Comment{by choice of $\epsilon$, one constraint tightens}
        \doWhile{$|U(\mathcal T')|+|S(\mathcal T')| < |U(\mathcal T)|+|S(\mathcal T)|$\Comment{so, $H_\rho$ increases by one edge each iteration}\label{alglin:connectmat:while}}
        \State \Return $\omega, \rho$
    \end{algorithmic}
\end{algorithm}

\begin{lemma}[Correctness and runtime of Algorithm \ref{alg:connectMAT}]\label{lem:connectmat}
    Algorithm \ref{alg:connectMAT} outputs $\omega, \rho$ such that $H_\rho$ contains a MAT $\mathcal T'$ containing $\tau^*$. Furthermore, it terminates in $O(|B||S|^3)$ time.
\end{lemma}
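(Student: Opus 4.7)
My plan is to maintain two invariants through the do--while loop at lines~\ref{alglin:connectmat:do}--\ref{alglin:connectmat:while}: (i) $(\omega,\rho)$ remains feasible for $LP(\nu^*,p,\mathcal T)$, and (ii) $|V(\mathcal T')|$ strictly grows each iteration. Together with $V(\mathcal T') \subseteq V(\mathcal T)$, this forces termination in at most $|S(\mathcal T)| \le |S|$ iterations, and at termination $V(\mathcal T') = V(\mathcal T)$. The desired conclusion $\nu^* \subseteq E(\mathcal T')$ then follows since the edges of $\nu^*$ remain tight throughout (argued below) and hence lie in $H_\rho$, and each $k \in U(\mathcal T') \setminus \{k_0\}$ must be reached in the alternating tree by its unique matched edge $(k,\nu^*(k))$.

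For invariant~(i), I case-split on an edge $(k,j) \in D(\nu^*,p)\cap V(\mathcal T)\setminus\{k_0\}$ by which side of the $\mathcal T'$-partition its endpoints lie on. Three of the four cases are immediate: if both endpoints are in $\mathcal T'$, the duals are unchanged; if both are outside, the changes cancel; if $k \notin U(\mathcal T')$ and $j \in S(\mathcal T')$, then $\omega_k + \rho_j$ strictly decreases. The only nontrivial case is $k \in U(\mathcal T')$, $j \in S(\mathcal T)\setminus S(\mathcal T')$, which is precisely the set over which $\epsilon$ is the minimum slack in line~\ref{alglin:connectmat:epsilon}, so feasibility is preserved and a new tightness is achieved. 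The initialization of $\omega_{k_0}$ in line~\ref{alglin:connectmat:omega_k0} ensures feasibility at the start. Observe also that for any edge $(k,j) \in \nu^*$, MAT maximality implies $j \in S(\mathcal T')$ iff $k \in U(\mathcal T')$ (using $\nu^*(j)\subseteq U(\mathcal T')$ for $j \in S(\mathcal T')$, and the alternating-tree property for the converse); both endpoints of a matched edge thus lie on the same side of the partition, so such edges stay tight throughout and $\nu^* \subseteq H_\rho$ at every iteration.

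For invariant~(ii), the edges of $\mathcal T'_{\mathrm{old}}$ lie in $U(\mathcal T'_{\mathrm{old}}) \times S(\mathcal T'_{\mathrm{old}})$ and hence remain tight, while the minimizing edge $(k^*,j^*)$ in line~\ref{alglin:connectmat:epsilon} becomes newly tight. Since $k^* \in U(\mathcal T'_{\mathrm{old}})$ implies $\nu^*(k^*) \in S(\mathcal T'_{\mathrm{old}})$ (with $k^* = k_0$ being unmatched), the edge $(k^*,j^*)$ is unmatched; the rebuilt MAT extends through it to include $j^*$ and, by maximality, the $cap(j^*) \ge 1$ unit-buyers of $\nu^*(j^*)$. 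Also, $\epsilon > 0$ automatically: otherwise $(k^*,j^*)$ was already in $H_\rho$ and MAT maximality would force $j^* \in S(\mathcal T'_{\mathrm{old}})$, a contradiction.

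Each iteration scans the $O(|U||S|) = O(|B||S|^2)$ edges of $D(\nu^*,p)\cap V(\mathcal T)$ to construct $H_\rho$, compute $\epsilon$, and call \textsc{FindMAT}; the updates to $\omega,\rho$ cost $O(|B||S|)$. Multiplied by at most $|S|$ iterations this gives the $O(|B||S|^3)$ bound. The main obstacle I anticipate is verifying that the minimum in line~\ref{alglin:connectmat:epsilon} is over a nonempty set whenever $|V(\mathcal T')| < |V(\mathcal T)|$. I expect this to follow from showing that $D(\nu^*,p)$ restricted to $V(\mathcal T)$ admits a MAT spanning $V(\mathcal T)$, derived from $\nu^*$ being a lift of a common basis of $M_B$ and $M_S$ (Lemma~\ref{lem:min_weight_common_basis}) and the alternating-tree structure of $\mathcal T$ in $D(\nu,p)$.
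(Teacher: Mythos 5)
Your argument follows the same invariant-based approach as the paper's proof (feasibility of $(\omega,\rho)$ through the update, monotone and strictly increasing $\mathcal T'$, termination within $O(|S|)$ iterations, $O(|B||S|^2)$ per iteration); you are in fact more explicit than the paper in several places, notably the four-way case split for feasibility, the observation that matched edges of $\nu^*$ have both endpoints on the same side of the $\mathcal T'$ cut and hence stay tight so that $\nu^*\subseteq E(\mathcal T')\subseteq H_\rho$, and the argument that $\epsilon>0$ via MAT maximality. The one step you rightly flag as unresolved---that the set over which $\epsilon$ is minimized in line~\ref{alglin:connectmat:epsilon} is nonempty whenever $V(\mathcal T')\subsetneq V(\mathcal T)$---is also left unaddressed in the paper's proof, which simply asserts that after each iteration a new cross-cut edge becomes tight.
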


\subsection{The MAT-preserving price increase algorithm}\label{sec:FPI:alg}

We can now describe Algorithm \ref{alg:findpriceincrease}, \textsc{FindPriceIncrease},  for computing the MAT-preserving price increase $d$ and step size $\lambda^*$. Let \textsc{CommonBasis}($\mathcal I_B$, $\mathcal I_S$, $w$) be an algorithm that takes in two matroid independence oracles and a weight function and outputs the minimum weight common basis. Finally, let $\textsc{LinSeg}_{i, j}(p_j)$ be the highest index $\ell$ such that $p_j$ belongs to the $\ell^{th}$ linear segment for the effective price function $q_{ij}$. Algorithm \ref{alg:findpriceincrease} first solves the minimum weight common basis problem on $M_B$ and $M_S$. Then, it uses the solution $\tau^*$ to define an appropriate version of $LP$, where by Lemma \ref{lem:MWlegalmatching_correct}, the optimal solution satisfies complementary slackness with $\tau^*$. The algorithm calls \textsc{ConnectMAT} in line \ref{alglin:findpriceincrease:connectmat} to modify the dual variables $\omega^*, \rho^*$, and also extends $\omega$ to include $\omega_{k_0}$. Next, it uses the duality trick to compute the price increase $d$ (line \ref{alglin:findpriceincrease:duality_trick}). Finally, it calculates the maximum step size $\lambda^*$ the price can be increased along $d$ (line \ref{alglin:findpriceincrease:lambda}), until a new edge enters the marginal demand graph $D(p+\lambda^* d)$ (line \ref{alglin:findpriceincrease:lambda1}) or  $p+\lambda^* d$ leaves the current linear domain (line \ref{alglin:findpriceincrease:lambda2}). We prove in Appendix \ref{sec:proofs:FPI:alg} the correctness and runtime of Algorithm \ref{alg:findpriceincrease}.

\begin{algorithm}[]
    \caption{\textsc{FindPriceIncrease}}
    \label{alg:findpriceincrease}
    \begin{algorithmic}[1]
        \Require MAT $\mathcal T$ with respect to $\tau$ with root $k_0$ in $D(\tau, p)$, where all items in $S(\mathcal T)$ are matched
        \Ensure MAT-preserving price increase and step size $\lambda^* d$ for $\mathcal T$ and corresponding matching $\tau'$
        \State $\mu^*\gets$\textsc{CommonBasis}($\mathcal I_B$, $\mathcal I_S$, $w$) \label{alglin:findpriceincrease:maxweight}
        \State $\tau^*\gets \textsc{Lift}(\mu^*) \cup (\tau\setminus E(\mathcal T))$ \Comment{$\tau^*$ is an optimal solution to $LP$}
        \State $\omega^*, \rho^*\gets LP(\tau^*, p, \mathcal T)$\label{alglin:findpriceincrease:omegarho}\Comment{define $LP$ using max weight common basis}
        \State $\omega^*, \rho^*\gets \textsc{ConnectMAT}(\tau^*, p, \mathcal T, \omega, \rho)$ \label{alglin:findpriceincrease:connectmat} \Comment{modifies dual variables to preserve full MAT}
        \State Set $d_j \gets e^{-\rho^*_j}$ for all $j\in S(\mathcal T)$ and $d_j \gets 0$ otherwise \label{alglin:findpriceincrease:duality_trick}\Comment{duality trick}
        \State $\lambda_1^* \leftarrow \min\{\lambda \geq 0 :  (k, j) \in D(\tau^*, p + \lambda d) \text{ for some } k \in U(\mathcal T) \text{ and } j \not\in F_{\tau^*, p}(k)\}$ \label{alglin:findpriceincrease:lambda1}
        \State $\lambda_2^* \leftarrow \min\{\lambda \geq 0 :  \textsc{LinSeg}_{\text{buy}(k), j}(p_j) \neq \textsc{LinSeg}_{\text{buy}(k), j}(p_j + \lambda d_j) \text{ for some } k \in U \text{ and } j \in S \}$ \label{alglin:findpriceincrease:lambda2}
        \State $\lambda^* \leftarrow \min\{\lambda_1^*, \lambda_2^*\}$ \label{alglin:findpriceincrease:lambda}\Comment{max step size to preserve MAT and stay within linear domain}
        \State \Return $\tau^*, \lambda^* d$
    \end{algorithmic}
\end{algorithm}

\begin{lemma}[Correctness and runtime of Algorithm \ref{alg:findpriceincrease}]\label{lem:gs_price_inc}
    Algorithm \ref{alg:findpriceincrease} finds a MAT-preserving price increase $d$ and maximal step size $\lambda^*$ such that for all $\lambda\in [0, \lambda^*]$, $(\tau^*, p+\lambda d)$ is partially stable, $p+\lambda d$ remains within the same linear domain, and an alternating tree $\mathcal T'$ with the same unit-buyers, items, and root $k_0$ as $\mathcal T$ is preserved in $D(\tau^*, p+\lambda d)$. Furthermore, if $p+\lambda^* d$ does not reach the border of the current linear domain, the number of items in the MAT rooted at $k_0$ increases by 1. The algorithm runs in time $O(|B|^5|S|^5)$.
\end{lemma}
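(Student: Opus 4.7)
The plan is to verify the three properties (MAT-preservation of $d$, correctness of $\lambda^*$, growth of the MAT when $\lambda^* = \lambda_1^*$) by chaining the results of Sections~\ref{sec:FPI:matroid} and~\ref{sec:FPI:duality} into the slope condition of Lemma~\ref{lem:envyfree_slopes}, and then bounding the runtime by accounting for each subroutine call.

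First, I would establish that $d$ is a MAT-preserving price increase at $(\nu, p)$ with respect to $\mathcal{T}$. By Lemma~\ref{lem:min_weight_common_basis}, $\nu^*$ computed in line~\ref{alglin:findpriceincrease:maxweight} is a minimum weight partially stable perfect matching on $V(\mathcal{T})\setminus\{k_0\}$, and it coincides with $\nu$ outside $U(\mathcal{T})\times S(\mathcal{T})$ by construction. By Lemma~\ref{lem:connectmat}, after the \textsc{ConnectMAT} call the vector $(\omega^*,\rho^*)$ remains feasible for $LP(\nu^*,p,\mathcal{T})$, $H_\rho$ contains a MAT $\mathcal{T}'$ containing $\nu^*$, and $V(\mathcal{T}') = V(\mathcal{T})$. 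So conditions~(\ref{def:mat_preserving_price_increase:matching}) and (\ref{def:mat_preserving_price_increase:tree}) of Definition~\ref{def:mat_preserving_price_increase} are met. With $d_j = e^{-\rho^*_j}$ for $j\in S(\mathcal{T})$ and $d_j = 0$ otherwise, condition~(\ref{def:mat_preserving_price_increase:positive}) follows immediately. For the slope condition of Lemma~\ref{lem:envyfree_slopes}, note that for every edge $(k,j)\in \mathcal{T}'$ we have $\omega^*_k + \rho^*_j = \log q'_{kj}(p_j)$, so $q'_{kj}(p_j)\, d_j = e^{\omega^*_k}$, a quantity depending only on $k$. For every $(k,j_2)$ with $k\in U(\mathcal{T})$ and $j_2\in S(\mathcal{T})$ the dual feasibility inequality $\omega^*_k + \rho^*_{j_2}\le \log q'_{kj_2}(p_{j_2})$ rearranges to $q'_{kj_2}(p_{j_2})\, d_{j_2} \ge e^{\omega^*_k}$. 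For $j_2\notin S(\mathcal T)$ we have $d_{j_2}=0$, so the slope inequality is trivially satisfied since $k$'s matched edge in $\mathcal T'$ lies in $S(\mathcal T)$. The same holds for the root $k_0$ by the choice in line~\ref{alglin:connectmat:omega_k0} of \textsc{ConnectMAT}. Thus the hypothesis of Lemma~\ref{lem:envyfree_slopes} is satisfied, and so $d$ is a MAT-preserving price increase; in particular, condition~(\ref{def:mat_preserving_price_increase:utility}) holds for all sufficiently small $\lambda>0$.

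Second, I would address the step size and the growth of the MAT. By definition, $\lambda_1^*$ is the largest $\lambda$ for which no new edge enters the marginal demand graph on $U(\mathcal{T})$, and $\lambda_2^*$ is the largest for which $p+\lambda d$ stays in the current linear domain of every $q_{ij}$, so $\lambda^*= \min\{\lambda_1^*,\lambda_2^*\}$ is exactly the maximal step size preserving both the slope condition and the piecewise-linear validity of the weights. Since $d$ is a MAT-preserving price increase, for all $\lambda\in(0,\lambda^*]$ the outcome $(\nu^*, p+\lambda d)$ is partially stable and $\mathcal{T}'\subseteq D(\nu^*, p+\lambda d)$; $\lambda_1^*>0$ because none of the slope inequalities for $j\notin F_{\nu^*,p}(k)$ are tight at $\lambda=0$, and $\lambda_2^*>0$ by continuity of the effective prices. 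If $\lambda^* = \lambda_1^*$ (strictly smaller than $\lambda_2^*$), then by definition an edge $(k,j)$ with $k\in U(\mathcal T)$ and $j\notin F_{\nu^*,p}(k)$ becomes tight. Because $k\in U(\mathcal{T}')$ and $\mathcal T'$ is a MAT in $D(\nu^*,p+\lambda^* d)$, this new edge is in the updated MAT; as the new neighbor $j$ was not in $S(\mathcal{T}')$ it must be a fresh item, so the number of items in the MAT rooted at $k_0$ grows by one. The argument that exactly one item is added follows from the tie-breaking in the definitions of $\lambda_1^*$.

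Third, the runtime is dominated by the call to \textsc{CommonBasis} in line~\ref{alglin:findpriceincrease:maxweight}. By Lemma~\ref{lem:min_weight_common_basis_alg} this uses $O(|B|^3|S|^3)$ oracle calls, each one being a call to an independence oracle for $M_B$ (cost $O(|B|^2|S|^2)$ by Lemma~\ref{lem:M1_independence_oracle}) or $M_S$ (cost $O(|B|^2|S|)$ by Lemma~\ref{lem:M2_independence_oracle}), giving $O(|B|^5|S|^5)$. Solving the linear program $LP(\nu^*,p,\mathcal{T})$ and computing $\omega^*,\rho^*$ can be done in time polynomial in $|B||S|$, \textsc{ConnectMAT} runs in $O(|B||S|^3)$ by Lemma~\ref{lem:connectmat}, and computing $\lambda_1^*$ and $\lambda_2^*$ requires inspecting $O(|B||S|^2)$ edge/segment transitions. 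These are all dominated by $O(|B|^5|S|^5)$, giving the claimed runtime bound. The main conceptual obstacle is the first step: to show that the slope condition of Lemma~\ref{lem:envyfree_slopes} applies to $\mathcal T'$ rather than the original $\mathcal T$, which is exactly what makes the pairing of \textsc{CommonBasis} with \textsc{ConnectMAT} essential.
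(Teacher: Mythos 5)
Your proposal follows the paper's proof closely in structure and cites the same lemmas (the complementary-slackness consequence of Lemma~\ref{lem:MWlegalmatching_correct}, Lemma~\ref{lem:connectmat} for $\mathcal T'$, the exponentiation yielding $q'_{kj_1}(p_{j_1})d_{j_1}=e^{\omega^*_k}\le q'_{kj_2}(p_{j_2})d_{j_2}$, Lemma~\ref{lem:envyfree_slopes} for MAT-preservation, and the step-size/runtime accounting). One sentence, however, contains a genuine logical slip: you write that for $j_2\notin S(\mathcal T)$ we have $d_{j_2}=0$ and thus \emph{the slope inequality is trivially satisfied}. That reasoning is backwards. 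In that hypothetical the slope inequality of Lemma~\ref{lem:envyfree_slopes} would read $q'_{kj_1}(p_{j_1})d_{j_1}\le q'_{kj_2}(p_{j_2})\cdot 0 = 0$, and since $(k,j_1)\in\mathcal T'$ forces $j_1\in S(\mathcal T')=S(\mathcal T)$, we have $d_{j_1}=e^{-\rho^*_{j_1}}>0$, so the left-hand side is strictly positive and the inequality would \emph{fail}, not hold trivially. What actually saves the argument is that this case never arises: since $\mathcal T'$ is a MAT in $D(\nu^*,p)$ and $k\in U(\mathcal T')$, the maximality condition $F_{\nu^*,p}(k)\subseteq\mathcal T'$ puts every $j_2\in F_{\nu^*,p}(k)$ inside $S(\mathcal T')=S(\mathcal T)$. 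You should replace the ``trivially satisfied via $d_{j_2}=0$'' reasoning with this maximality argument; otherwise the submission is sound and essentially identical to the paper's proof.
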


\section{The Main Result}\label{sec:main_result}

In this section, we complete the proof of the main result. In Section \ref{sec:main_result:findmat}, we analyze \textsc{FindMAT}, the subroutine for computing a MAT. In Section \ref{sec:main_result:augment}, we analyze \textsc{AugmentingPath}, the subroutine for increasing the size of the matching. Finally, in Section \ref{sec:main_result:main_result}, we prove Theorem \ref{thm:main}.

\subsection{The FindMAT subroutine}\label{sec:main_result:findmat}

Given an unmatched unit-buyer $k_0$ that demands at least one item, \textsc{FindMAT}, formally described as Algorithm~\ref{alg:mat} below, finds a MAT $\mathcal T$ rooted at $k_0$. The process is essentially a breadth-first search on the edges of the marginal demand graph. Initially, $\mathcal T$ is a tree with a single vertex, $k_0$. For each item $j \in F_{\tau, p}(k_0)$ demanded by $k_0$, the edge $(k_0, j)$ is added to $\mathcal T$ and, for all buyers  $k \in \tau(j)$ matched to $j$,  the edges $(k, j)$ are also added to $\mathcal T$ (line \ref{alglin:mat:addedges}). The algorithm then repeats this process with the newly added buyers $k$ to $\mathcal T$, but without connecting to a vertex that is already in $\mathcal T$.

The proof of next result is deferred to Appendix~\ref{sec:proofs:main_result}.

\begin{algorithm}[]
    \caption{\textsc{FindMAT}}
    \label{alg:mat}
    \begin{algorithmic}[1]
        \Require Graph $H$, matching $\tau\subseteq H$, and unmatched unit-buyer $k_0$ with an edge in $H$
        \Ensure MAT $\mathcal T$ in $H$ rooted at $k_0$
        \State Initialize $\mathcal T\gets\emptyset$,\ $U' \gets \{k_0\}$ \label{alglin:mat:init}
        \While{$\exists k \in U'$\label{alglin:mat:while}}
            \For{$j\not\in V(\mathcal T)$ s.t. $(k,j)\in H$\label{alglin:mat:j}}
                \State $U'' \leftarrow \tau(j)\setminus V(\mathcal T)$ \label{alglin:mat:u''}
                \State $\mathcal T\gets \mathcal T\cup \{(k,j)\} \cup \{(k',j)\mid k'\in U'' \}$ \label{alglin:mat:addedges}
                \State $U' \leftarrow U' \cup U'' \setminus \{k\}$ \label{alglin:mat:updateu}
            \EndFor
        \EndWhile
        \State \Return $\mathcal T$
    \end{algorithmic}
\end{algorithm}

\begin{lemma}
\label{lem:existmat}
For any one-to-one outcome $(\tau, p)$ and unmatched unit-buyer $k_0$ that demands at least one item,  Algorithm \ref{alg:mat} finds a MAT rooted at $k_0$ in $O(|B||S|^2)$ running time.
\end{lemma}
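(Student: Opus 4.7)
My plan is to establish correctness via a loop invariant and then bound the runtime by amortizing work against vertices of $\mathcal{T}$. I will treat Algorithm~\ref{alg:mat} as a modified BFS that alternates between unit-buyer and item layers.

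\textbf{Invariant.} I would prove by induction on iterations of the outer while loop that the following conditions hold at the end of each iteration:
\begin{enumerate}
    \item[(i)] $\mathcal{T}$ is an alternating tree in $H$ rooted at $k_0$ (matched edges alternate with unmatched edges along every root-to-leaf path);
    \item[(ii)] $U' \subseteq V(\mathcal{T}) \cap U$, with equality up to the set of unit-buyers already fully processed; in particular, every unit-buyer in $V(\mathcal{T}) \setminus U'$ has been processed;
    \item[(iii)] for every item $j \in V(\mathcal{T})$, the entire matched neighborhood $\nu(j)$ lies in $V(\mathcal{T})$.
\end{enumerate}
The base case is immediate (taking $k_0$ as a trivial tree with no edges). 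For the inductive step, consider processing some $k \in U'$. The for loop only adds edges $(k,j)$ with $j \notin V(\mathcal{T})$, so each such edge attaches a fresh vertex $j$ to the tree; similarly the edges $(k',j)$ with $k' \in \nu(j) \setminus V(\mathcal{T})$ attach fresh unit-buyer vertices. Thus no cycles are introduced, maintaining (i) as a tree. Alternation holds because $k_0$ is unmatched and edges $(k,j)$ are unmatched (since $\nu(k)$, if it exists, would already lie in $V(\mathcal{T})$ from when $k$ was added) while the edges $(k',j)$ with $k' \in \nu(j)$ are matched. Part (iii) holds by construction: when $j$ first enters $V(\mathcal{T})$, all of $\nu(j)$ is added via $U''$ or was already present. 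Part (ii) holds because $k$ is removed from $U'$ after its for loop completes, by which point every $j \in S$ with $(k,j) \in H$ satisfies $j \in V(\mathcal{T})$ (either it was skipped because already present, or added during this iteration).

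\textbf{Termination and correctness.} Each iteration of the while loop removes exactly one element from $U'$ and adds only new unit-buyers (not previously in $V(\mathcal{T})$) to $U'$. Since $|U| = |B||S|$ is finite, $U'$ eventually becomes empty. At termination, $U' = \emptyset$, so every $k \in V(\mathcal{T}) \cap U$ has been processed; by invariant (ii) and the argument above, all items $j$ with $(k,j) \in H$ lie in $V(\mathcal{T})$. Combined with invariant (iii), this matches the two bullet points in the definition of a MAT, so $\mathcal{T}$ is a MAT rooted at $k_0$. Existence of at least one demanded item for $k_0$ ensures $\mathcal{T}$ is nonempty as required.

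\textbf{Runtime.} I would amortize work against vertices of $\mathcal{T}$. Each unit-buyer $k$ is processed at most once in the outer while loop, so the loop executes $|U(\mathcal{T})| \leq |U| = |B||S|$ times. Each for loop over $k$ scans the neighborhood of $k$ in $H$, which has at most $|S|$ items, contributing $O(|S|)$ iterations. For each item $j$ that is genuinely added (which happens at most $|S(\mathcal{T})| \leq |S|$ times over the whole algorithm), we additionally spend $O(|\nu(j)|) = O(cap(j)) \leq O(|B|)$ time computing $U''$ and updating $U'$. The total cost is therefore
\[
O\bigl(|U(\mathcal{T})| \cdot |S|\bigr) + O\bigl(|S| \cdot |B|\bigr) \;=\; O(|B||S|^2) + O(|B||S|) \;=\; O(|B||S|^2),
\]
as required. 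The main thing to be careful about in the write-up is ensuring that the tree invariant (i) is maintained despite multiple for loop iterations within a single while iteration, which is why the check $j \notin V(\mathcal{T})$ and the definition $U'' = \nu(j) \setminus V(\mathcal{T})$ are both essential; everything else is standard BFS bookkeeping.
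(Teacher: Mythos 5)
Your proof is correct and follows essentially the same structure as the paper's: show the construction yields an acyclic, alternating tree (by tracking that every vertex enters $\mathcal{T}$ exactly once and that unit-buyers always enter via matched edges); show the two maximality conditions hold because every $k \in U'$ is fully scanned before removal and every item enters $\mathcal{T}$ together with its full matched neighborhood; and bound the running time by $O(|B||S|^2)$. The only cosmetic difference is in the runtime accounting: the paper charges the full $O(|B||S|)$ cost to each of the at most $|S|$ executions of the for-loop \emph{body}, whereas you separate the $O(|B||S|\cdot|S|)$ neighborhood-scanning cost across all while-iterations from the $O(|S|\cdot|B|)$ cost of item insertions; both decompositions yield the same $O(|B||S|^2)$ bound.
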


\subsection{The AugmentingPath subroutine}\label{sec:main_result:augment}

The final subroutine is \textsc{AugmentingPath}, called in line \ref{alglin:gs_core_mech:augment1} of Algorithm \ref{alg:gs_core_mech}. The subroutine finds the shortest augmenting path from the root $k_0$ of a MAT $\mathcal T$ to a leaf $j$ that is undermatched. 

Note that if an augmenting path $P$ contains two unit-buyer copies $k_1, k_2$ of the same buyer $i$, then it is possible that after augmenting the current matching $\tau$ along $P$, buyer $i$ is no longer matched to a bundle maximizing $u_i(\cdot, p)$ among bundles of size $|\tau(\text{ubuy}(i))|$. This possibility is due to the non-linear nature of gross substitutes valuations, so that by swapping one item, the marginal value of another item may change in an unpredictable manner. We can show that by selecting the shortest such path to augment, we can preserve partial stability while increasing the size of the matching. The main idea of the proof is to use Lemma \ref{lem:alt_cycles} to find shortcuts in the augmenting path; if no shortcuts exist, then we show the augmenting path must preserve partial stability.

\begin{lemma}[Augmenting preserves partial stability]\label{lem:gs_aug_preserves_stability}
    Let $(\tau, p)$ be a partially stable outcome. Let $\mathcal T$ be a MAT rooted at unmatched unit-buyer $k_0$. Let $P$ be the shortest alternating path from $k_0$ to any item $j_0$ in $\mathcal T$ that is not matched, and let $\tau\triangle P$ be the matching obtained by augmenting $\tau$ along $P$. Then, $\tau\triangle P$ is partially stable at $p$.
\end{lemma}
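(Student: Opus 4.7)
The plan is to verify, one buyer at a time, that $(\nu\triangle P)(i)$ is an optimal bundle at its new size for buyer $i$; summing over $i$ then gives partial stability of $\nu\triangle P$. The buyers split according to how many of their unit-buyer copies lie on $P$. If no copy of $i$ is on $P$, the bundle is unchanged. If exactly one non-root copy $k$ of $i$ is on $P$, the augmentation swaps $\nu(k)$ for an item $j'\in F_{\nu,p}(k)$ reached by an unmatched edge; by the ``potential alternative'' clause of the marginal demand graph, this swap preserves $u_i(\cdot,p)$, and since the bundle size is unchanged, the new bundle is still optimal. If $i=\text{buy}(k_0)$ and $k_0$ is the only copy of $i$ on $P$, then the bundle grows by the favorite item $j_1$; here I would invoke a standard greedy-addition property of gross substitutes valuations (a consequence of $M^\natural$-concavity, which I assume is collected in Appendix~\ref{sec:proofs:gs}) asserting that appending a favorite item to a partially stable bundle yields a partially stable bundle of the next size.

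The interesting case is a buyer $i$ with $z\geq 2$ non-root copies $k_{x_1},\dots,k_{x_z}$ on $P$. Let $j_{x_\ell}=\nu(k_{x_\ell})$ and let $j'_{x_\ell}$ denote the other item adjacent to $k_{x_\ell}$ on $P$, which is joined to $k_{x_\ell}$ by an unmatched edge in $D(\nu,p)$ by the alternation of $P$ starting at the unmatched $k_0$. The items $j'_{x_\ell}$ are distinct because $P$ is simple, so the hypotheses of Lemma~\ref{lem:alt_cycles} are met. If case (a) holds, then taking $X=[z]$ shows that the bundle obtained by performing all $z$ swaps simultaneously is optimal at the original size; and if $k_0\in\text{ubuy}(i)$, the additional $+j_1$ from $k_0$'s new match is handled by the same favorite-item argument as above, yielding partial stability for $i$ at the final size.

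The main obstacle is ruling out case (b) of Lemma~\ref{lem:alt_cycles}, which would supply a cycle $(k_{y_1},j'_{y_2},k_{y_2},\dots,k_{y_s},j'_{y_1})$ in $D(\nu,p)\setminus\nu$. The plan is to contradict the minimality of $P$ by splicing in a strictly shorter alternating path $P'$ from $k_0$ to $j_0$. Since the $k_{y_\ell}$ occupy distinct positions on $P$, a pigeonhole argument on the cyclic order of the $y_\ell$'s forces the existence of cyclically consecutive indices $y_\ell,y_{\ell+1}$ with $k_{y_\ell}$ preceding $k_{y_{\ell+1}}$ on $P$. Let $P'$ follow $P$ from $k_0$ to $k_{y_\ell}$, then take the unmatched edge $(k_{y_\ell},j'_{y_{\ell+1}})$ supplied by the cycle, and then rejoin $P$ from $j'_{y_{\ell+1}}$ up to $j_0$. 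Alternation is preserved: the last edge of $P$ into $k_{y_\ell}$ is matched (as $k_{y_\ell}$ is non-root, hence matched in $\nu$), the spliced edge is unmatched, and the first edge of $P$ out of $j'_{y_{\ell+1}}$ is matched. Simplicity follows because $k_{y_\ell}$ precedes $j'_{y_{\ell+1}}$ on the simple path $P$, so the prefix and suffix of $P$ reused in $P'$ are vertex-disjoint. The replaced middle of $P$, from $k_{y_\ell}$ to $j'_{y_{\ell+1}}$, has length at least $3$: length $1$ would force $j'_{y_{\ell+1}}\in\{j_{y_\ell},j'_{y_\ell}\}$, but distinctness of the $j'$s rules out $j'_{y_{\ell+1}}=j'_{y_\ell}$, while the defining condition of the marginal demand graph's ``potential alternative'' clause, $j'_{y_{\ell+1}}\notin\nu(\text{copy}(k_{y_{\ell+1}}))$, combined with $j_{y_\ell}=\nu(k_{y_\ell})\in\nu(\text{copy}(k_{y_{\ell+1}}))$, rules out $j'_{y_{\ell+1}}=j_{y_\ell}$. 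Hence $|P'|<|P|$, contradicting the choice of $P$. Thus case (b) cannot occur, so case (a) must hold, and partial stability of $\nu\triangle P$ follows.
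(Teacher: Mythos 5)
Your proof follows the same Lemma~\ref{lem:alt_cycles}-based skeleton as the paper's, and the shortcut argument in case (b) is correct (your pigeonhole on cyclic order is a harmless variant of the paper's choice of the minimal index, and your explicit ``length~$\geq 3$ replaced by length~$1$'' accounting is a useful expansion of what the paper merely asserts). But there is a genuine gap in the case $i=\mathrm{buy}(k_0)$ with $k_0$ \emph{and} at least one non-root copy on $P$, where you write that ``the additional $+j_1$ from $k_0$'s new match is handled by the same favorite-item argument as above.'' That reasoning performs the non-root swaps first (obtaining a bundle $T'$ of the same size $\ell$ that, by Lemma~\ref{lem:alt_cycles}(a), is still optimal at size $\ell$) and then appends $j_1$. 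For greedy-addition to apply you need $j_1$ to be a best addition to~$T'$, but the ``favorite item'' clause only makes $j_1$ a best addition to the \emph{original} bundle $\nu(\mathrm{ubuy}(i))$, and utility-preserving swaps can change which items are best additions. Concretely, with items $\{a,b,c\}$, all prices $0$, and GS valuation $v(\{a\})=v(\{b\})=v(\{c\})=2$, $v(\{a,b\})=v(\{a,c\})=3$, $v(\{b,c\})=v(\{a,b,c\})=3$ set instead to $v(\{b,c\})=2,\ v(\{a,b,c\})=3$ (this is submodular and well-layered), the bundle $\{a\}$ has $c$ as a best addition ($\{a,c\}$ optimal at size~$2$), the swap $a\to b$ preserves optimality at size~$1$, yet $\{b,c\}$ has value~$2<3$ and is not optimal at size~$2$. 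Your case analysis as written also omits the sub-case ``$k_0$ on $P$ together with exactly one non-root copy,'' which is exactly where this issue first bites.

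The paper avoids this entirely by introducing an auxiliary item $j_a$ with flat marginal utility $u_{i_0}(j_1\mid\nu(\mathrm{ubuy}(i_0)),p)$ and editing $\nu$ and $P$ so that $k_0$ becomes a matched root of a pure swap path $P'=P\cup\{(k_0,j_a)\}$. Then \emph{all} copies of $i$, including $k_0$, are treated simultaneously by Lemma~\ref{lem:alt_cycles}, so there is no ``swap-first, add-later'' ordering and no need for $j_1$ to remain a best addition after the other swaps. If you want to proceed without the auxiliary-item reduction, you would instead have to argue (and prove, since it is not in Appendix~\ref{sec:proofs:gs}) a joint statement along the lines of ``if the augmented bundle is not optimal at size $\ell+1$, then a cycle/chord as in Lemma~\ref{lem:alt_cycles}(b) exists among the copies of $i$ \emph{including} a virtual edge for the addition,'' and then fold that into the same shortcut contradiction. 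One smaller caveat: the ``standard greedy-addition property'' you invoke (every optimal-at-size-$\ell$ bundle extends by a best addition to an optimal-at-size-$(\ell+1)$ bundle) is true for gross substitutes, but it is \emph{not} stated in Appendix~\ref{sec:proofs:gs}; if used, it should be derived from the $M^\natural$-exchange property or from Lemma~\ref{lem:gs_matroid_property}.
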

\begin{proof}{Proof of Lemma \ref{lem:gs_aug_preserves_stability}}
    We wish to show that after an augmentation, each buyer $i$ is matched to a bundle that optimizes $u_i(\cdot, p)$ among bundles of size $|\tau\triangle P(i)|$. Let $(k_0, j_1)$ be the first edge in $P$. Add an auxiliary item $j_a$ with price $p_{j_a} = 0$, and edit the valuation function of $i_0 = \text{buy}(k_0)$ to be $v_{i_0}'$, defined by $v'_{i_0}(T) = v_{i_0}(T)$ if $j_a\not\in T$, and $v'_{i_0}(T) = v_{i_0}(T\setminus \{j_a\}) + u_{i_0}(j_1\mid \tau(\text{ubuy}(i_0)), p)$ if $j_a\in T$. Also define the corresponding utility function $u'_{i_0}(T, p) = v'_{i_0}(T) - \sum_{j\in T}q_{i_0j}(p_j)$. For all $T\subseteq S$ such that $j_a\not\in T$, we have $v'_{i_0}(j_a\mid T) = u'_{i_0}(j_a\mid T, p) = u_{i_0}(j_1\mid \tau(\text{ubuy}({i_0})), p)$. In other words, the marginal value and marginal utility of $j_a$ under $v_{i_0}'$ and $u_{i_0}'$ is always equal to $u_{i_0}(j_1\mid \tau(\text{ubuy}({i_0})), p)$. To see that $v_{i_0}'$ still satisfies gross substitutes, notice that $v'_{i_0}(T)$ is the sum of a gross substitutes and an additive function, which is gross substitutes as well (see, e.g., \cite{de2020geometry}).
    
    Next, edit the current matching to become $\tau' = \tau\cup\{(k_0, j_a)\}$ and the augmenting path $P$ to become the alternating path $P' = P\cup\{(k_0, j_a)\}$. This step is necessary to ensure that every buyer with a unit-buyer in $P$ has the same number of matches before and after augmentation. Before augmenting along $P'$, every buyer $i\ne i_0$ still has the bundle $\tau'(i) = \tau(i)$, which maximizes $u_i(\cdot, p)$ among bundles of size $|\tau'(i)|$. For buyer $i_0$, the bundle $\tau'(i_0) = \tau(i_0)\cup\{j_a\}$ also maximizes $u_{i_0}(\cdot, p)$ among bundles of size $|\tau'(i_0)|$. To see this, we can write
    $$u'_{i_0}(\tau'(i_0), p) = u'_{i_0}(\tau(i_0)\cup\{j_a\}, p) = u'_{i_0}(\tau(i_0), p) + u'_{i_0}(j_a\mid \tau(i_0), p)$$
    and thus
    $$ u'_{i_0}(\tau'(i_0), p)= u'_{i_0}(\tau(i_0), p) + u'_{i_0}(j_1\mid \tau(i_0), p) = u'_{i_0}(\tau(i_0)\cup\{j_1\}, p).$$
    The last term is known to maximize $u_{i_0}(\cdot, p)$ among bundles of size $|\tau'(i_0)|$, by definition of the marginal demand graph. It follows that $\tau'$ is partially stable in the new instance.

    Now we show that the matching remains partially stable in $\tau'\triangle P'$, after augmentation. Consider any buyer $i$ with a single unit-buyer copy $k$ in $P'$. Then, by definition of the marginal demand graph, $u_i(\tau(i), p) = u_i((\tau\triangle P)(i), p)$, so $i$ has optimal utility among bundles of size $|\tau(i)|$.

    Now, consider any buyer $i$ with multiple unit-buyer copies $k_1,\ldots, k_z$ in $P'$, appearing from first to last in that order. We can then apply Lemma \ref{lem:alt_cycles} to this sequence, with matched edges $(k_\ell, j_\ell)\in P'\cap\tau'$ and unmatched edges $(k_{\ell}, j_\ell')\in P'\setminus\tau'$. Note that if $i = i_0$, our use of $j_a, P',$ and $\tau'$ is what allows these edges to exist. Assume for contradiction that $i$ has strictly less utility $u_i(\cdot, p)$ after augmentation along $P'$, so that $\tau'(i)\cup\{j_1',\ldots, j_z'\}\setminus\{j_1,\ldots, j_z\}$ does not maximize $u_i(\cdot, p)$ among bundles of size $|\tau'(i)|$. Then Lemma \ref{lem:alt_cycles} (a) is not true, and so Lemma \ref{lem:alt_cycles} (b) must be true instead. Thus, there is a sequence of distinct indices $x_1,\ldots, x_s$ such that $(k_{x_1}, j'_{x_2}, k_{x_2}, j'_{x_3}, k_{x_3}, \dots, k_{x_s}, j'_{x_1})$ is a cycle in $D(\tau,p)\setminus \tau$. Without loss of generality, let $x_1$ be the smallest index in $x_1,\cdots, x_s$. Then there is an edge $(k_{x_1}, j'_{x_2})$ in $D(\tau, p)$. Thus, there is an alternating path from $k_0$ to $j_0$ that is shorter than $P$, given by following $P$ from $k_0$ to $k_{x_1}$, then going to $j'_{x_2}$, then following $P$ to $j_0$. This path contradicts our selection of $P$ as the shortest such path. We conclude that each buyer $i$ has the same utility after augmentation, which means that $\tau'\triangle P' = \tau\triangle P$ is also partially stable.
 \end{proof}

We can use any existing algorithm for computing a shortest augmenting path, such as that described in \citet{schrijver2003combinatorial}. This result is  well-known , so the proof is omitted.

\begin{lemma}[Shortest augmenting path algorithm]\label{lem:aug_path_alg} \textsc{AugmentingPath} takes in input a root node $k_0$, a matching $\tau$, and a graph $D(\tau, p)$, and in $O(|B||S|^2)$ time outputs a shortest augmenting path.
\end{lemma}

\subsection{Main result}\label{sec:main_result:main_result}

We can now put everything together to establish the correctness and runtime of Algorithm \ref{alg:gs_core_mech}.

\begingroup
\def\thetheorem{\ref{thm:main}}
\begin{theorem}
    In the \textsc{QITU} model, there is a poly-time algorithm that finds a competitive equilibrium.
\end{theorem}
\addtocounter{theorem}{-1}
\endgroup

\begin{proof}{Proof}
    First, we show that the outcome $(\tau, p)$ maintained by Algorithm \ref{alg:gs_core_mech} is partially stable. We begin with $(\tau, p) = (\emptyset, 0)$, which is partially stable. In each iteration of the outer while loop from lines \ref{alglin:gs_core_mech:outer_while}-\ref{alglin:gs_core_mech:outer_endwhile}, we know by Lemma \ref{lem:existmat} that line \ref{alglin:gs_core_mech:findmat1} correctly computes a MAT $\mathcal T$. If all items in the MAT are matched, the algorithm executes the inner while loop from lines \ref{alglin:gs_core_mech:inner_while}-\ref{alglin:gs_core_mech:inner_endwhile}. In each iteration of the inner while loop, we know by Lemma \ref{lem:gs_price_inc} that lines \ref{alglin:gs_core_mech:findpriceinc1}-\ref{alglin:gs_core_mech:findmat2} updates $(\tau, p)$ while preserving partial stability and weakly increasing the number of vertices of $\mathcal T$. By Lemma \ref{lem:gs_aug_preserves_stability}, augmenting $\tau$ in lines \ref{alglin:gs_core_mech:augment1}-\ref{alglin:gs_core_mech:augment2} also preserves partial stability. We see that $(\tau, p)$ remains partially stable throughout.

    Then, the algorithm only terminates if all unit-buyers $k_0$ are matched, and so no MATs exist in $D(\tau, p)$. By Lemma \ref{lem:smo_core_cm}, when combined with partial stability, this implies the output $(\mu, p)$ is a competitive equilibrium. Thus, if the algorithm terminates, it returns a competitive equilibrium.

    Now we  analyze the runtime of Algorithm \ref{alg:gs_core_mech}. Consider each iteration of the inner while loop. By Lemma \ref{lem:gs_price_inc}, either the number of items in $\mathcal T$ increases, or we reach the edge of the current linear domain, and $\textsc{LinSeg}_{i,j}(p_j)$ increases by one for some $(i,j)$. $\mathcal T$ can increase at most $O(|S|)$ times, and we can increment linear domains at most $O(K)$ times, where $K$ is the total number of linear segments in the effective price functions. Each iteration is dominated by the runtime of \textsc{FindPriceIncrease}, which requires $O(|B|^5|S|^5)$ time by Lemma \ref{lem:gs_price_inc}. The total runtime of the inner while loop is thus $O((K+|S|)|B|^5|S|^5)$. In each iteration of the outer while loop, an unmatched unit-buyer $k_0$ is selected in line \ref{alglin:gs_core_mech:outer_while}, then matched along an augmenting path in line \ref{alglin:gs_core_mech:augment2}. Thus, the number of unmatched unit-buyers decreases by one, for at most $O(|B||S|)$ iterations. Each iteration is dominated by the runtime of the inner while loop, for a total runtime of $O((K+|S|)|B|^6|S|^6)$.
 \end{proof}

\section{Structure and Buyer-Optimality}\label{sec:lattice_IC}

We now study the structure of competitive equilibria in the QITU model, as well as properties of Algorithm~\ref{alg:gs_core_mech}. We start by defining a partial order over equilibrium prices and recalling the known fact that under this partial order the set of equilibrium prices forms a lattice.

\begin{definition}[Join and meet]
    Let $p, p'\in \mathbb R^S$ be prices. Their \emph{join} $\overline p = p\vee p'$ is given by $\overline p_j = \max\{p_j, p'_j\}$ for each $j\in S$. Their \emph{meet} $\underline p = p\wedge p'$ is given by $\underline p_j = \min\{p_j, p'_j\}$ for each $j\in S$.
\end{definition}

\begin{definition}[Partial order over equilibria]
    Let $(\mu, p)$ and $(\mu', p')$ be competitive equilibria. The partial order $(\le)$ over competitive equilibria is defined such that $(\mu, p)\le (\mu', p')$ if and only if $p\le p'$ and $u_i(\mu(i), p) \ge u_i(\mu'(i), p')$ for all $i\in B$.
\end{definition}

A result of \citet{schlegel2022structure} implies that the set of competitive equilibria forms a lattice.

\begin{theorem}[\citet{schlegel2022structure}]\label{thm:ce_lattice}
    Let $(\mu, p)$ and $(\mu', p')$ be competitive equilibria. There exist matchings $\overline\mu, \underline\mu$ such that $(\overline\mu, \overline p)$ and $(\underline\mu, \underline p)$ are also competitive equilibria, $(\underline\mu, \underline p)\le (\mu, p)\le (\overline\mu, \overline p)$, and $(\underline\mu, \underline p)\le (\mu', p)\le (\overline\mu, \overline p)$. Thus, the set of competitive equilibrium prices form a complete bounded lattice under the partial order $(\leq)$, and thus has a minimum and maximum element.
\end{theorem}

As we show in  Appendix~\ref{sec:proofs:lattice_IC}, Algorithm~\ref{alg:gs_core_mech} outputs a buyer-optimal equilibrium. 

\begin{theorem}[Algorithm~\ref{alg:gs_core_mech} returns the buyer-optimal competitive equilibrium]\label{thm:min_ce}
    Algorithm~\ref{alg:gs_core_mech} returns the minimum competitive equilibrium.
\end{theorem}
Finally, it has been shown that in many stable matching settings, there exist buyer-optimal mechanisms that are incentive compatible for buyers \citep{dubins1981machiavelli}.
\begin{definition}[Incentive Compatibility]
    A mechanism $M$ for the QITU model is incentive compatible if,
    for any buyer $i \in B$, effective price functions $q_i$, true valuation functions $\{v_{i'}\}_{i' \in B}$,
    and misreported valuation function $\tilde v_i \neq v_i$ and effective price functions $\tilde q_i\ne q_i$,
    $i$ cannot strictly improve their utility by misreporting  $\tilde v_{i}, \tilde q_i$:
    $$v_i(\tilde \mu(i)) - \sum_{j \in \tilde \mu(i)} q_{ij}(\tilde p_j) \leq v_i(\mu(i)) - \sum_{j \in \mu(i)} q_{ij}(p_j),$$
    where $(\tilde \mu, \tilde p)$ and $(\mu, p)$ are the outcomes of $M$ when the inputs are $\{(v_1, q_1), \ldots, (v_{i-1}, q_{i-1}),$ $(\tilde v_i, \tilde q_i),$ $(v_{i+1}, q_{i+1}), \ldots, (v_n, q_n)\}$ and $\{(v_1, q_1), \ldots, (v_{i-1}, q_{i-1}),(v_i, q_i),$ $(v_{i+1}, q_{i+1}), \ldots, (v_n, q_n)\}$.
\end{definition}

\citet{holmstrom1979groves} showed that in a more general setting, any incentive compatible mechanism has payments equivalent to the VCG mechanism, up to a translation. \citet{gul2000english} showed that in a special case of the QITU model, no competitive equilibrium mechanism has payments equivalent to the VCG mechanism. As such, it is known that no competitive equilibrium mechanism is incentive compatible in the QITU model.

\begin{theorem}[\citet{holmstrom1979groves}, \citet{gul2000english}]\label{thm:non_IC}
    No competitive equilibrium mechanism in the QITU model is incentive compatible, even when buyers have perfectly transferable utility, i.e. $q_{ij}(p_j) = p_j$ for all $i\in B, j\in S$, and the effective price functions are public information known to the mechanism designer.
\end{theorem}

\noindent Note that Theorem~\ref{thm:non_IC} does not rule out the existence of incentive compatible mechanisms which do not return a competitive equilibrium. Incentive compatibility is desirable because it simplifies strategic considerations on participants and may lead to more predictable outcomes for designers. In many settings, though,  mechanisms that are not incentive compatible are preferred to existing incentive compatible mechanisms due to efficiency or other properties (see, e.g., \citep{kesten2010school,despotakis2021first}). For example, repeated game settings such as online ad auctions have eschewed second price auctions for non-incentive compatible mechanisms \citep{despotakis2021first}.

\section{Hardness of Non-Quasilinear Utility Settings}\label{sec:negative}

We consider two natural variations of our model, which we call the \textsc{NQBr} model (Non-Quasilinear with Breakpoint) and \textsc{NQBu} (Non-Quasilinear with Budget). Each model allows a single buyer to have non-quasilinear utility. In the \textsc{NQBu} model, a single buyer has a hard budget constraint. 

\begin{definition}\label{def:nqbu}
    In the \textsc{NQBu} model, there is one buyer with utility given by
        $$u_{i_0}(T, p) = v_{i_0}(T) - r_{i_0}\left(\sum_{j\in T}p_j\right), \; \hbox{ where } \; r_{i_0}(p) = \begin{cases}
            p & p\le C \\
            \infty & p > C
        \end{cases}$$
     for some budget $C$ and some gross substitutes valuation function $v_{i_0}$. For $i\ne i_0$, we have $u_i(T) = v_i(T) - \sum_{j \in T} p_j$, where $v_i$ is a unit demand valuation function, a special case of gross substitutes.
\end{definition}

The assumption of unit demand quasilinear utility for the other agents is more restrictive than the one from the \textsc{QITU} model. Yet, even with this assumption, we can show a hardness result. 

\begingroup
\def\thetheorem{\ref{thm:hardness2}}
\begin{theorem}
    Computing a competitive equilibrium in the \textsc{NQBu} model is NP-hard, even when an equilibrium is known to exist.
\end{theorem}
\addtocounter{theorem}{-1}
\endgroup

The \textsc{NQBr} model is similar; except for the unique buyer, the ``value-for-money'' function $r_{i_0}$ is a piecewise linear continuous function with one breakpoint. 

\begin{definition}\label{def:nqbr}
    In the \textsc{NQBr} model, there is one buyer with utility given by
        $$u_{i_0}(T, p) = v_{i_0}(T) - r_{i_0}\left(\sum_{j\in T}p_j\right), \; \hbox{ where } \; r_{i_0}(p) = \begin{cases}
            p & p\le C \\
            C+ \alpha(p-C) & p > C
        \end{cases}$$
     for some scalars $C$ and $\alpha$, and some gross substitutes valuation function $v_{i_0}$. For $i\ne i_0$, we have $u_i(T) = v_i(T) - \sum_{j \in T} p_j$, where $v_i$ is a unit demand valuation function.
\end{definition}

We show that computing a competitive equilibrium is also NP-hard in the \textsc{NQBr} model.

\begingroup
\def\thetheorem{\ref{thm:hardness}}
\begin{theorem}
    Computing a competitive equilibrium in the \textsc{NQBr} model is NP-hard, even when an equilibrium is known to exist.
\end{theorem}
\addtocounter{theorem}{-1}
\endgroup

Proofs from this section are postponed to Appendix \ref{sec:proofs:negative}. In the proofs, we reduce the integer knapsack problem to a specific type of instance within the \textsc{NQBu} or \textsc{NQBr} model. 

While it has been shown that equilibria may not exist in the \textsc{NQBu} setting \citep{mongell1986note, jagadeesan2021matching}, an algorithm that finds an equilibrium or indicates that no equilibrium exists in polynomial time is still of independent interest. Our construction implies that,  even when restricting to domains for which equilibria are known to exist, computing equilibria is NP-hard in both the \textsc{NQBu} and the \textsc{NQBr} model. 

We also note that with non-separable utilities, the equivalence of gross substitutes definitions shown in Lemma~\ref{lem:gs_equals_gskc} no longer holds. See Theorem~\ref{thm:nonsep} in Appendix~\ref{sec:r-gross} for the complete result.

\section{Conclusion}\label{sec:conclusion}

The main result in this paper is a polynomial-time mechanism for computing a bidder-optimal competitive equilibrium in a setting with both gross substitutes and imperfectly transferable utility. This result expands the known range of scenarios in which finding a competitive equilibrium is tractable. In addition, it unifies prior results for the setting with gross substitutes and the setting with imperfectly transferable utility. Our techniques combine ideas from matroid theory with more classical augmenting path and t\^atonnement approaches. To the best of our knowledge our paper is the first that employs algorithms from matroid theory for the computation of equilibria.

A natural question is whether this result can be generalized to an even broader setting. We provide two negative results that highlight extensions where computational hardness arises. These settings capture two relaxations of the separable transfers assumption of our model, where we allow either a hard budget constraint or a single breakpoint in the utility function of a single buyer.

\section*{Acknowledgements}

Christopher En and Yuri Faenza acknowledge the support of the NSF Grant 2046146 \emph{CAREER: An Algorithmic Theory of Matching Markets}. Eric Balkanski acknowledges the support of NSF grants  CCF-2210501 and IIS-2147361.

\newpage
\nocite{*}
\bibliographystyle{apalike}
\bibliography{main}

\newpage
\appendix

\section{Additional Applications}\label{sec:proofs:applications}

\paragraph{Online ad marketplaces.} Consider an online advertising setting where advertisers wish to buy impressions in various online platforms, and different platforms allow for varying levels of user engagement. As described in \citet{dutting2011expressive}, an advertiser may have per-click valuations; that is, they have some value $v_i(c)$ for the number of times $c$ a user clicks on their advertisement on any platform. This value may also have diminishing marginal returns as $c$ increases, motivating the use of gross substitutes valuations. Content platforms such as TikTok or Youtube may charge on a per-impression basis \citep{understand2024}; that is, they charge a price $p_j$ for each time the advertisement is shown to a user, regardless of whether or not a user clicks on the ad. The advertiser's utility then depends on clicks $(c_{ij})$, which are derived from impressions $(d_{ij})$ via a known click-through rate $r_{ij}$ (where $c_{ij} = r_{ij}d_{ij}$), which can vary by platform and advertiser. Thus, the utility of an advertiser $i$ is given by
\[v_i\left(\sum_{j}c_{ij}\right) - \sum_{j}d_{ij}p_j = v_i\left(\sum_j c_{ij}\right) - \sum_j \frac{c_{ij}p_{j}}{r_{ij}}.\]
As discussed in \citet{dutting2009bidder}, the difference between the per-impression price and the per-click utility for the advertiser results in ITU. The reason is that, with a per-impression price $p_j$ on platform $j$, the effective per-click price  for advertiser $i$ is $p_j/r_{ij}$, which can vary by advertiser and platform. In a competitive auction for advertising impressions, some advertisers may thus be able to pay a lower effective per-click price than others.

\paragraph{Departmental budgets.} Suppose a firm has several teams or departments. This setting could be used to describe teams at a lobbying firm, pods at an investment firm, or separate R\&D research groups at a pharmaceutical company. Each team selects a project, such as a political issue to lobby for, an investment category, or a new type of drug to research. These projects can be subject to gross substitutes valuations; for example, two projects involving different experimental cancer treatments may have some substitutabilities. Additionally, each team has a soft budget, as described in \citet{kornai1986soft}; spending beyond the soft cap incurs some additional penalties to the team, such as reducing its budget for the next year. This soft budget can be modeled using ITU. For example, a team $i$ might have a \$1,000,000 soft budget, but each additional dollar spent effectively costs them \$1.30. In this case, the effective cost of expenditures is given by the piecewise linear function
\[q_i(p) = \begin{cases}
  p & p\le 10^6 \\
  10^6 + 1.3(p-10^6) & p>10^6
\end{cases}\]

\section{Remarks on the Model}\label{sec:proofs:prelims}

In this section, we make several notes on the choice of model, compared to previous assignment game models.

\begin{remark}\label{rem:sellers_items}
    Note that previous one-to-one models such as those presented by \citet{shapley1971assignment} and \cite{demange1985strategy} have ``sellers'' instead of ``items,'' where a seller $j$ has some nonnegative reserve value $s_j$ for their endowed item. There is a reduction from the problem of finding a competitive equilibrum in the setting with sellers to the problem of finding a competitive equilibrium in the setting with items. To see this, given an instance of a problem with sellers, consider the following edited instance. Set each seller's reserve price to 0, and let each buyer $i$ with value $v_{ij}$ have a new value $v_{ij} - q_{ij}(s_j)$. Define the new effective price function to be  $q^*_{ij}(p_j) = q_{ij}(p_j + s_j) - q_{ij}(s_j)$. Then, given an equilibrium $(\mu, p)$ in the original instance, we obtain an equilibrium in the edited instance given by $(\mu, p')$ where $p'_j = p_j - s_j$. Similarly, given any equilibrium $(\mu, p')$ in the edited instance, we obtain an equilibrium in the original instance given by $(\mu, p)$, where $p_j = p'_j + s_j$.
\end{remark} 
    
\begin{remark}\label{rem:unit_cap}
    Our model is also equivalent to the setting where an item may have capacity greater than 1, but each buyer can only match to each item once. If an item $j$ has capacity $\ell$, then we can make $\ell$ copies $j_1,\cdots, j_\ell$ of $j$ with unit capacity. Then, each buyer $i$'s valuation function becomes
    \[\tilde v_i(T) = \begin{cases}
        v_i(T) \qquad\qquad\quad j_1,\cdots, j_\ell\not\in T \\
        v_i(T\cup\{j\}\setminus\{j_1,\cdots, j_\ell\}) \quad \text{o.w.}
    \end{cases}\]
    The new valuation function satisfies gross substitutes if the original valuation function satisfies gross substitutes, and equilibria in the two settings coincide.
\end{remark}

We now prove Lemma~\ref{lem:gs_equals_gskc} regarding the equivalence of ITU-gross substitutes and the standard definition of gross substitutes.

\begin{proof}{Proof of Lemma~\ref{lem:gs_equals_gskc}}
    First, let $v_i$ satisfy ITU-gross substitutes. Fix any effective price functions $q = \{q_{ij}\}_{i\in B, j \in S}$ that are  continuous, strictly increasing, and surjective onto $\mathbb{R}$ and price vectors $p\le p'$. Suppose $T\in \mathring F_{p}(i)$. Define $r, r'\in \mathbb R^S$ by $r_j = q_{ij}^{-1}(p_j)$ and $r'_j = q_{ij}^{-1}(p'_j)$ for all $j\in S$. By monotonicity, $r\le r'$ as well. By definition, $F_{r, q}(i) = \mathring F_{p}(i)$ and so $T\in F_{r,q}(i)$. By ITU-gross substitutes, there exists $T'\in F_{r', q}(i)$ such that $T\cap \{j\mid r_j = r'_j\}\subseteq T'$. Since $r_j = r_j'$ if and only if $p_j = p_j'$, and $F_{r', q}(i) = \mathring F_{p'}(i)$, we see that there exists $T'\in  \mathring F_{p'}(i)$ such that $T\cap \{j\mid p_j = p'_j\}\subseteq T'$. This shows that $v_i$ satisfies gross substitutes.
    
    Next, assume $v_i$ satisfies gross substitutes. Fix any price vectors $r, r'$ with $r\le r'$, and suppose $T\in F_{r, q}(i)$. Set $p_j = q_{ij}(r_j)$ and $p'_j = q_{ij}(r'_j)$ for each $j\in S$. By monotonicity, $p\le p'$ as well. By definition, $\mathring F_p(i) = F_{r,q}(i)$, and so $T\in \mathring F_p(i)$. By gross substitutes, there exists $T'\in \mathring F_{p'}(i)$ such that $T\cap \{j\mid p_j = p'_j\}\subseteq T'$. By definition, $F_{r',q}(i) = \mathring F_{p'}(i)$ and $T\cap \{j\mid r_j = r'_j\} = T\cap \{j\mid p_j = p'_j\}$. So, $T'\in F_{r',q}(i)$ and $T\cap \{j\mid r_j = r'_j\}\subseteq T'$, and thus $v_i$ satisfies ITU-gross substitutes as well.
    
\end{proof}

\section{Properties of Gross Substitutes}\label{sec:proofs:gs}

We present some helper results we will need for the next two sections. Throughout the section, we assume that we are given a set of items $S$, a valuation function $v: 2^S\to \mathbb R$, a piecewise linear, continuous, strictly increasing, surjective onto $\mathbb{R}$  effective price function $q_j: \mathbb R\to\mathbb R$ and such that $q(0)=0$, and a utility function $u:2^S\times \mathbb R\to \mathbb R$, given by $u(T, p) = v(T) - \sum_{j\in T}q_j(p_j)$. Note that these are exactly the assumptions of the QITU model.

Next, we define a greedy procedure for selecting a bundle of items. Essentially, the process iteratively selects the item with greatest marginal utility, until no more items can be profitably selected.

\begin{definition}[Greedy procedure]\label{def:greedy_alg}
    Given a price vector $p$ and $\ell\in \mathbb N$, the \emph{greedy procedure} selects a set of items $T$ of size $\ell$ by the following process:
    \begin{enumerate}
        \item Set $T = \emptyset$.
        \item Let $j\in S\setminus T$ be the item not in $T$ that maximizes $u(j\mid T, p)$; break ties arbitrarily.\label{def:greedy:select}
        \item If $|T|< \ell$, set $T\gets T\cup \{j\}$, and return to step \ref{def:greedy:select}. Else, return $T$.
    \end{enumerate}
\end{definition}

Now, several equivalent definitions of gross substitutes will be useful.

\begin{proposition}\label{prop:gs_equivalent}
    The following are equivalent:
    \begin{enumerate}
        \item (GS) $v_i$ satisfies gross substitutes.
        \item (SI) Given any price vector $p$ and any $T\subseteq S$, if $\mathring u_i(T, p)<max_{T'\subseteq S}\mathring u_i(T', p)$, then there exist $X,Y$ with $|X|,|Y|\le 1$ such that $\mathring u_i(T\cup X\setminus Y, p)> \mathring u_i(T, p)$ \citep{gul2000english}.
        \item (GR) Given any price vector $p$ and any $\ell\in \mathbb N$, the greedy procedure finds an optimal solution to $max_{T\subseteq S:|T|=\ell}\mathring u_i(T, p)$. Furthermore, every optimal solution can be found with a suitable tiebreaking rule. \citep{dress1995well}.
        \item (WL) $v$ is submodular, and for any $p$ and any positive integer $\ell$, the greedy procedure finds an optimal solution to $\max_{T\subseteq S, |T|=\ell}\mathring u_i(T, p)$. This property is known as the well-layered property \citep{dress1995well}.
        \item (ISO) $v$ is submodular, and for all sets $T$ and distinct items $j_1, j_2, j_3\not\in T$,
        $$v_i(j_1,j_2\mid T) + v_i(j_3\mid T)\le\max\bigg[v_i(j_1,j_3\mid T) + v_i(j_2\mid T),v_i(j_2,j_3\mid T) + v_i(j_1\mid T)\bigg]$$
        \citep{reijnierse2002verifying}

        \item ($M^\natural$) For all $T, T'\subseteq S$ and $j\in T\setminus T'$,
        $$v(T) + v(T')\le \max\left[v(T\setminus \{j\}) + v(T'\cup\{j\}), \max_{j'\in T'\setminus T}v(T\cup\{j'\}\setminus \{j\}) + v(T'\cup\{j\}\setminus \{j'\})\right]\label{eq:m_natural}$$
        \citep{fujishige2003note}

        \item (DT) $v_i$ is submodular, and for any price vector $p$ and any triple disjoint of sets $\{j\}, T, T'$ with $|T'|\ge 2$, 
        $$\mathring u_i(\{j\}\mid T, p) + \mathring u_i(T'\mid T, p)\le \max_{j'\in T'}\mathring u_i(j'\mid T, p) + \mathring u_i(T'\cup\{j\}\setminus\{j'\}\mid T, p)$$
        \citep{dress1995well}

    \end{enumerate}
\end{proposition}

We can also show that $u_i$ satisfies each of the properties (SI), (GR), (WL), and (DT).

\begin{lemma}\label{lem:gs_si_gr_wl}
    If $v_i$ satisfies gross substitutes, then $u_i$ satisfies (SI), (GR), (WL), and (DT).
\end{lemma}
\begin{proof}{Proof}
    Define the price vector $r$ by setting $r_j = q_{ij}(p_j)$ for each $j\in S$. Then $u_i(\cdot, p) = \mathring u_i(\cdot, r)$. (SI), (GR), (WL), and (DT) on $\mathring u_i(\cdot, r)$ then immediately imply (SI), (GR), (WL), and (DT) on $u_i(\cdot, p)$.
    
\end{proof}

We can now prove Lemma~\ref{lem:ce_fsbl_stable}.

\begin{proof}{Proof of Lemma~\ref{lem:ce_fsbl_stable}}
    First, let $(\mu, p)$ be a competitive equilibrium. By definition, $(\mu, p)$ is feasible. If there was a profitable addition, swap, or drop for buyer $i$ from the initial bundle $\mu(i)$, then the resulting bundle would have greater utility, contradicting the competitive equilibrium. Thus, $(\mu, p)$ is stable.

    Let $(\mu, p)$ be feasible and stable. We wish to show $\mu(i)\in F_p(i)$, which is equivalent to showing that $u_i(\mu(i), p)\ge u_i(T, p)$ for all $T\subseteq S$. Suppose for contradiction that there existed $T$ such that $u_i(\mu(i), p) < u_i(T, p)$. Then, by (SI), there exists $X,Y\subseteq S$ with $|X|,|Y|\le 1$ such that $u_i(\mu(i)\cup X\setminus Y, p) > u_i(\mu(i), p)$. If $|X|=|Y|=1$, this corresponds to a profitable swap. If $|X| = 1$ and $|Y|=0$, this corresponds to a profitable addition. If $|X| = 0$ and $|Y|=1$, this corresponds to a profitable drop, contradicting stability. We conclude that $T$ cannot exist, so $(\mu, p)$ is a competitive equilibrium.

\end{proof}

It is also well-known that gross substitutes is preserved when an additional ``capacity constraint'' is added. The following lemma is folklore, but we include a proof for completeness.

\begin{lemma}[Capacity constraint preserves gross substitutes]\label{lem:gs_capacity}
    Let $v_i:2^S\to\mathbb R$ satisfy gross substitutes. Then, for any fixed constant $\ell$, the function $\tilde v_i:2^S\to\mathbb R$ given by
    \[\tilde v_i(T) = \begin{cases}
    v_i(T) & |T|\le \ell\\
    v_i(T) - M(|T|-\ell) & \text{o.w.}
    \end{cases}\]
    where $M$ is a constant also satisfies gross substitutes. Similarly, $\overline v_i: 2^S\to\mathbb R$ given by
    \[\overline v_i(T) = \begin{cases}
    v_i(T)+M|T| & |T|\le \ell\\
    v_i(T) + M\ell - M(|T|-\ell) & \text{o.w.}
    \end{cases}\]
    satisfies gross substitutes.
\end{lemma}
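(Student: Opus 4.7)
My plan is to observe that both $\tilde v$ and $\overline v$ have the form $v(T) + h(|T|)$ where $h$ is a concave function on the integers, and then prove the general fact that adding a concave function of cardinality to a gross substitutes valuation preserves gross substitutes. For $\tilde v$, I would take $h(k) = -M(k-\ell)^+$, whose discrete marginals $h(k+1)-h(k)$ are $0$ for $k<\ell$ and $-M$ for $k\ge\ell$, which is non-increasing. For $\overline v$, I would take $h(k) = Mk$ for $k\le\ell$ and $h(k)=2M\ell-Mk$ for $k>\ell$, whose marginals are $M$ for $k<\ell$ and $-M$ for $k\ge\ell$, again non-increasing. Thus in both cases $h$ is concave on $\{0,1,\ldots,|S|\}$.

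Next I would establish the general claim: if $v$ is gross substitutes and $h$ is concave, then $\hat v(T) := v(T) + h(|T|)$ is gross substitutes. I would verify this via the (ISO) characterization in Proposition~\ref{prop:gs_equivalent}, which requires checking submodularity plus the three-element exchange inequality. Submodularity of $\hat v$ is immediate: the marginal $\hat v(T\cup\{j\}) - \hat v(T) = v(j\mid T) + (h(|T|+1) - h(|T|))$ is non-increasing in $T$ since $v$ is submodular (which follows from GS) and $h$ is concave on integers.

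For the three-element exchange, the key observation is that for any base set $T$ and distinct $j_1,j_2,j_3\notin T$, both the left-hand side $\hat v(T+j_1+j_2) + \hat v(T+j_3)$ and each of the two terms on the right-hand side $\hat v(T+j_1+j_3)+\hat v(T+j_2)$ and $\hat v(T+j_2+j_3)+\hat v(T+j_1)$ contain the identical cardinality contribution $h(|T|+2) + h(|T|+1)$. These terms cancel when comparing the two sides, reducing the (ISO) inequality for $\hat v$ to the (ISO) inequality for $v$, which holds by Proposition~\ref{prop:gs_equivalent}. Therefore $\hat v$ satisfies (ISO) and hence gross substitutes, and applying this to the explicit $h$'s above yields the lemma.

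There is no real obstacle here, since the key structural observation is that a function depending only on $|T|$ affects all sets of the same size uniformly, and the inequalities used in (ISO) compare sets of equal cardinality on both sides. The only real bookkeeping is the concavity check for the two specific $h$'s, which I addressed in the first paragraph. I chose (ISO) over (SI) or the definition of GS via demand correspondence because (ISO) is phrased as a purely algebraic inequality on $v$, which makes the cancellation of cardinality terms completely transparent.
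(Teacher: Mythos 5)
Your proof is correct, and it takes a genuinely different route from the paper's while resting on the same structural observation. You prove a clean general fact---that adding a discretely concave function of cardinality $h(|T|)$ to a gross substitutes valuation preserves gross substitutes---and verify it via the (ISO) characterization of Proposition~\ref{prop:gs_equivalent}. This is a nice choice: the (ISO) inequality compares, on both sides, terms whose cardinality profiles are identical (one set of size $|T|+2$ and one of size $|T|+1$), so the $h(|T|+2)+h(|T|+1)-2h(|T|)$ contributions cancel and the inequality reduces verbatim to (ISO) for $v$; submodularity of $\hat v$ follows from submodularity of $v$ plus concavity of $h$. The paper instead verifies (WL), the well-layered characterization, separately for $\tilde v$ and $\overline v$: it checks submodularity by direct casework on $|T|$ versus $\ell$, then argues that since $\tilde u$ (resp.\ $\overline u$) differs from $u$ by an additive constant on each fixed-cardinality slice, the greedy procedure returns the same size-$\ell'$ maximizers for both. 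Both arguments hinge on the fact that a perturbation depending only on $|T|$ acts uniformly across same-size sets; your abstraction makes this explicit once and avoids repeating the casework, while the paper's (WL) proof is more self-contained but longer. One minor caveat applying to both proofs: concavity of your $h$ (equivalently, submodularity of $\tilde v$ and $\overline v$) requires $M\ge 0$; the lemma statement says only ``$M$ a constant,'' but $M\ge 0$ is the regime in which the lemma is actually used elsewhere in the paper, and the paper's own submodularity casework implicitly relies on it as well.
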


\begin{proof}{Proof of Lemma \ref{lem:gs_capacity}}
    We will show that (WL) holds for $\tilde v_i$. For submodularity, consider a set $T\subseteq S$ and items $j_1,j_2\not\in T$. If $|T|\le \ell-1$ we have by submodularity of $v_i$ that
    $$\tilde v_i(T\cup\{j_1\}) + \tilde v_i(T\cup\{j_2\})= v_i(T\cup\{j_1\}) +  v_i(T\cup\{j_2\})\ge v_i(T\cup\{j_1,j_2\}) + v_i(T)\ge \tilde v_i(T\cup\{j_1,j_2\}) + \tilde v_i(T).$$
    If $|T|\ge \ell$ then
    $$\tilde v_i(T\cup\{j_1\}) + \tilde v_i(T\cup\{j_2\})= v_i(T\cup\{j_1\}) +  v_i(T\cup\{j_2\}) - 2M(|T|+1-\ell)$$
    $$\ge v_i(T\cup\{j_1,j_2\}) + v_i(T) - 2M(|T|+1-\ell)= \tilde v_i(T\cup\{j_1,j_2\}) + \tilde v_i(T).$$
    Thus, $\tilde v_i$ is submodular. Next, we will show that $\tilde v_i$ is well-layered. Since $v_i$ is well-layered, we know that for $\ell'\le \ell$ the greedy procedure finds an optimal solution to
    $\max_{T\subseteq S,\ |T|=\ell'}u_i(T,p) = \max_{T\subseteq S,\ |T|=\ell'}\tilde u_i(T,p).$ 
    For all $|T|> \ell$, by letting $\tilde u_i(\cdot, p) = \tilde v_i(\cdot) - \sum_{j\in T}q_j(p_j)$, we know that
    $\tilde u_i(T, p) = u_i(T, p) - M(|T|-\ell).$
    As a result, if $u_i(T, p)\ge u_i(T', p)$ for sets $T,T'$ of the same size, then $\tilde u_i(T, p)\ge \tilde u_i(T', p)$ as well. It follows that when $\ell'>\ell$ the greedy procedure finds an optimal solution to
    $\max_{T\subseteq S,\ |T|=\ell'}\tilde u_i(T,p)$
    as well. We conclude that (WL) holds for $\tilde v_i$, and so $\tilde v_i$ satisfies gross substitutes.

    The proof for $\overline v_i$ is identical, so we omit it for brevity.
 \end{proof}

Next, we have the matroid property of gross substitutes valuations, which shows that optimal bundles that also have a fixed size form the bases of a matroid. This proof is similar to a proof of \citet{gul2000english} that the optimal bundles of minimum size form the bases of a matroid; it also follows from ($M^\natural$).

\begin{lemma}[Matroid property]\label{lem:gs_matroid_property}
    Let $v_i: 2^S\to\mathbb R$ satisfy gross substitutes, $\ell \in \mathbb{N}$ with $\ell \leq |S|$ and $p$ any price vector. Then, the family $\arg\max_{T\subseteq S:|T|=\ell} u_i(T, p)$, i.e. the set of bundles maximizing $u_i(\cdot, p)$ among bundles of size $\ell$, form the bases of a matroid.
\end{lemma}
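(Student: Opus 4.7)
The plan is to take $(S,\mathcal I)$ as the matroid, where the bases are
$$\mathcal B_\ell = \{T\subseteq S : |T|=\ell \text{ and } u(T,p)\ge u(T',p) \text{ for all } T'\subseteq S \text{ with } |T'|=\ell\}$$
and $\mathcal I$ is their downward closure. Since all elements of $\mathcal B_\ell$ share size $\ell$, the axioms $\emptyset\in\mathcal I$ and downward-closure of $\mathcal I$ are automatic, so it suffices to establish the basis-exchange property.

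The key tool will be the Fujishige--Yang inequality (Lemma \ref{lem:fujishige_yang}). Given two bases $T_1,T_2\in\mathcal B_\ell$ and an element $j_1\in T_1\setminus T_2$, I would apply that lemma to the pairwise-disjoint triple obtained by taking $(T_1\setminus T_2,\ T_2\setminus T_1)$ as the role of the two equal-size sets and $T_1\cap T_2$ as the role of the conditioning set; the hypothesis $|T_1\setminus T_2|=|T_2\setminus T_1|$ follows from $|T_1|=|T_2|=\ell$. Expanding the marginal values via $v(T\mid S)=v(T\cup S)-v(S)$ and cancelling the common $v(T_1\cap T_2)$ terms on both sides produces
$$v(T_1)+v(T_2) \ \le\ \max_{j_2\in T_2\setminus T_1} \bigl[v(T_1\cup\{j_2\}\setminus\{j_1\}) + v(T_2\cup\{j_1\}\setminus\{j_2\})\bigr].$$

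Because $j_1$ and $j_2$ are merely swapped between the two sets, the quantities $\sum_{j\in T}p_j$ on both sides add up to the same total, so subtracting them transfers the bound verbatim to $u(\cdot,p)$ in place of $v$. Now the two swapped sets on the right both have cardinality $\ell$, hence each has $u$-value at most $u^*:=u(T_1,p)=u(T_2,p)$, giving $\text{RHS}\le 2u^*=\text{LHS}$. Equality must therefore hold for some maximising $j_2\in T_2\setminus T_1$, forcing $u(T_1\cup\{j_2\}\setminus\{j_1\},p)=u^*$; that is, $T_1\cup\{j_2\}\setminus\{j_1\}\in\mathcal B_\ell$, which is the desired basis exchange.

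I do not anticipate a substantive obstacle here: the whole argument rests on recognising that Lemma \ref{lem:fujishige_yang} already encodes matroid-exchange information for gross substitutes valuations. The only mild subtleties are passing to marginals over $T_1\cap T_2$ to satisfy the lemma's disjointness hypothesis, and observing that the linear price terms cancel on each side of the swap so that the inequality for $v$ immediately yields the analogous inequality for $u(\cdot,p)$.
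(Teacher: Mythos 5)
Your proof is correct, and it takes a genuinely different route from the paper's. The paper proves the basis-exchange property via a price-perturbation argument: it first encodes the cardinality-$\ell$ constraint into an auxiliary valuation $\overline v$ using Lemma~\ref{lem:gs_capacity}, then raises the prices of all items outside $T_1\cup T_2\setminus\{j_1\}$ by a small $\epsilon$, invokes the single-improvement property (SI) from Proposition~\ref{prop:gs_equivalent} to extract a profitable local move from $T_1$, rules out additions and drops via the construction of $\overline v$, shows the swap partner must lie in $T_2\setminus T_1$, and finally lets $\epsilon\to 0$ to recover equality. You instead apply the Fujishige--Yang inequality (Lemma~\ref{lem:fujishige_yang}) once, to the disjoint triple $(T_1\setminus T_2,\ T_2\setminus T_1,\ T_1\cap T_2)$, cancel the common conditioning term, observe that the linear price terms are invariant under the swap, and finish with a squeeze argument against $u^*$. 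Your argument is shorter and avoids both the auxiliary valuation and the limiting step; it makes explicit that Lemma~\ref{lem:fujishige_yang} already \emph{is} a basis-exchange statement in disguise, which the paper's route obscures. The only small points worth making explicit in a polished write-up are that $\mathcal B_\ell$ is nonempty because $\ell\le|S|$ (so axiom $\emptyset\in\mathcal I$ is not vacuous), and that the argument presumes $T_1\neq T_2$ (otherwise there is no $j_1$ to exchange, and nothing to prove).
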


\begin{proof}{Proof}
    We show that the basis exchange property holds (see Lemma~\ref{lem:basis_exchange}) for all bundles maximizing $u_i(\cdot, p)$ among bundles of size $\ell$. Let $T_1,T_2$ be bundles maximizing $u_i(\cdot, p)$ among bundles of size $\ell$, such that $u_i(T_1, p) = u_i(T_2,p) = u^*$. Fix $j_1\in T_1\setminus T_2$. Define $\overline v$ as in Lemma \ref{lem:gs_capacity}, for a large constant $M$. Bundles maximizing $u_i(\cdot, p)$ among bundles of size $\ell$ also maximize $\overline u_i(\cdot, p)$, since for all $T\subseteq S$ with $|T|\geq \ell+1$, we have
    $$\begin{array}{lll} \overline v_i(T)& = & \overline v_i(T_1) + (\overline v_i(T)-\overline v_i(T_1)) \\ & = & \overline v_i(T_1) + v_i(T) + M\ell - M(|T|-\ell) - v_i(T_1) - M\ell \\ & = & \overline v_i(T_1) + v_i(T) - v_i(T_1) - M \underbrace{(|T|-\ell)}_{>0} \\ 
    & < & \overline  v_i(T_1)
    \end{array}
    $$
    for $M$ large enough, while for $T\subseteq S$ with $|T|\leq \ell -1$ and $M$ large enough, we have $$
    \overline v_i(T)= v_i(T) + M |T| < v_i(T_1) + M \ell = \overline v_i(T_1).
    $$ 
    
    Consider the price vector $p'$ given by $p'_j = p_j$ if $j\in T_1\cup T_2\setminus \{j_1\}$, and $p'_j = p_j + \epsilon$ otherwise. $T_1$ no longer maximizes $\overline u_i(\cdot, p)$, as it gives utility $\overline u_i(T_1, p') = u^*-\epsilon < u^* = \overline u_i(T_2,p)$. By (SI), there is a utility-improving local move. By construction of $\overline u_i$ if $M$ is large enough and any $\epsilon > 0$ small enough, we cannot profitably add or drop an item, so we must be able to profitably swap an item. There is then some $j_2\not\in T_1$ such that
    $\overline u_i(T_1, p') < \overline u_i(T_1\cup\{j_2\}\setminus \{j_1\}, p')$. 
    Suppose $j_2\not\in T_2$. Then $p'_{j_2} = p'_{j_2} + \epsilon$, which means that
    $$\overline u_i(T_1, p) - \epsilon = \overline u_i(T_1, p')< \overline u_i(T_1\cup\{j_2\}\setminus \{j_1\}, p')= \overline u_i(T_1\cup\{j_2\}\setminus \{j_1\}, p)-\epsilon,$$
    contradicting the fact that $T_1$ maximizes $\overline u_i(\cdot, p)$. Thus, $j_2\in T_2\setminus T_1$. Furthermore, we have that
    $$u_i(T_1, p)-\epsilon = u_i(T_1, p')< u_i(T_1\cup\{j_2\}\setminus \{j_1\}, p')= u_i(T_1\cup\{j_2\}\setminus \{j_1\}, p)$$
    for arbitrarily small $\epsilon > 0$. We see that it must be the case that $u_i(T_1\cup\{j_2\}\setminus \{j_1\}, p) = u_i(T_1, p) = u^*$. Thus, the exchange property holds, and we conclude that the bundles which maximize $u_i(\cdot, p)$ among bundles of size $\ell$ form the bases of a matroid.
 \end{proof}

\section{Proofs for Section \ref{sec:framework}}\label{sec:proofs:ce_cert}

In this section, we provide proofs omitted from Section \ref{sec:framework}. First, we prove Lemma~\ref{lem:partial_stability_alt_defn}.

\begin{proof}{Proof of Lemma~\ref{lem:partial_stability_alt_defn}}
    First, let $(\mu, p)$ be partially stable. Fix a buyer $i$. Then, for every $j_1\in\mu(i)$ and $j_2\not\in\mu(i)$, we have $u_i(\mu(i), p) \ge u_i(\mu(i)\cup\{j_2\}\setminus \{j_1\}, p)$ by definition of partial stability. We also have $u_i(\mu(i)\setminus\{j_1\}, p)\le u_i(\mu(i), p)$ by definition of partial stability. Thus, $i$ has not profitable swaps or drops.

    Now suppose that at $(\mu, p)$, every buyer $i$ has no profitable swaps or drops. Suppose for contradiction that $(\mu, p)$ is not partially stable, so that there exists some set of items $T$ with $|T|\le |\mu(i)|$ such that $u_i(T, p)> u_i(\mu(i), p)$. Consider the ``capacity constrained'' valuation function
     \[\tilde v_i(T') = \begin{cases}
    v_i(T') & |T'|\le |\mu(i)|\\
    v_i(T') - M(|T'|-|\mu(i)|) & \text{o.w.}
    \end{cases}\]
    where $M$ is a large constant, along with the corresponding utility function $\tilde u_i$. By Lemma~\ref{lem:gs_capacity}, $\tilde v_i$ also satisfies gross substitutes. $\tilde u_i$ has the same value as $u_i$ on sets of size at most $|\mu(i)|$, including $\mu(i)$ and $T$. Then by Lemma~\ref{lem:gs_si_gr_wl}, there exist sets of items $X,Y$ with $|X|, |Y|\le 1$ such that $\tilde u_i(\mu_i\cup X\setminus Y, p) > \tilde u_i(\mu(i), p)$. We see by construction that we cannot have $|X| = 1$ and $|Y| = 0$. Also, $|X| = |Y| = 0$ is impossible trivially. If $|X| = |Y| = 1$, then setting $X = \{j_2\}$ and $Y = \{j_1\}$, we have $$u_i(\mu(i), p) =\tilde u_i(\mu(i), p) < \tilde u_i(\mu(i)\cup\{j_2\}\setminus \{j_1\}, p)= u_i(\mu(i)\cup\{j_2\}\setminus \{j_1\}, p),$$ a contradiction. If $|X| = 0$ and $|Y| = 1$, then setting $Y = \{j_1\}$, we have $$u_i(\mu(i)\setminus\{j_1\}, p) = \tilde u_i(\mu(i)\setminus\{j_1\}, p)> \tilde u_i(\mu(i), p) = u_i(\mu(i), p),$$ again a contradiction. We conclude that $(\mu, p)$ is partially stable.
 \end{proof}

Now, we prove Lemma~\ref{lem:smo_core_cm}.

\begin{proof}{Proof of Lemma~\ref{lem:smo_core_cm}}
    First, assume $(\tau, p)$ is a competitive equilibrium such that all unit-buyers are matched. Because a MAT requires an unmatched unit-buyer as a root, a MAT cannot exist in $D(\tau, p)$.

    Now, assume that a MAT does not exist in $D(\tau, p)$. We know by Remark \ref{rem:mat_exists} that if an unmatched unit-buyer exists, a MAT must exist. Since there do not exist any MATs, all unit-buyers must be matched by $\tau$. By definition, we know that stability is implied by partial stability plus unit-buyer perfectness. Since we are assuming partial stability, we have stability, and since we are assuming feasibility, we have that $(\tau, p)$ is a competitive equilibrium.
 \end{proof}

\section{Proofs for Section \ref{sec:FPI}}\label{sec:proofs:FPI}

In this section, we provide proofs omitted from Section \ref{sec:FPI}.

\subsection{Proofs for Section 
\ref{sec:FPI:MATppi}}\label{sec:proofs:FPI:MATppi}

We prove Lemma \ref{lem:envyfree_slopes}, which shows how to verify a MAT-preserving price increase using the slopes of the effective price functions.

\begin{proof}{Proof}
    We wish to show that $(\tau', p+\lambda d)$ is partially stable and $\mathcal T'\subseteq D(\tau', p+\lambda d)$. By Lemma~\ref{lem:partial_stability_alt_defn}, to prove partial stability it is sufficient to show that there are no profitable swaps or drops. Throughout the proof, we assume that $\lambda>0$ is arbitrarily small. Fix a unit-buyer $k$, and let $T = \tau'(\text{copy}(k)\setminus\{k\})$. We will first show that there are no profitable swaps for $k$. That is, for all $(k,j_1)\in \tau'$ and all $j_2\in S\setminus T$, we have $u_k(j_1\mid T, p+\lambda d) \ge u_k(j_2\mid T, p+\lambda d)$. First, assume $d_{j_1} = 0$. Then, for any other item $j_2\in S\setminus T$ we have    
    \begin{align*}
        u_k(j_1\mid T, p+\lambda d) &= v_k(j_1\mid T) - q_{kj_1}(p_{j_1}+\lambda d_{j_1}) \\
        &= v_k(j_1\mid T) - q_{kj_1}(p_{j_1}) \\
        &= u_k(j_1\mid T, p) \\
        &\ge u_k(j_2\mid T, p) \\
        &= v_k(j_2\mid T) - q_{kj_2}(p_{j_2}) \\
        &\ge v_k(j_2\mid T) - q_{kj_2}(p_{j_2} + \lambda d_{j_2}) \\
        &= u_k(j_2\mid T, p+\lambda d)
    \end{align*}
    as desired, where the first inequality follows from partial stability of $(\tau', p)$, the second inequality follows from monotonicity of $q_{kj_2}$, and each equality follows from the definitions of $u$ and $q$. Now assume $d_{j_1} > 0$ and let $j_2\in F_{\tau', p}(k)$. Then $(k, j_1)\in \mathcal T'$ by definition of a MAT-preserving price increase, and so
    $u_k(j_1\mid T, p+\lambda d)\ge u_k(j_2\mid T, p+\lambda d).$
    For $j_2\not\in F_{\tau', p}(k)\cup T$, we have 
    \[u_k(j_1\mid T, p) > u_k(j_2\mid T, p).\]
    Thus, for sufficiently small $\lambda > 0$, it is also true that 
    \[u_k(j_1\mid T, p+\lambda d) >  u_k(j_2\mid T, p+\lambda d).\]
    We conclude that each unit-buyer $k$ has no profitable swaps. Now we will show that each unit-buyer $k$ has no profitable drops. Notice that dropping an item $j_1\in\tau'$ is equivalent to swapping it for a dummy item $j_0$: $u_k(\emptyset \mid T, p+\lambda d) = 0 = u_k(j_0\mid T, p+\lambda d)$. Thus, since $u_k(j_1\mid T, p+\lambda d) \ge u_k(j_0\mid T, p+\lambda d)$, we also know that $u_k(j_1\mid T, p+\lambda d) \ge u_k(\emptyset\mid T, p+\lambda d)$. We see that $(\tau', p+\lambda d)$ is partially stable. 
    
    Now, we will show that the edges of $\mathcal T'$ are contained in the marginal demand graph, which we write as $\mathcal T'\subseteq D(\tau', p+\lambda d)$. Fix $(k,j_1)\in\mathcal T'$ and $j_2\in S\setminus T$ not matched to a copy of $k$. For any $j\in S\setminus T$, we have
    $$u_k(j\mid T, p+\lambda d)= v_k(j\mid T) - q_{kj}(p_{j}+\lambda d_{j}) = v_k(j\mid T) - q_{kj}(p_{j}) - \lambda q'_{kj}(p_{j})d_{j}$$
    where the first equality is by definition and the second is by piecewise linearity. Note that since $\lambda$ is arbitrarily small, all prices are within the same linear domain. Then, since $j_1,j_2\in F_{\tau', p}(k)$, we know that
    $$v_k(j_1\mid T) - q_{kj_1}(p_{j_1}) = u_k(j_1\mid T, p)= u_k(j_2\mid T, p)= v_k(j_2\mid T) - q_{kj_2}(p_{j_2}).$$
    Since $q'_{kj_1}(p_{j_1})d_{j_1} \le q'_{kj_2}(p_{j_2})d_{j_2}$, we know that
    $$v_k(j_1\mid T) - q_{kj_1}(p_{j_1}) - \lambda q'_{kj_1}(p_{j_1})d_{j_1}\ge  v_k(j_2\mid T) - q_{kj_2}(p_{j_2})- \lambda q'_{kj_2}(p_{j_2})d_{j_2}$$
    $$\implies u_k(j_1\mid T, p+\lambda d) \ge u_k(j_2\mid T, p+\lambda d).$$
    Then, for $j_2\not\in F_{\tau', p}(k)$ and not matched to a copy of $k$, we have
    $u_k(j_1\mid T, p) >  u_k(j_2\mid T, p).$
    and as before for sufficiently small $\lambda > 0$ it is also true that
    $u_k(j_1\mid T, p+\lambda d) >  u_k(j_2\mid T, p+\lambda d).$
    Thus, $j_1\in F_{\tau', p+\lambda d}(k)$. We conclude that $\mathcal T'\subseteq D(\tau', p+\lambda d)$, and so $\mathcal T'$ is indeed contained in the marginal demand graph $D(\tau', p+\lambda d)$. \end{proof}

\subsection{Proofs for Section \ref{sec:FPI:matroid}}\label{sec:proofs:FPI:matroid}

First, we have a helper result showing that gross substitutes is preserved with any initial endowment and prices.

\begin{lemma}\label{lem:gs_endowment_prices}
    Let $v_i:2^S\to\mathbb R$ satisfy gross substitutes. Fix any $T\subseteq S$, any price vector $p$ over the items, and $\tilde q_i$ defined by $\tilde q_{ij}(\tilde p_j) := q_{ij}(\tilde p_j + p_j) - q_{ij}(p_j)$. Viewing $u_i(\cdot\mid T, p):2^{S\setminus T}\to \mathbb R$ as a valuation function with corresponding utility function
    $$\tilde u_i(\tilde T, \tilde p) := u_i(\tilde T\mid T, p) - \sum_{j\in \tilde T}\tilde q_{ij}(\tilde p_j) = u_i(\tilde T\mid T, \tilde p),$$  
    then the \emph{valuation} function $u_i(\cdot\mid T, p)$ also satisfies gross substitutes.
\end{lemma}
\begin{proof}{Proof}

    We show that $u_i(\cdot\mid T, p)$ satisfies property ($M^\natural$) from Proposition~\ref{prop:gs_equivalent}. Fix $\tilde T, \tilde T'\in S\setminus T$ and any $j\in \tilde T\setminus \tilde T'$. We can write 
    $$u_i(\tilde T, p) + u_i(\tilde T', p) = v_i(\tilde T\cup T) + v_i(\tilde T'\cup T) - \sum_{j\in \tilde T\cup T}q_{ij}(p_j) - \sum_{j\in \tilde T'\cup T}q_{ij}(p_j)$$
    Since $v_i$ satisfies ($M^\natural$), there exists $R\subseteq \tilde T'\setminus \tilde T$ with $|R|\le 1$ such that
    $$\le v_i(\tilde T\cup T\cup R\setminus \{j\}) + v_i(\tilde T'\cup T\cup\{j\}\setminus R) - \sum_{j\in \tilde T\cup T}q_{ij}(p_j) - \sum_{j\in \tilde T'\cup T}q_{ij}(p_j)$$
    $$ = v_i(\tilde T\cup T\cup R\setminus \{j\}) + v_i(\tilde T'\cup T\cup\{j\}\setminus R) - \sum_{j\in \tilde T\cup T\cup R\setminus \{j\}}q_{ij}(p_j) - \sum_{j\in \tilde T'\cup T\cup\{j\}\setminus R}q_{ij}(p_j)$$
    $$ = u_i(\tilde T\cup R\setminus \{j\}, p) + u_i(\tilde T'\cup \{j\}\setminus R, p)$$
    Thus, $u_i(\cdot\mid T, p)$ also satisfies ($M^\natural$). By Proposition~\ref{prop:gs_equivalent}, $u_i(\cdot\mid T, p)$ also satisfies gross substitutes.   
\end{proof}

It has also been shown that the gross substitutes is preserved under convolution.

\begin{lemma}[Convolution property \cite{lehmann2001combinatorial} \cite{murota1996convexity}]\label{lem:gs_convolution}
    Let $v_1,\ldots, v_z:S\to \mathbb R$ satisfy gross substitutes. Given a set $T\subseteq S$, let $\Pi(T)$ be the set of partitions of $T$ into $z$ groups. Thus, if $\pi=(\pi_1,\dots,\pi_z)\in \Pi(T)$, then $\bigcup_{x = 1}^z \pi_x = T$, and $\pi_x\cap \pi_y = \emptyset$ for $x\ne y$. Then, the function $v^*:S\to\mathbb R$ given by
    \[v^*(T) = \max_{\pi\in \Pi(T)}\sum_{x=1}^z v_x(\pi_x(T))\]
    also satisfies gross substitutes.
\end{lemma}

Now we prove Lemma~\ref{lem:MB_matroid}. Recall that we are given a price vector $p$, a partially stable outcome $(\tau,p)$ and a MAT ${\mathcal T}$ in $D(\tau,p)$. 

\begin{proof}{Proof of Lemma \ref{lem:MB_matroid}}
    Let $\mu$ be the one-to-many projection of $\tau$, and let $\hat\mu := \mu\setminus B(\mathcal T)\times S(\mathcal T)$ be the subset of $\mu$ that do not include buyers and items in $\mathcal T$. Also let $\mu' = \mu\setminus\hat\mu = \mu\cap B(\mathcal T)\times S(\mathcal T)$ be the subset of $\mu$ restricted to the buyers and items in $\mathcal T$. Let $\mathcal J_B$ be as in the definition of the lemma. Then clearly $\mathcal J_B$ can be alternatively defined as follows:
    \begin{equation}\mathcal J_B=\arg\max_{E \in \mathcal{E}} \sum_{i \in B'(\mathcal T)} u_i(S(E(i)) \mid \mu'(i), p),\label{eq:J-B_alt}\end{equation}
    where $\mathcal E = \prod_{i \in B'({\mathcal T})} {\mathcal E}_i$ and $E(i) =\{(i,j) \in E\}$.

    For $T\subseteq S(\mathcal T)$ and $i\in B'(\mathcal T)$, define 
    $$\tilde v_i(T) = \begin{cases}
    v_i(T\mid \hat\mu(i))  & \hbox{if } |T|\le |\mu'(i)|,\\
    v_i(T\mid \hat\mu(i)) - M(|T|-|\mu'(i)|) & \text{otherwise,}
    \end{cases}$$
    where $M$ is a large constant to be fixed. We know by Lemma~\ref{lem:gs_endowment_prices} and Lemma~\ref{lem:gs_capacity} that $\tilde v_i$ satisfies gross substitutes. 
    Then, for a set of edges $E\subseteq W(\mathcal T)$, define
    \[\hat v_i(E) := \tilde v_i(E(i)), \]
    \[\hat u_i(E) := \hat v_i(E) - \sum_{(i,j): (i,j)\in E}q_{ij}(p_j).\]
    Similarly, for $E \subseteq W({\mathcal T})$ define
\begin{equation}\label{eq:v*_new}v^{*}(E) := \max_{\pi\in \Pi(E)}\sum_{i\in B'(\mathcal T)}\hat v_i(\pi_i)\end{equation}    \begin{equation}\label{eq:ustar}u^*(E) := v^*(E) - \sum_{(i,j)\in E}q_{ij}(p_j).\end{equation}   
    We claim that
    \begin{equation}\label{eq:J-B} {\mathcal J}_B=\arg\max_{E \subseteq W({\mathcal T}) : |E| = |\mu'|} u^*(E).\end{equation}
    
    First, let $E\subseteq W({\mathcal T})$ satisfy $ |E| = |\mu'|$. An optimal partition $\pi\in \Pi(E)$ maximizing \eqref{eq:ustar} satisfies
    \begin{equation}\label{eq:right-size}|\pi_i| = |\mu'(i)|.\end{equation}
    This equality holds because all $|\mu'| = \sum_{i\in B'(\cal T)}|\mu'(i)|$ items in $E$ must be matched, and by construction of $\hat v_i$, any buyer who is allocated strictly more than $|\mu'(i)|$ elements of $E$ decreases their utility when we choose $M$ large enough. Now, consider the set of edges $\mu'$ and its partition $\pi^* \in \Pi(\mu')$ given by $\pi^*_i = \mu'(i)$ for $i \in B'({\mathcal{T}})$. That is, $\pi^*$  assigns to each buyer $i$ exactly the edges of the matching $\mu$ that are contained in the MAT ${\cal T}$ and are incident to $i$. By partial stability of $\tau$, we know that
    \[\hat u_i(\pi^*_i))\ge \hat u_i(E_i)\]
    for all $E_i\subseteq W(\mathcal T)$ with $|E_i| \le |\mu'(i)|$. Thus, we also know that
    \[u^{*}(\mu') =  \sum_{i\in B'(\mathcal T)} \hat u_i(\pi^*_i) \ge \sum_{i\in B'(\mathcal T)} \hat u_i(\pi_i)\]
    for all $E\subseteq W(\mathcal T)$ with $|E| = |\tau\cap\mathcal T|$ and all partitions $\pi \in \Pi(E)$. It follows that for any $E$ that achieves the maximum in the right-hand side of~\eqref{eq:J-B} with optimal partition $\pi$, we cannot have $(i',j)\in \pi_i$, where $i'\ne i$. Thus, $E$ also maximizes \eqref{eq:J-B_alt}, and is an element of $\mathcal J_B$.
    
    Now assume $E$ is an element of $\mathcal J_B$. Then, $E(i)$ achieves maximum utility for buyer $i$'s utility $u_i(\cdot\mid \hat\mu(i), p)$, among all bundles of size $|\mu'(i)|$. Consider the partition $\pi^*$ over $E$ given by $\pi^*_i = E(i)$. Then,
    $$\hat u_i(\pi^*_i) = u_i(E(i)\mid \mu'(i), p)\ge u_i(E'(i)\mid \mu'(i), p)\ge \hat u_i(\pi'_i)$$
    for any set of edges $E'\subseteq W(\mathcal T)$ and any partition $\pi'$ of $E'$. We see that $E(i)$ maximizes $\hat u_i(\cdot)$ among sets of size $|\mu'(i)|$. It follows that $E$, via the partition $\pi^*$, maximizes $u^*(\cdot)$ among sets of size $|\mu'|$. We conclude that \eqref{eq:J-B} holds.

    It then suffices to show that the right-hand side of~\eqref{eq:J-B} is the set of bases of a matroid. Suppose for the moment that $\hat v_i$ satisfies gross substitutes for $i \in B'(\mathcal T)$. Then $v^*$ satisfies gross substitutes by Lemma~\ref{lem:gs_convolution}. Hence, we can apply Lemma~\ref{lem:gs_matroid_property} to conclude that $\arg\max_{E \subseteq W({\mathcal T}) : |E| = |\mu'|} u^*(E)$ form the bases of a matroid.

    Thus, let us show that, for $i \in B'(\mathcal T)$, the valuation function $\hat v_i$ satisfies gross substitutes. To do this, we  show that the property (ISO) from Proposition \ref{prop:gs_equivalent} holds. First, we observe $\hat v_i$ is submodular. Fix any $E\subseteq W(\mathcal T)$, and distinct edges $(i_1,j_1),(i_2,j_2),(i_3,j_3)\in W(\mathcal T)\setminus E$. Let $\overline E = \{j\mid (i,j)\in E\}$. We consider several cases. 
    
    \textbf{Case 1}: $i_1 = i_2 = i_3 = i$. Then, we can write
    
    \noindent\begin{minipage}{\textwidth}
        $$\hat v_i((i_1,j_1),(i_2,j_2)\mid E) + \hat v_i((i_3,j_3)\mid E)= \tilde v_i(j_1,j_2\mid \overline E) + \tilde v_i(j_3\mid \overline E) $$
        $$\le \max\bigg[\tilde v_i(j_1,j_3\mid \overline E) + \tilde v_i(j_2\mid \overline E),\tilde v_i(j_2,j_3\mid \overline E) + \tilde v_i(j_1\mid \overline E)\bigg], $$
        \begin{equation}\label{eq:from-proof-1}\end{equation}
    \end{minipage}
    where the first equality is by definition of $\hat v$ and the last inequality is by (ISO) on the function $\tilde v_i$. Then as desired we can rewrite the rightmost term in~\eqref{eq:from-proof-1} as
    $$\max\bigg[\hat v_i((i_1,j_1),(i_3,j_3)\mid E) + \tilde v_i((i_2,j_2)\mid E),\hat v_i((i_2,j_2),(i_3,j_3)\mid E) + \tilde v_i((i_1,j_1)\mid E)\bigg].$$

    \textbf{Case 2}: $i_1\ne i$, $i_2 = i_3 = i$. Then, we have
    $$\hat v_i((i_1,j_1),(i_2,j_2)\mid E) + \hat v_i((i_3,j_3)\mid E)= \tilde v_i(j_2\mid \overline E) + \tilde v_i(j_3\mid \overline E)$$
    $$\le \max\bigg[\tilde v_i(j_3\mid \overline E) + \tilde v_i(j_2\mid \overline E),\tilde v_i(j_2,j_3\mid \overline E) + \tilde v_i(j_1\mid \overline E)\bigg]$$
    $$=  \max\bigg[\hat v_i((i_1,j_1),(i_3,j_3)\mid E) + \hat v_i((i_2,j_2)\mid E),\hat v_i((i_2,j_2),(i_3,j_3)\mid E) + \hat v_i((i_1,j_1)\mid E)\bigg]$$
    The inequality follows since the first element in the $\max$ is equal to the LHS of the inequality.

    \textbf{Case 3}: $i_2\ne i$, $i_1 = i_3 = i$. This case follows from a symmetric argument to the previous case.

    \textbf{Case 4}: $i_3\ne i$, $i_1 = i_2 = i$. Then, we have
    $$\hat v_i((i_1,j_1),(i_2,j_2)\mid E) + \hat v_i((i_3,j_3)\mid E) = \tilde v_i(j_1,j_2\mid \overline E)$$
    $$\le \tilde v_i(j_1\mid \overline E) + \tilde v_i(j_2\mid \overline E)$$
    $$\le  \max\bigg[\hat v_i((i_1,j_1),(i_3,j_3)\mid E) + \hat v_i((i_2,j_2)\mid E),\hat v_i((i_2,j_2),(i_3,j_3)\mid E) + \hat v_i((i_1,j_1)\mid E)\bigg]$$
    where the middle inequality follows from submodularity of $v_i$. 

    \textbf{Case 5}: $i_1, i_2\ne i$, $i_3 = i$. Then, we have 
    $$\hat v_i((i_1,j_1),(i_2,j_2)\mid E) + \hat v_i((i_3,j_3)\mid E) = \tilde v_i(j_3\mid \overline E)$$
    $$ = \hat v_i((i_1,j_1),(i_3,j_3)\mid E) + \hat v_i((i_2,j_2)\mid E)$$
    $$\le  \max\bigg[\hat v_i((i_1,j_1),(i_3,j_3)\mid E) + \hat v_i((i_2,j_2)\mid E),\hat v_i((i_2,j_2),(i_3,j_3)\mid E) + \hat v_i((i_1,j_1)\mid E)\bigg]$$
    where the middle equality follows because $(i_1, j_1)$ and $(i_2, j_2)$ always have marginal value zero for $\hat v_i$ when $i_1, i_2\ne i$.

    \textbf{Case 6}: $i_1, i_3\ne i$, $i_2 = i$ or $i_2, i_3\ne i$, $i_1 = i$. These cases follow from substantially similar arguments to the previous case. 

    \textbf{Case 7}: $i_1, i_2, i_3\ne i$. Then, we can write
    $$\hat v_i((i_1,j_1),(i_2,j_2)\mid E) + \hat v_i((i_3,j_3)\mid E) = 0$$
    $$ = \max\bigg[\hat v_i((i_1,j_1),(i_3,j_3)\mid E) + \hat v_i((i_2,j_2)\mid E),\hat v_i((i_2,j_2),(i_3,j_3)\mid E) + \hat v_i((i_1,j_1)\mid E)\bigg]$$
    since the marginal value of $(i_1, j_1)$, $(i_2, j_2)$, and $(i_3, j_3)$ is always zero under $\hat v_i$ when $i_1, i_2, i_3 \ne i$. We see that in all cases, (ISO) holds. We conclude that $\hat v_i$ satisfies gross substitutes. Thus, the elements of \eqref{eq:J-B} form the bases of a matroid.
 \end{proof}

We next prove Lemma \ref{lem:M1_independence_oracle}, and show that an independence oracle for $M_B$ can be implemented efficiently. The oracle, described formally in Algorithm \ref{alg:M1_ind}, proceeds as follows. Let $E\subseteq W({\mathcal T})$ be the input set and $\mu'$ be obtained by restricting the one-to-many projection $\mu$ of $\tau$ to edges in $\mathcal T$, where $(\tau,p)$ is the current partially stable outcome. Using Lemma \ref{lem:MB_matroid}, we show that $\mu'$ is a basis of $M_B$. The main idea is to use the exchange property (Lemma~\ref{lem:basis_exchange}) to iteratively replace in $\mu'$ a buyer-item pair $(i,j)\in E\setminus \mu'$ with a buyer-item pair $(i',j') \in \mu'\setminus E$. If this exchange is possible, the algorithm updates $\mu'$ and repeats until $E$ is contained in the basis $\mu'$ (line \ref{alglin:M1_ind:update_mu'}), thus showing that $E$ is independent. If there is no such exchange, the algorithm concludes that $E$ is not independent, and terminates (line \ref{alglin:M1_ind:return_false}). In the proof of the correctness of Algorithm~\ref{alglin:M1_ind:endwhile}, we are going to use extensively the alternative definition of ${\cal J_B}$ given in~\eqref{eq:J-B} and the related notions introduced in the proof of Lemma~\ref{lem:MB_matroid}.

\smallskip
\begin{algorithm}[H]
    \caption{\textsc{$M_B$-Independence}}
    \label{alg:M1_ind}
    \begin{algorithmic}[1]
        \Require Partially stable outcome $(\tau, p)$, MAT $\mathcal T$ where all items are matched, set of edges $E\subseteq B'(\mathcal T)\times S(\mathcal T)$.
        \Ensure $True$ if $E\in \mathcal I_B$ or $False$ otherwise.
            \State $\mu'\gets\{(\text{buy}(k), j)\mid (k,j)\in \tau\cap\mathcal T\}$ \label{alglin:M1_ind:mu}
            \While{$E\not\subseteq \mu'$\label{alglin:M1_ind:while}}
                \State $(i,j)\gets$ any edge in $E\setminus\mu'$\label{alglin:M1_ind:select_edge}
                \If{$\exists (i',j')\in \mu'\setminus E$ s.t. $u^{*}(\mu') = u^{*}(\mu'\cup\{(i',j')\}\setminus\{(i,j)\})$\label{alglin:M1_ind:if}}\Comment{check exchange}
                    \State $\mu'\gets \mu'\cup \{(i',j')\}\setminus \{(i,j)\}$ \label{alglin:M1_ind:update_mu'}\Comment{make exchange}
                \Else
                    \State\Return $False$ \label{alglin:M1_ind:return_false}
                \EndIf
            \EndWhile \label{alglin:M1_ind:endwhile}
        \State \Return $True$
    \end{algorithmic}
\end{algorithm}
\smallskip

\begin{proof}{Proof of Lemma \ref{lem:M1_independence_oracle}}
    We first claim that $\mu' = \{(\text{buy}(k), j)\mid (k,j)\in\tau\cap \mathcal T\}$ (as defined in line 1 of Algorithm~\ref{alg:M1_ind}) is a basis of $M_B$. To see this, we can write using the definitions of $u^*$, $v^*$, $\hat \mu$, and $\tilde v$ that
    
    \begin{align*}
        u^{*}(\mu') &= v^{*}(\mu') - \sum_{(i,j)\in \mu'}q_{ij}(p_j) \\
        &= \sum_{i\in B'(\mathcal T)}\left(\tilde v_i(\mu'(i)) - \sum_{j\in\mu'(i)} q_{ij}(p_j)\right) \\
        &= \sum_{i\in B'(\mathcal T)}\left(v_i(\mu'(i) \mid \hat\mu(i)) - \sum_{j\in \mu'(i)} q_{ij}(p_j)\right) \\
        &= \sum_{i\in B'(\mathcal T)}u_i(\mu'(i) \mid \hat\mu(i), p). \\
    \end{align*}
    
    Then, for $T_i'\subseteq S(\mathcal T)$ with $|T_i'| = |\mu'(i)|$, we have $|T_i'\cup \hat \mu(i)|=|\mu(i)|$ and by partial stability of $(\tau,p)$, we deduce
    
    $$u_i(\mu'(i) \mid \hat\mu(i), p) = u_i (\mu(i),p) - u_i(\hat\mu(i), p) \geq u_i(T_i'\cup \hat \mu(i),p) - u_i(\hat\mu(i), p) = u_i(T'_i \mid \hat\mu(i), p).$$ 
        Let $E'$ be a basis of $M_B$. By~\eqref{eq:right-size}, $E'$ is the disjoint union of a collection of sets $\{T_i'\mid i\in B'(\mathcal T)\}$ with $|T_i'| = |\mu'(i)|$ for all $i$. Thus, we have 
    \begin{align*}
        u^{*}(\mu') &= \sum_{i\in B'(\mathcal T)}u_i(\mu'(i) \mid \hat\mu(i), p) \\
        &\ge \sum_{i\in B'(\mathcal T)}u_i(T'_i \mid \hat\mu(i), p) \\
        &= \sum_{i\in B'(\mathcal T)}\left(v_i(T'_i \mid \hat\mu(i)) - \sum_{j\in T'_i} q_{ij}(p_j)\right) \\
        &= \sum_{i\in B'(\mathcal T)}\left(\hat v_i(E'_i) - \sum_{(i,j)\in E_i'}q_{ij}(p_j)\right) \\
        &= v^{*}(E') - \sum_{(i,j)\in E'}q_{ij}(p_j) \\
        &= u^{*}(E'),
    \end{align*}
    
    where $E'_i = \{(i,j)\mid j\in T'_i\}$ and $E' = \bigcup_{i\in B'(\mathcal T)}E'_i$. It follows that that $\mu'$ is a basis of $M_B$.
    
    Suppose first $E$ is independent. Then $E\subseteq E'$ for some basis $E'$. By the basis exchange property (Lemma~\ref{lem:basis_exchange}) for each $(i',j')\in E\setminus \mu'$ there is some $(i,j)\in \mu'\setminus E'$ such that $\mu'\cup\{(i',j')\}\setminus\{(i,j)\}$ is also a basis. To find such a pair $(i,j),(i',j')$, following~\eqref{eq:J-B}, it suffices to verify if $\mu'\cup\{(i',j')\}\setminus\{(i,j)\}$ is an optimal set of edges for $u^{*}$. If this is true, then the if statement in line \ref{alglin:M1_ind:if} activates, and setting $\mu'\gets \mu'\cup\{(i',j')\}\setminus\{(i,j)\}$ in line \ref{alglin:M1_ind:update_mu'} preserves the fact that $\mu'$ is a basis of $M_B$ and strictly decreases $|E\setminus \mu'|$. Thus, the while loop repeats $O(|E|)=O(|B||S|)$ times until $E\subseteq \mu'$, when the algorithm completes the while loop and correctly returns $True$.
    
    Now suppose $E$ is not independent. Since, as argued above, every repetition of line \ref{alglin:M1_ind:if} decreases $|E\setminus \mu'|$ while preserving the fact that $\mu'$ is a basis of ${\mathcal J}_B$, at some iteration there is no selection of $(i,j)$ that makes line \ref{alglin:M1_ind:if} true. Thus, the algorithm moves to line 7 and correctly returns $False$.

    Now we analyze the runtime of the algorithm. Line \ref{alglin:M1_ind:mu} requires $O(|B||S|)$. The while loop from lines \ref{alglin:M1_ind:while}-\ref{alglin:M1_ind:endwhile} requires at most $O(|B||S|)$ iterations, since in each iteration it either terminates in line \ref{alglin:M1_ind:return_false} or increases $|E\cap\mu'|$ by 1. The if statement in line \ref{alglin:M1_ind:if} requires $O(|B||S|)$ time to verify, by enumeration. We conclude that the total runtime is $O(|B|^2|S|^2)$. 
 \end{proof}

Now, we can prove Lemma \ref{lem:min_weight_common_basis}, showing that the minimum weight common basis problem correctly solves the minimum weight partially stable perfect matching problem.

\begin{proof}{Proof of Lemma \ref{lem:min_weight_common_basis}}
    We again refer to the alternative definition of $\mathcal J_B$ from equation~\eqref{eq:J-B} and the related notions introduced in the proof of Lemma~\ref{lem:MB_matroid}. Notice that the original partially stable matching $\tau$, when restricted to $\tau\cap\mathcal T$, is such that its one-to-one projection $\mu'$ belongs to $\mathcal J_B$. Indeed, by definition of partial stability, $\mu'(i)$ gives bidder $i$ the maximum value for $u_i(\cdot\mid \hat\mu(i), p)$ among sets of size $|\mu'(i)|$, where $\hat \mu = \mu\setminus B(\mathcal T)\times S(\mathcal T)$. It follows that any element $\mu''\in\mathcal J_B$ must achieve the same value of $u_i(\cdot\mid \hat\mu(i), p)$ for each $i$, and thus $\mu''\cup (\mu\cap\mathcal T)$ is a partially stable matching, aside from the fact that items may be allocated more than once. It follows from Lemma \ref{lem:M2_basis} and the definition of partial stability that $\mu^*$ is exactly a minimum weight partially stable matching on $W(\mathcal T)$. Then, by definition, $\tau^*$ is also partially stable on $V(\mathcal T)\setminus \{k_0\}$. Furthermore, for any one-to-one matching $\tau\subseteq U'(\mathcal T)\times S(\mathcal T)$, we know that $\tau$ has the same weight as its one-to-many projection $\{(\text{buy}(k), j)\mid (k,j)\in\tau\}$. It follows that $\tau^*$ is also of minimum weight.
 \end{proof}

To solve the minimum weight common basis problem, we can use any existing algorithm for the maximum weight matroid intersection problem, such as those presented by \citet{lawler1975matroid} and \citet{frank1981weighted}.

\begin{lemma}[Max weight matroid intersection algorithm \cite{lawler1975matroid}, \cite{frank1981weighted}]\label{lem:max_weight_matroid_int_alg}
    There exists an algorithm \textsc{MatroidIntersection} which, given matroid independence oracles for $M_B$ and $M_S$, can find a maximum weight matroid intersection in $O(|B|^3|S|^3)$ oracle calls.
\end{lemma}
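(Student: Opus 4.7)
The plan is to invoke a classical algorithm for weighted matroid intersection and verify that its runtime, when instantiated on our particular matroids $M_B$ and $M_S$, matches the claimed bound. The statement itself is essentially a corollary of the work of \citet{lawler1975matroid} and \citet{frank1981weighted}, so there is no new technical machinery to develop; the proof reduces to plugging parameter sizes into known runtime bounds for an off-the-shelf routine.

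The first step is to bound the size of the ground set: $W(\mathcal T) = B'(\mathcal T)\times S(\mathcal T)$ has cardinality at most $|B|\cdot|S|$. The classical weighted matroid intersection algorithms access the two input matroids only through independence oracles, and Lemmas \ref{lem:M1_independence_oracle} and \ref{lem:M2_independence_oracle} already supply such oracles for $M_B$ and $M_S$. Invoking the standard analysis (see, e.g., Lawler's primal–dual algorithm), the algorithm terminates after $O(n^3)$ oracle queries on a ground set of size $n$. Substituting $n = |W(\mathcal T)| \le |B||S|$ yields $O(|B|^3|S|^3)$ oracle calls, which is exactly the claim.

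A minor bookkeeping point to address: the lemma is phrased for the maximum-weight version, while the subsequent application in Lemma \ref{lem:min_weight_common_basis_alg} concerns a minimum-weight problem. This is resolved in the standard way by negating weights, an operation that does not affect the asymptotic number of oracle calls. One should also note that the routines of \citet{lawler1975matroid} and \citet{frank1981weighted} return a maximum-weight \emph{common independent set}, not a common basis; to force a basis, the usual trick is to add a large additive constant to every weight so that any maximum-weight common independent set is automatically of maximum cardinality, i.e., a common basis whenever one exists. Since $M_S$ has a basis (Lemma \ref{lem:M2_basis}) and $M_B$ does as well (the many-to-many projection of $\nu\cap {\mathcal T}$ is one), a common basis is guaranteed to exist in our setting.

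Honestly there is no substantive obstacle here; the only care required is to confirm that the invoked matroid-intersection algorithm genuinely achieves $O(n^3)$ oracle queries rather than a weaker polynomial bound, and to verify that the interface assumed by the black-box routine (namely, independence oracles plus real-valued edge weights) is precisely what we provide. Once these checks are made, the lemma follows immediately.
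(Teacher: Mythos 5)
Your proposal is correct and takes essentially the same approach as the paper, which states this lemma as a direct citation to Lawler (1975) and Frank (1981) and relies on the standard $O(n^3)$ oracle-call bound for weighted matroid intersection applied to the ground set $W(\mathcal T)$ of size at most $|B||S|$. Your auxiliary remarks about negating weights to handle minimization and shifting all weights by a large constant to force a common basis anticipate exactly the transformation the paper carries out separately in the proof of Lemma~\ref{lem:min_weight_common_basis_alg}, so they are accurate but strictly speaking not needed to establish the lemma at hand.
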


Then, we can transform the weights to change a minimum weight common basis problem into a maximum weight matroid intersection problem, to prove Lemma \ref{lem:min_weight_common_basis_alg}.

\begin{proof}{Proof of Lemma~\ref{lem:min_weight_common_basis_alg}}
    Define weights $w$ over the edges in $W(\mathcal T)$ given by
    $$w'((i,j)) = \log q'_{ij}(p_j) - \min_{(i',j')\in W(\mathcal T)}\log q'_{i'j'}(p_{j'})$$
    $$w((i,j)) = |S(\mathcal T)|\left(\max_{(i',j')\in W(\mathcal T)}w'((i',j'))\right)+ 1 - w'((i,j)).$$
    These weights are the typical transformation used to turn a minimum weight maximum cardinality matching problem into a maximum weight matching problem, by first translating and reflecting the weight of each edge to be negative then adding a large constant to the weight of each edge. We see that for $E, E'\subseteq W(\mathcal T)$ if $|E|>|E'|$, then $w(E) > w(E')$. Thus, any maximum weight intersection must be of maximum cardinality. Since we know there exists a common basis of $M_B$ and $M_S$, i.e. the original matching $\tau$, any maximum weight intersection must also have size at least $|\tau|$, and thus is a basis of both $M_B$ and $M_S$. We see that by applying \textsc{MatroidIntersection} to the weights given by $w$, we obtain a minimum weight common basis.
 \end{proof}

\subsection{Proofs for Section \ref{sec:FPI:duality}}\label{sec:proofs:FPI:duality}

First, we have a helper result by \citet{fujishige2003note}.

\begin{lemma}[\cite{fujishige2003note}]\label{lem:fujishige_yang}
    Let $v_i: 2^S \rightarrow \mathbb{R}$ satisfy gross substitutes. Let $T_1,T_2,T_3$ be disjoint sets with $|T_1| = |T_2|$. Then, for any $j\in T_1$, we have

    $$v_i(T_1\mid T_3) + v(T_2\mid T_3)\le \max_{j'\in T_2\setminus T_1}v_i(T_1\cup\{j'\}\setminus\{j\}\mid T_3)+ v_i(T_2\cup\{j\}\setminus\{j'\}\mid T_3).$$
    In particular, when $|T_1| = |T_2| = 2$, we can write $T_1 = \{j_1, j_2\}$ and $T_2 = \{j_3, j_4\}$ and obtain

    $$v_i(j_1, j_2\mid T_3) + v_i(j_3,j_4\mid T_3)\le \max\bigg[v_i(j_1,j_3\mid T_3)+v_i(j_2, j_4\mid T_3),v_i(j_1,j_4\mid S)+v_i(j_2, j_3\mid T_3)\bigg].$$
\end{lemma}

\begin{corollary}\label{corr:fujishige_yang}
    Let $v_i$ satisfy gross substitutes, and let $p$ be any price vector. Then $u_i(\cdot, p): 2^S\to \mathbb R$ also satisfies Lemma~\ref{lem:fujishige_yang}.
\end{corollary}
\begin{proof}{Proof}
    We can write
    $$u_i(T_1\mid T_3, p) + u_i(T_2\mid T_3, p) = v_i(T_1\mid T_3) + v(T_2\mid T_3) + \sum_{j\in T_1\cup T_2}q_{ij}(p_j)$$
    $$\le \max_{j'\in T_2\setminus T_1} v_i(T_1\cup\{j'\}\setminus\{j\}\mid T_3)+ v_i(T_2\cup\{j\}\setminus\{j'\}\mid T_3) + \sum_{j\in T_1\cup T_2}q_{ij}(p_j)$$
    $$ = \max_{j'\in T_2\setminus T_1} u_i(T_1\cup\{j'\}\setminus\{j\}\mid T_3, p)+ u_i(T_2\cup\{j\}\setminus\{j'\}\mid T_3, p)$$
    
\end{proof}

Next, we use properties of gross substitutes to prove Lemma \ref{lem:alt_cycles}, describing the structure of the marginal demand graph.

\begin{proof}{Proof of Lemma \ref{lem:alt_cycles}}
    By hypothesis and definition of demand graph, the set $\{j_1,\dots,j_z,j'_1,\dots,j'_z\}$ is composed of $2z$ distinct elements. For each $X\subseteq [z]$, define $T(X) := \tau(\text{ubuy}(i))\cup \{j_x'\mid x\in X\}\setminus \{j_x\mid x\in X\}$.
    
    Assume (a) is not true. Then, let $y$ be the largest number such that for all sets $X$ with $|X|\le y$, the bundle $T(X)$ optimizes $u_i(\cdot, p)$ among bundles of size $|\tau(\text{ubuy}(i))|$. Note that by definition of the marginal demand graph, this is always true for $y = 1$. Relabel the pairs so that $T([y+1])$ does not optimize $u_i(\cdot, p)$ among bundles of size $|\tau(\text{ubuy}(i))|$. Define $T = \tau(\text{ubuy}(i))\cup\{j_3',\ldots, j_{y+1}'\}\setminus\{j_1,\ldots, j_{y+1}\}$. By Lemma~\ref{lem:fujishige_yang} and Corollary~\ref{corr:fujishige_yang}, we know that 
    $$u_i(j_1, j_2'\mid T, p) + u(j_1', j_2\mid T, p)\le \max\bigg[u_i(j_1, j_2\mid T, p) + u_i(j_1', j_2'\mid T, p),u_i(j_1, j_1'\mid T, p) + u_i(j_2, j_2'\mid T, p)\bigg].$$
    By selection of $T$, we know that $T\cup\{j_1', j_2\}$, $T\cup\{j_1, j_2'\}$, and $T\cup\{j_1,j_2\}$ optimize $u_i(\cdot, p)$ among bundles of size $|\tau(\text{ubuy}(i))|$. Then, we have
    $$u_i(j_1, j_2'\mid T, p) = u_i(j_1', j_2\mid T, p) = u_i(j_1, j_2\mid T, p) = \max_{j,j'\in S}u_i(j,j'\mid T, p):= u^*.$$
    As a result, we have
    
    \noindent\begin{minipage}{\textwidth}
        $$u^* + u^*\le \max\bigg[u^* + u_i(j_1', j_2'\mid T, p),u_i(j_1, j_1'\mid T, p) + u_i(j_2, j_2'\mid T, p)\bigg].$$
        \begin{equation}\label{eq:u-star-twice}\end{equation}
    \end{minipage}
    Furthermore, each of the four terms on the RHS of~\eqref{eq:u-star-twice} are bounded from above by $u^*$. Since $T\cup\{j_1',j_2'\} = T([y+1])$, we know this bundle does not maximize $u_i(\cdot, p)$ among bundles of size $|\tau(\text{ubuy}(i))|$, so we have $u_i(j_1', j_2'\mid T, p) < u^*$. This strict inequality means that the first argument of the max is less than $2u^*$, so the second argument of the $\max$ must be equal to $2u^*$. 
    
    It follows that 
    $u_i(j_1, j_1'\mid T, p) = u_i(j_2, j_2'\mid T, p) = u^*.$ 
    That is, $T\cup\{j_1, j_1'\}$ and $T\cup\{j_2, j_2'\}$ optimize $u_i(\cdot, p)$ among bundles of size $|\tau(\text{ubuy}(i))|$. By Lemma \ref{lem:gs_matroid_property}, the set of bundles which optimize $u_i(\cdot, p)$ among bundles of size $|\tau(\text{ubuy}(i))|$ form the bases of a matroid. Consider two such bundles, $\tau(\text{ubuy}(i))$ and $T\cup \{j_1,j_1'\}$. By the strong basis exchange property, there is a bijection $f: \{j_2,\ldots, j_{y+1}\}\to \{j_1', j_3', \ldots, j_{y+1}'\}$ such that for every $j\in \{j_2,\ldots, j_{y+1}\}= \tau(\text{ubuy}(i))\setminus (T\cup\{j_1,j_1'\})$, we have $\tau(\text{ubuy}(i))\cup\{f(j)\}\setminus\{j\}$ optimizes $u_i(\cdot, p)$ among bundles of size $|\tau(\text{ubuy}(i))|$ as well. Let $x_1 = 1$, and let $x_2$ so that $j_{x_1}' = f(j_{x_2})$. We see that $\tau(\text{ubuy}(i))\cup\{j_1'\}\setminus\{j_{x_2}\}$ is an optimal size $|\tau(\text{ubuy}(i))|$ bundle, which means by definition of the marginal demand graph that $(k_{x_2}, j_1')\in D(\tau, p)$. Setting $x_3$ so that $j_{x_2}' = f(j_{x_3})$, we then see by the same reasoning that $(k_{x_3}, j_{x_2}')\in D(\tau, p)$. This pattern continues until $j_{x_{t-1}}' = f(j_{x_t})$, where $x_t = 2$. Thus, we have a sequence $x_1,\ldots, x_t$ such that $(k_{x_\ell+1}, j'_{x_{\ell}})\in D(\tau, p)$ for $1\le \ell\le t-1$, with $x_1 = 1$ and $x_t = 2$. Overall, this gives the path
    \[(k_1 = k_{x_1}, j'_{x_1}, k_{x_2}, j'_{x_2},\dots, j'_{x_{t-1}}, k_{x_t} = k_2).\]
    
    Now consider bundles $\tau(\text{ubuy}(i))$ and $T\cup \{j_2,j_2'\}$, which optimize $u_i(\cdot, p)$ among bundles of size $|\tau(\text{ubuy}(i))|$. By the bijective basis exchange property, there is a bijection $g: \{j_1,j_3\ldots, j_{y+1}\}\to \{j_2', \ldots, j_{y+1}'\}$ such that for every $j\in \{j_1,j_3\ldots, j_{y+1}\}$, we have $\tau(\text{ubuy}(i))\cup\{g(j)\}\setminus\{j\}$ is an optimal size $|\tau(\text{ubuy}(i))|$ bundle as well. Keeping $x_{t} = 2$, let $j_{x_t}' = g(j_{x_{t+1}})$. By the same reasoning as before, we see that $(k_{x_{t+1}}, j_{x_t})\in D(\tau, p)\setminus \tau$. Again, this creates a sequence until $j_{x_{s}}' = g(j_{x_{s+1}})$, where $x_{s+1} = 1$. We thus have another sequence $x_t, \ldots, x_{s+1}$ such that $(k_{x_\ell+1}, j'_{x_{\ell}})$ for $t\le \ell\le s+1$, with $x_t = 2$ and $x_{s+1} = 1$. This gives us the path
    \[(k_2 = k_{x_t}, j'_{x_t}, k_{x_{t+1}}, j'_{x_{t+1}},\dots, k_{x_{s}}, j'_{x_{s}}, k_{x_{s+1}} = k_1).\]
    Combining this with the previous path gives us a closed walk 
    $$(k_1 = k_{x_1}, j'_{x_1}, k_{x_2}, j'_{x_2},\dots, j'_{x_{t-1}}, k_{x_t} = k_2, j'_{x_t}, k_{x_{t+1}}, j'_{x_{t+1}},\dots, k_{x_{s}}, j'_{x_{s}}, k_{x_{s+1}} = k_1)$$
    in $D(\tau,p)\setminus \tau$. However, some nodes and edges may be repeated in this walk. If there are indices $1\le r<r'\le s$ such that $x_{r} = x_{r'}$, then we can shortcut the cycle to become
    $$(k_1 = k_{x_1}, j'_{x_1}, k_{x_2}, \dots, j'_{x_{r-1}}, k_{x_r} = k_{x_{r'}}, j'_{x_{r'}}, k_{x_{r'+1}}, \dots,  k_{x_{s}}, j'_{x_{s}}, k_{x_{s+1}} = k_1).$$
    This gives us a new, shorter sequence of indices and corresponding cycle. Furthermore, we can see that the cycle still alternates between the original edges $(k_{x_\ell}, j'_{x_\ell})$ and our newly found edges $(k_{x_\ell}, j'_{x_{\ell-1}})$. If there continue to be duplicate indices in this sequence, then we can repeat this shortcutting process, until the final sequence does not have any duplicate indices, satisfying (b).
 \end{proof}

Next, it has also been shown by \citet{edmonds1972theoretical} that $LP$ and $DP$ can be solved efficiently.

\begin{lemma}[Complexity of $LP$ \cite{edmonds1972theoretical}]\label{lem:LP_DP_runtime}
    $LP$ can be solved in $O(|B|^3|S|^3)$ time.
\end{lemma}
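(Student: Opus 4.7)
The plan is to reduce solving $LP(\nu, p, \mathcal T)$ to a single call of the classical Hungarian (primal-dual) algorithm for the bipartite assignment problem. First I would observe that the dual of $LP(\nu, p, \mathcal T)$ is exactly $DP(\nu, p, \mathcal T)$, which, after replacing each item $j \in S(\mathcal T)$ by $\mathrm{cap}(j)$ parallel copies (and duplicating the corresponding columns of the constraint matrix), is a standard bipartite minimum-weight perfect matching LP on the bipartite graph $D(\nu, p) \cap V(\mathcal T) \setminus \{k_0\}$, with edge weights $\log q'_{kj}(p_j)$. A partially stable perfect matching (namely, the current $\nu$ restricted to $\mathcal T$) exists by hypothesis, so the primal is feasible, and the dual $LP$ is clearly feasible (set all $\omega_k, \rho_j$ to very negative values); both LPs are therefore optimal and attained.

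Next I would invoke the classical result of \citet{edmonds1972theoretical} (see also the Hungarian method of Kuhn), which gives a primal-dual algorithm that, when run on an $n$-by-$n$ bipartite assignment instance, simultaneously produces an optimal primal matching and an optimal dual potential vector in $O(n^3)$ time. The key point is that this algorithm outputs the optimum of $LP$ as a byproduct of solving $DP$: the dual potentials it maintains are precisely a feasible solution of $LP$ that attains the optimal objective by strong LP duality. Hence one call suffices to return $(\omega^*, \rho^*)$.

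It remains to estimate $n$. The bipartite graph has $|U(\mathcal T) \setminus \{k_0\}|$ unit-buyers on one side and $\sum_{j \in S(\mathcal T)} \mathrm{cap}(j)$ item-copies on the other; since all items in $S(\mathcal T)$ are fully matched by the partially stable $\nu$, these two cardinalities agree and are at most $|U| = |B|\cdot|S|$. Thus $n = O(|B||S|)$, and the Hungarian/Edmonds running time $O(n^3)$ evaluates to $O(|B|^3|S|^3)$, as claimed.

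The only real obstacle is the bookkeeping around item capacities and the correspondence between the dual variables produced by the assignment algorithm and the variables $(\omega, \rho)$ of $LP$; this is handled by noting that the duplicated columns for copies of the same item $j$ all share the constraint coefficient on $\rho_j$, so the Hungarian algorithm's potential on any copy of $j$ can be taken as $\rho_j$ (they coincide at optimum by complementary slackness). Everything else is a direct invocation of a textbook black-box, so no additional work is required beyond this reduction.
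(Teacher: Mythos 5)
The paper does not supply a proof of this lemma at all; it is stated as a citation to Edmonds (1972) and left as a black box. Your proposal correctly unpacks that black box: the dual of $LP(\nu,p,\mathcal T)$ is the transportation problem $DP(\nu,p,\mathcal T)$ (the paper itself notes this in the proof of Lemma~\ref{lem:MWlegalmatching_correct}), splitting each item $j$ into $\mathrm{cap}(j)$ unit-capacity copies turns it into a balanced $n$-by-$n$ assignment problem, and a primal-dual assignment algorithm (Hungarian method / successive shortest paths as in Edmonds--Karp) returns the optimal dual potentials, hence an optimal $(\omega^*,\rho^*)$, in $O(n^3)$ time. Your observation that $|U(\mathcal T)\setminus\{k_0\}|=\sum_{j\in S(\mathcal T)}\mathrm{cap}(j)$ holds because every non-root unit-buyer in the MAT enters along a matched edge and every item in the MAT is fully matched, so $n = O(|B||S|)$ and $O(n^3) = O(|B|^3|S|^3)$, matching the claim. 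One small point worth flagging but not a gap: Kuhn's original Hungarian method is $O(n^4)$; the $O(n^3)$ bound needs the refined implementation (Lawler/Tomizawa or the Edmonds--Karp shortest-augmenting-path scheme with potentials), which is what you are in effect invoking. Also, since the LP has no constraints involving $\omega_{k_0}$, that variable is handled separately (as Algorithm~\ref{alg:connectMAT} does in its line~\ref{alglin:connectmat:omega_k0}); your reduction correctly solves for everything else.
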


Finally, we can prove Lemma \ref{lem:connectmat} on the correctness and runtime of Algorithm \ref{alg:connectMAT}.

\begin{proof}{Proof of Lemma \ref{lem:connectmat}}
    First, note that $\omega^*_{k_0}$ is chosen in line \ref{alglin:connectmat:omega_k0} so that the constraint $\omega_{k_0} + \rho_j\le \log q'_{k_0j}(p_j)$ is satisfied for all $j\in S(\mathcal T)$, and furthermore at least one of these constraints is tight.

    Consider the main loop of Algorithm \ref{alg:connectMAT}. We can verify that in each iteration, if $k\in U(\mathcal T')$ and $j\in S(\mathcal T')$, then $\omega_k + \rho_j <(=) \log q'_{kj}(p_j)$ before the iteration implies $\omega_k + \rho_j <(=) \log q'_{kj}(p_j)$ after the iteration as well. Similarly, if $k\in U(\mathcal T)\setminus U(\mathcal T')$ and $j\in S(\mathcal T)\setminus S(\mathcal T')$, then the relationship $\omega_k + \rho_j <(=) \log q'_{kj}(p_j)$ is preserved. As a result, once an agent enters $\mathcal T'$, they remain there throughout the loop.
    
    Next, immediately after ending each iteration of the loop, there is at least one $k\in U(\mathcal T')$ and $j\in S(\mathcal T)\setminus S(\mathcal T')$ such that $\omega_k+\rho_j = \log q'_{kj}(p_j)$. It follows that after the next execution of lines \ref{alglin:connectmat:H_rho}-\ref{alglin:connectmat:T_prime}, $(k, j)$ enters $\mathcal T'$ and $j$ enters $S(\mathcal T')$. We also know that $j$ was matched to at least one other unit-buyer $k_1\in U(\mathcal T)\setminus U(\mathcal T')$, which enters $U(\mathcal T')$ after line \ref{alglin:connectmat:T_prime} as well. Thus, the size of $\mathcal T'$ increases each iteration, for at most $|S(\mathcal T)|$ iterations, at which point all items in $S(\mathcal T)$ and all of their matched unit-buyers, which is all of $U(\mathcal T)$, are in $\mathcal T'$.

    For the runtime, note that the loop takes $O(|S|)$ iterations, since each iteration adds at least one unit-buyer and item to $\mathcal T'$. Each iteration of lines \ref{alglin:connectmat:H_rho}-\ref{alglin:connectmat:decr_omega} take $O(|B||S|^2)$, by Lemma \ref{lem:existmat}. We conclude that the entire algorithm takes $O(|B||S|^3)$ time.
 \end{proof}

\subsection{Proofs for Section \ref{sec:FPI:alg}}\label{sec:proofs:FPI:alg}

In this section, we prove Lemma \ref{lem:gs_price_inc} on the correctness and runtime of Algorithm \ref{alg:findpriceincrease}.

\begin{proof}{Proof of Lemma \ref{lem:gs_price_inc}}
    First, by Lemma \ref{lem:MWlegalmatching_correct}, $\tau^*$ is an optimal solution for $DP(\tau^*, p, \mathcal T)$. By duality, we see that for $\omega^*, \rho^*$ as computed in line \ref{alglin:findpriceincrease:omegarho}, we have $\omega^*_k + \rho^*_j = \log q'_{\text{buy}(k)j}(p_j)$ if $x_{kj} = 1$, and $\omega^*_k + \rho^*_j\le \log q'_{\text{buy}(k)j}(p_j)$ for all $(k, j)\in D(\tau, p)\times U(\mathcal T)\setminus \{k_0\}\times S(\mathcal T)$. 

    After line \ref{alglin:findpriceincrease:connectmat} in Algorithm \ref{alg:findpriceincrease}, we know by Lemma \ref{lem:connectmat} that there exists a MAT $\mathcal T'$ that contains all of $U(\mathcal T)\cup S(\mathcal T)$, such that for each edge $(k,j)\in\mathcal T'$, we have $\omega^*_k+\rho^*_j = \log q'_{\text{buy}(k)j}(p_j)$. For all $(k,j)\in D(\tau^*, p)$, we also still have $\omega^*_k+\rho^*_j \le \log q'_{\text{buy}(k)j}(p_j)$. As a result, by taking $e$ to the power of these (in)equalities, it follows that for $(k,j_1)\in \mathcal T'$ and $(k,j_2)\in D(\tau^*, p)$,
    $$q'_{kj_1}(p_{j_1})d_{j_1} = q'_{kj_1}(p_{j_1})e^{-\rho^*_{j_1}} = e^{\omega^*_k} \le q'_{kj_2}(p_{j_2})e^{-\rho^*_{j_2}} = q'_{kj_2}(p_{j_2})d_{j_2}.$$
    By Lemma \ref{lem:envyfree_slopes}, $d$ is a MAT-preserving price increase, and so for any $0 \le \lambda\le \lambda^*$ we have $\tau^*$ is partially stable at $p+\lambda d$, and $\mathcal T'\subseteq D(\tau^*, p+\lambda d)$.

    Now we verify that the algorithm also finds the correct step size $\lambda^*$. In line \ref{alglin:findpriceincrease:lambda1}, we ensure that $\lambda_1^*$ is the minimum step size such that a new edge in $U(\mathcal T)\times S\setminus S(\mathcal T)$ enters the demand graph. In  line \ref{alglin:findpriceincrease:lambda2}, we ensure that $\lambda_2^*$ is the minimum step size required to reach the edge of the current linear domain. It follows that $p+\lambda^* d$ either has a new edge enter the demand graph, which is also adds an edge and an item to the MAT rooted at $k_0$, or $p+\lambda^* d$ reaches the edge of the current linear domain.

    Now we  consider the runtime of Algorithm \ref{alg:findpriceincrease}. Line \ref{alglin:findpriceincrease:maxweight} takes $O(|B|^5|S|^5)$, by Lemmas \ref{lem:M1_independence_oracle}, \ref{lem:M2_independence_oracle}, and \ref{lem:min_weight_common_basis_alg}. Line \ref{alglin:findpriceincrease:omegarho} can be solved in $O(|B|^3|S|^3)$ time by Lemma \ref{lem:LP_DP_runtime}. Line \ref{alglin:findpriceincrease:connectmat} takes $O(|B||S|^3)$ times by Lemma \ref{lem:connectmat}. Lines \ref{alglin:findpriceincrease:duality_trick}-\ref{alglin:findpriceincrease:lambda} require $O(|B||S|^2)$. In total, the entire algorithm takes $O(|B|^5|S|^5)$ time.
 \end{proof}

\section{Proofs for Section \ref{sec:main_result}}\label{sec:proofs:main_result}

In this section, we provide the proof for Lemma~\ref{lem:existmat}.

\begin{proof}{Proof of Lemma~\ref{lem:existmat}}

    First, we will show that the output $\mathcal T$ of Algorithm \ref{alg:mat} is an alternating tree. Edges are only added to $\mathcal T$ in line \ref{alglin:mat:addedges}. We also know that $j\not\in V(\mathcal T)$ chosen in line \ref{alglin:mat:j} and $k'\in U''$ chosen in lines \ref{alglin:mat:u''}-\ref{alglin:mat:addedges} must not already be in $V(\mathcal T)$. We see that $\mathcal T$ is acyclic, and so it is a tree. A unit-buyer $k$ can only be added to $\mathcal T$ if it is $k_0$ or on an edge $(k', j)$ in line \ref{alglin:mat:addedges}. We also know by line \ref{alglin:mat:u''} that $(k', j)$ is a matched edge. Then, consider the edge $(k,j)$ added in line \ref{alglin:mat:addedges}. When $k$ was added to $\mathcal T$, it was also added via a matched edge, and since unit-buyers have at most one match, we see that $(k,j)$ is not matched. We see that $\mathcal T$ is alternating as well.

    Now we will show that $\mathcal T$ is maximal. Consider any unit-buyer $k\in U(\mathcal T)$. We know that $k = k_0$, or $k = k'$ in some iteration of line \ref{alglin:mat:addedges}. In the first case, $k$ is added to $U'$ in line \ref{alglin:mat:init}. In the second case, $k$ is added to $U'$ in line \ref{alglin:mat:updateu}. In both cases, $k$ is not removed from $U'$ until after it is selected in line \ref{alglin:mat:while}. When this happens, all items $j\not\in V(\mathcal T)$ with $(k,j)\in H$ are added to $\mathcal T$ via the edge $(k,j)$, unless they were already in $V(\mathcal T)$. Thus, for every unit-buyer $k\in V(\mathcal T)$, all items $j$ with an edge to $k$ are also in $V(\mathcal T)$. Next, consider any item $j\in V(\mathcal T)$. We know that $j$ is added to $V(\mathcal T)$ in line \ref{alglin:mat:addedges}. At the same time, any matched edge $(k',j)\in\tau$ is also added to $\mathcal T$ unless $k'$ is already in the tree. Thus for every item $j\in V(\mathcal T)$, its matched unit-buyer is in $V(\mathcal T)$ as well, and conclude that $\mathcal T$ is maximal.

    Now, we consider the runtime of Algorithm \ref{alg:mat}. In each iteration of the for loop, an item $j\not\in V(\mathcal T)$ is selected and added to $V(\mathcal T)$ permanently. As a result, the for loop can iterate at most $O(|S|)$ times throughout the entire algorithm. In each iteration, lines \ref{alglin:mat:u''}-\ref{alglin:mat:updateu} require at most $O(|B||S|)$ time. We conclude that the total runtime is at most $O(|B||S|^2)$.
 \end{proof}

\section{Proofs for Section \ref{sec:lattice_IC}}\label{sec:proofs:lattice_IC}

In this section, we provide the proof of Theorem~\ref{thm:min_ce}, omitted from Section~\ref{sec:lattice_IC}. 

\subsection{Proof of Theorem~\ref{thm:min_ce}}

The main lemma shows that throughout the execution of the algorithm, for every item, we have that its current price is upper bounded by its price in any competitive equilibrium.

\begin{lemma}\label{lem:matpp_le}
Let $(\tau^*, p^*)$ be a competitive equilibrium.
At any iteration of Algorithm~\ref{alg:gs_core_mech} with current prices $p$, we have that $p \leq p^*$.
\end{lemma}
Note that Theorem~\ref{thm:min_ce} follows immediately from Lemma~\ref{lem:matpp_le}. The remainder of this section is thus devoted to proving Lemma~\ref{lem:matpp_le}.

\subsubsection{Helper Lemmas for proving Lemma~\ref{lem:matpp_le}}

The first helper lemma provides properties that follow from gross substitutability.

\begin{lemma}\label{lem:perturb_demand}
    Let $p$ be a price vector and $R_1, R_2\subseteq S$ disjoint sets. Fix a buyer $i$. For small $\epsilon > 0$, define $p'$ by $p'_j = p_j + \epsilon/q'_{ij}(p_j)$ for $j\in R_1$, $p'_j = p_j + \epsilon^2/q'_{ij}(p_j)$ for $j\in R_2$, and $p'_j = p_j$ otherwise. Let $N\subseteq 2^S$. Then, $T\in \arg\max_{Q\subseteq S:Q\in N}u_i(Q, p')$ if and only if:
    \begin{enumerate}[(a)]
        \item $T\in \arg\max_{Q\subseteq S:Q \in N}u_i(Q, p)$, \label{lem:perturb_demand:a}
        \item $T$ minimizes $|T\cap R_1|$ among sets satisfying (\ref{lem:perturb_demand:a})\label{lem:perturb_demand:b}, and
        \item $T$ minimizes $|T\cap R_2|$ among sets satisfying (\ref{lem:perturb_demand:a}) and (\ref{lem:perturb_demand:b}).\label{lem:perturb_demand:c}
    \end{enumerate}
    Similarly, if $p'$ is defined by $p'_j = p_j - \epsilon/(q'_{ij}(p_j-\epsilon))$ for $j\in R_1$, $p'_j = p_j - \epsilon^2/(q'_{ij}(p_j-\epsilon))$ for $j\in R_2$, and $p'_j = p_j$ otherwise, then $T\in \arg\max_{Q\subseteq S:Q \in N}u_i(T', p')$ if and only if:
    \begin{enumerate}[(a)]
    \setcounter{enumi}{3}
        \item $T\in \arg\max_{Q\subseteq S:Q \in N}u_i(Q, p)$,\label{lem:perturb_demand:d}
        \item $T$ maximizes $|T\cap R_1|$ among sets satisfying (\ref{lem:perturb_demand:d}), and \label{lem:perturb_demand:e}
        \item $T$ maximizes $|T\cap R_2|$ among sets satisfying (\ref{lem:perturb_demand:d}) and (\ref{lem:perturb_demand:e}).\label{lem:perturb_demand:f}
    \end{enumerate}
\end{lemma}
\begin{proof}{Proof}
    Consider the first statement. Assume $T$ satisfies all three conditions, and $T'$ is any other set with $|T'|\in N$. We can write using the definition of $u_i$ that
    $$u_i(T, p') = u_i(T, p) - \sum_{j\in T\cap R_1}q'_{ij}(p_j)\epsilon/q'_{ij}(p_j) - \sum_{j\in T\cap R_2}q'_{ij}(p_j)\epsilon^2/q'_{ij}(p_j) = u_i(T, p) - \sum_{j\in T\cap R_1}\epsilon - \sum_{j\in T\cap R_2}\epsilon^2$$
    Similarly, we have 
    $$u_i(T', p') = u_i(T', p) - \sum_{j\in T'\cap R_1}q'_{ij}(p_j)\epsilon/q'_{ij}(p_j) - \sum_{j\in T'\cap R_2}q'_{ij}(p_j)\epsilon^2/q'_{ij}(p_j) = u_i(T', p) - \sum_{j\in T'\cap R_1}\epsilon - \sum_{j\in T'\cap R_2}\epsilon^2$$
    To compare these two expressions, we examine each term in several different cases.

    \textbf{Case 1}: $T'$ does not satisfy (\ref{lem:perturb_demand:a}). Then $u_i(T, p) > u_i(T', p)$, and this inequality dominates because $\epsilon$ is arbitrarily small. Thus, $u_i(T, p') > u_i(T', p')$.

    \textbf{Case 2}: $T'$ satisfies (\ref{lem:perturb_demand:a}) but not (\ref{lem:perturb_demand:b}). Then $u_i(T, p) = u_i(T', p)$, but $-\sum_{j\in T\cap R_1}\epsilon > -\sum_{j\in T'\cap R_1}\epsilon$. This term dominates since $\epsilon^2 \ll \epsilon$, and so we have $u_i(T, p') > u_i(T', p')$.

    \textbf{Case 3}: $T'$ satisfies (\ref{lem:perturb_demand:a}) and (\ref{lem:perturb_demand:b}) but not (\ref{lem:perturb_demand:c}). Then $u_i(T, p) = u_i(T', p)$ and $-\sum_{j\in T\cap R_1}\epsilon = -\sum_{j\in T'\cap R_1}\epsilon$, but $-\sum_{j\in T\cap R_2}\epsilon^2 > -\sum_{j\in T'\cap R_2}\epsilon^2$. Thus, $u_i(T, p') > u_i(T', p')$.

    \textbf{Case 4}: $T'$ satisfies (\ref{lem:perturb_demand:a}), (\ref{lem:perturb_demand:b}), and (\ref{lem:perturb_demand:c}). Then $u_i(T, p) = u_i(T', p)$, and $-\sum_{j\in T\cap R_1}\epsilon = \sum_{j\in T'\cap R_1}\epsilon$, and $-\sum_{j\in T\cap R_2}\epsilon^2 = \sum_{j\in T'\cap R_2}\epsilon^2$, giving us $u_i(T, p') = u_i(T', p')$.

    Altogether, we see that $T\in \arg\max_{Q\subseteq S:Q\in N}u_i(Q, p')$ if and only if the three conditions are satisfied. The proof of the second statement follows from a substantially similar argument. 
\end{proof}

We next need a helper result on a variation of the definition of gross substitutes.

\begin{lemma}\label{lem:gs_monotone_demand}
    For any  prices $p$ and $p'$ such that $p\le p'$, items $R\subseteq S$ such that  $p_j = p_j'$ for all $j\in R$, and buyer $i \in B$, we have that
$$\min_{T\in F_p(i)}|T\cap R| \leq \min_{T\in F_{p'}(i)}|T\cap R|.$$
\end{lemma}

\begin{proof}{Proof}
    Let $\ell = \min_{T\in F_p(i)}|T\cap R|$, and suppose for contradiction that there exists $T'\in F_{p'}(i)$ with $|T'\cap R|< \ell$. For small $\epsilon>0$, let $\hat p$ be defined by $\hat p_j = p_j+\epsilon/q'_{ij}(p_j)$ for $j\in R$, and $\hat p_j = p_j$ otherwise. Similarly define $\hat p'$ from $p'$. By Lemma~\ref{lem:perturb_demand}, $\hat T\in F_{\hat p}(i)$ if and only if $\hat T\in F_p(i)$ and $\hat T$ minimizes $|\hat T\cap R|$ among such sets. That is, any $\hat T\in F_{\hat p}(i)$ has $|\hat T\cap R| = \ell$. Lemma~\ref{lem:perturb_demand} also gives us that $\hat T'\in F_{\hat p'}(i)$ if and only if $\hat T'\in F_{p'}(i)$ and $\hat T'$ minimizes $|\hat T'\cap R|$ among such sets. So, any $\hat T'\in F_{\hat p'}(i)$ has $|\hat T'\cap R| < \ell$. Now, since $\epsilon > 0$ is small and $p\le p'$, we know that $\hat p \le \hat p'$ with $\hat p_j = \hat p'_j$ whenever $p_j = p_j'$, including all $j\in R$. Fix any $\hat T\in F_{\hat p}(i)$. By gross substitutes, there exists $\hat T'\in F_{\hat p'}(i)$ such that $\hat T\cap R\subseteq \hat T'$. Since $|\hat T\cap R| = \ell$, we see that $|\hat T'\cap R| \ge \ell$, a contradiction. We conclude that for all $T'\in F_{p'}(i)$, we have $|T'\cap R|\ge \ell$.  
\end{proof}

We also have a helper result showing that optimal sets contain smaller optimal sets.

\begin{lemma}\label{lem:well_layered_again}
    Fix a gross substitutes valuation function $v_i$ and price vector $p$. Let $\ell_1\in \mathbb N$ and $T_1\in \arg\max_{T\subseteq S:|T|=\ell_1}u_i(T, p)$ be a utility maximizing bundle among bundles of size $\ell_1$. Then, for any $\ell_2\le\ell_1$, there exists $T_2\in \arg\max_{T\subseteq S:|T|=\ell_2}u_i(T, p)$ such that $T_2\subseteq T_1$.
\end{lemma}
\begin{proof}{Proof}

    By property (GR) from Proposition~\ref{prop:gs_equivalent}, there exists a greedy procedure with a suitable tiebreaking rule which produces $T_1$ as the optimal solution to $\max_{T\subseteq S:|T|=\ell_1}u_i(T, p)$. Then, the greedy procedure with the same tiebreaking rule produces some $T_2\subseteq T_1$ as the optimal solution to $\max_{T\subseteq S:|T|=\ell_2}u_i(T, p)$. 
\end{proof}

\subsubsection{Proof of Lemma~\ref{lem:matpp_le}}

Now, we can prove Lemma~\ref{lem:matpp_le}.

\begin{proof}{Proof of Lemma~\ref{lem:matpp_le}}
    We show the result by induction. For the base case, the initial price vector $p = 0$ clearly satisfies $p\le p^*$. For the inductive step, assume the current price vector $p$ in Algorithm~\ref{alg:gs_core_mech} satisfies $p\le p^*$. Let $p+\lambda^* d$ be the price vector at the next iteration, for a MAT-preserving price increase $d$ and step size $\lambda^*$ as computed by Algorithm~\ref{alg:findpriceincrease}. Let $\tau'$ and $\mathcal T'$ be the matching and MAT associated with $d$, respectively. Suppose for contradiction that $p^*_j < p_j + \lambda^* d_j$ for some item $j\in S(\mathcal T')$. Then, there is some $\xi\in [0, \lambda^*)$ such that $p^*_j \ge p_j + \xi d_j$ for all $j$, with equality for at least one item $j\in S(\mathcal T')$. Define $p' = p + \xi d$, and let $R = \{j\in S(\mathcal T')\mid p'_j = p^*_j\}$. By Lemma~\ref{lem:gs_price_inc}, $(\tau', p')$ is also partially stable. Fix any buyer $i$, and for small $\epsilon > 0$ define $\hat p'$ by $\hat p'_j = p'_j + \epsilon / q'_{ij}(p'_j)$ if $j\in S(\mathcal T')\setminus R$, and $\hat p'_j = p'_j$ otherwise. Then $p'\le \hat p'\le p^*$, with equality for $j\in R$. Our goal is to show that for any $\hat T'_i\in F_{\hat p'}(i)$ we have $|\hat T'_i\cap R|\ge |\tau'(\text{ubuy}(i))\cap R|$, with strict inequality for some $i$. Then, we will use Lemma~\ref{lem:gs_monotone_demand} to show that the same is true for all $i$ and $T^*_i\in F_{p^*}(i)$, from which it follows that $\sum_{i\in B}|T^*_i\cap R| > \sum_{i\in B}|\tau'(\text{ubuy}(i))\cap R| = |R|$, which means that $R$ is overdemanded at $p^*$.

    For any buyer $i$, by Lemma~\ref{lem:perturb_demand}, for any $\ell\in \mathbb N$ we have $\hat T'_i\in \arg\max_{T\subseteq S:|T|=\ell}u_i(T, \hat p')$ if and only if
    \begin{enumerate}[(a)]
        \item $\hat T'_i\in \arg\max_{T\subseteq S:|T|=\ell}u_i(T, p')$, and \label{lem:matpp_le:a}
        \item $|\hat T'_i\cap (S(\mathcal T')\setminus R)|$ is minimized among sets satisfying (\ref{lem:matpp_le:a})\label{lem:matpp_le:b}.
    \end{enumerate}

    Next, we claim that any $X'_i\in \arg\max_{T\subseteq S:|T|=|\tau'(\text{ubuy}(i))|}u_i(T, p')$ has $|X'_i\cap S(\mathcal T')| \ge |\tau'(\text{ubuy}(i))\cap S(\mathcal T')|$. Suppose for contradiction there exists $X'_i\in \arg\max_{T\subseteq S:|T|=|\tau'(\text{ubuy}(i))|}u_i(T, p')$ with $|X'_i\cap S(\mathcal T')| < |\tau'(\text{ubuy}(i))\cap S(\mathcal T')|$. First note that we also have $\tau'(\text{ubuy}(i)) \in \arg\max_{T\subseteq S:|T|=|\tau'(\text{ubuy}(i))|}u_i(T, p')$ since $(\tau', p')$ is partially stable. By Lemma~\ref{lem:gs_matroid_property}, the collection of bundles $\arg\max_{T\subseteq S:|T|=|\tau'(\text{ubuy}(i))|}u_i(T, p')$ form the bases of a matroid. By the bijective bases exchange property, there exists a bijection $f:\tau'(\text{ubuy}(i))\setminus X'_i\to X'_i\setminus \tau'(\text{ubuy}(i))$ such that $\tau'(\text{ubuy}(i))\cup\{f(j)\}\setminus \{j\}$ is a basis as well for each $j\in \tau'(\text{ubuy}(i))\setminus X'_i$. Because $|X'_i\cap S(\mathcal T')| < |\tau'(\text{ubuy}(i))\cap S(\mathcal T')|$, by the pigeonhole principle there is some $j\in (\tau'(\text{ubuy}(i))\setminus X'_i)\cap S(\mathcal T')$ with $f(j)\in (X'_i\setminus \tau'(\text{ubuy}(i)))\setminus S(\mathcal T')$, which means $\tau'(\text{ubuy}(i))\cup\{f(j)\}\setminus \{j\} \in \arg\max_{T\subseteq S:|T|=|\tau'(\text{ubuy}(i))|}u_i(T, p')$. By definition of the marginal demand graph $D(\tau', p')$, since $\tau'(j)$ would be indifferent between swapping $j$ for $f(j)$, there must be an unmatched edge  $(\tau'(j), f(j))\in D(\tau', p')$. Since $j\in S(\mathcal T')$, it follows that $\tau'(j)\in U(\mathcal T')$, and then $f(j)\in S(\mathcal T')$, contradicting our choice of $f(j)\not\in S(\mathcal T')$. We see that $|X'_i\cap S(\mathcal T')| \ge |\tau'(\text{ubuy}(i))\cap S(\mathcal T')|$. By (\ref{lem:matpp_le:a}), we also know any $\hat X'_i\in \arg\max_{T\subseteq S:|T|=|\tau'(\text{ubuy}(i))|}u_i(T, \hat p')$ also has $\hat X'_i\in \arg\max_{T\subseteq S:|T|=|\tau'(\text{ubuy}(i))|}u_i(T, p')$, and so $|\hat X'_i\cap S(\mathcal T')| \ge |\tau'(\text{ubuy}(i))\cap S(\mathcal T')|$.

    Next, we claim that any $\hat X'_i\in \arg\max_{T\subseteq S:|T|=|\tau'(\text{ubuy}(i))|}u_i(T, \hat p')$ has $|\hat X'_i\cap R| \ge |\tau'(\text{ubuy}(i))\cap R|$. Additionally, we claim that there exists some $i^*$ such that $\hat X'_{i^*}\in \arg\max_{T\subseteq S:|T|=|\tau'(\text{ubuy}(i))|}u_i(T, \hat p')$ with $|\hat X'_i\cap R| > |\tau'(\text{ubuy}(i))\cap R|$.
    
    Consider an arbitrary buyer $i$ such that $\hat X'_i\in \arg\max_{T\subseteq S:|T|=|\tau'(\text{ubuy}(i))|}u_i(T, \hat p')$. We consider two cases:

    \textbf{Case 1}: there exists $k\in \text{ubuy}(i)$ such that $\tau'(k)\not\in R$, but $(k, j)\in D(\tau', p')$ for some $j\in R$. To see that there exists a buyer $i = i^*$ satisfying this case, note that since $\mathcal T'$ is a MAT, there is an alternating path from the root $k_0$ to each item $j'\in S(\mathcal T)$. Furthermore, each item $j'\in S(\mathcal T')$ has $\tau'(j')\in\mathcal T'$. Fix any $j'\in R$, and consider the shortest alternating path from $k_0$ to $j'$. Then, $k_0\not\in \tau'(R)$ and $\tau'(j')\in \tau'(R)$ are the first and last unit-buyers on the path, respectively. There thus exists some unit-buyer $k$ which is the last unit-buyer on the path that has $\tau'(k)\not\in\tau'(R)$. Then, then next two nodes on the path are of the form $j, \tau'(j)$ for some $j\in R$. $k, j$ and $i^* = \text{buy}(k)$ thus satisfy the requirements of Case 1. 
    
    Now, for any $i$ satisfying this case, set $Y'_i = \tau'(\text{ubuy}(i))\cup\{j\}\setminus\tau'(k)$. We know that $\tau'(k)\in S(\mathcal T')\setminus R$ and $j\in R$, and so it follows that $|Y'_i\cap (S(\mathcal T')\setminus R)| < |\tau'(\text{ubuy}(i))\cap (S(\mathcal T')\setminus R)|$, and by definition of $D(\tau', p')$, we know that $Y'_i\in\arg\max_{T\subseteq S:|T|=|\tau'(\text{ubuy}(i))|}u_i(T, p')$.

    \textbf{Case 2}: Case 1 does not apply. Then, set $Y'_i = \tau'(\text{ubuy}(i))$, in which case we have $|Y'_i\cap (S(\mathcal T')\setminus R)|\le |\tau'(\text{ubuy}(i))\cap (S(\mathcal T')\setminus R)|$. By definition of $\tau'(\text{ubuy}(i))$ we know that $Y'_i\in\arg\max_{T\subseteq S:|T|=|\tau'(\text{ubuy}(i))|}u_i(T, p')$.
    
    In both cases, by (\ref{lem:matpp_le:b}) any $\hat X'_i\in\arg\max_{T\subseteq S:|T|=|\tau'(\text{ubuy}(i))|}u_i(T, \hat p')$ must have $|\hat X'_i\cap (S(\mathcal T')\setminus R)|\le |Y'_i\cap (S(\mathcal T')\setminus R)|\le |\tau'(\text{ubuy}(i))\cap (S(\mathcal T')\setminus R)|$. Since $|\hat X'_i\cap S(\mathcal T')| \ge |\tau'(\text{ubuy}(i))\cap S(\mathcal T')|$, it follows that $|\hat X'_i\cap R| \ge |\tau'(\text{ubuy}(i))\cap R|$, with strict inequality for $i = i^*$ in Case 1.

    Now we show that any $\hat T_i'\in F_{\hat p'}(i)$ and $\hat X'_i\in\arg\max_{T\subseteq S:|T|=|\tau'(\text{ubuy}(i))|}u_i(T, \hat p')$ has $|\hat T'_i\cap R| \ge |\hat X_i'\cap R|$. First, we know by partial stability of $\tau'$ at $p'$ that any $X_i'$ with $|X_i'| \le |\tau'(i)|$ has $u_i(X_i', p')\le u_i(\tau'(i), p') = u_i(\hat X'_i, p')$. Thus, if there exists $T_i'$ with $u_i(T_i', p') > u_i(\hat X_i', p')$, then $|T_i'| > |\hat X'_i|$. It would follow by Lemma~\ref{lem:well_layered_again} that there exists $T_i'\in F_{p'}(i)$ such that $T_i'\supseteq \hat X_i'$. If there does not exist $T_i'$ with $u_i(T_i', p') > u_i(\hat X_i', p')$, then $\hat X_i'\in F_{p'}(i)$. In either case, there exists $T_i'\in F_{p'}(i)$ with $T_i'\supseteq \hat X_i'$, and thus $|T_i'\cap R|\ge |\hat X_i'\cap R|$. By (a) and (b), $\hat T_i'\in F_{\hat p'}(i)$ must then also have $|\hat T_i'\cap R| \ge |\hat X_i'\cap R|$.
    
    Finally, by Lemma~\ref{lem:gs_monotone_demand}, since $p^*\ge \hat p'$ with equality for $j\in R$, we know that for every $i$ and every $T^*_i\in F_{p^*}(i)$, we have $|T^*_i\cap R| \ge |\hat T'_i\cap R|\ge |\hat X_i'\cap R|\ge |\tau'(\text{ubuy}(i))\cap R|$, with strict inequality when $i = i^*$. Thus, we have $\sum_{i\in B}|T^*_i\cap R| > \sum_{i\in B}|\tau'(\text{ubuy}(i))\cap R| = |R|$, which means there can be no stable matching at $p^*$, a contradiction. We conclude that $p^*\ge p+\lambda^* d$, and so by induction the current price $p$ at every iteration of Algorithm~\ref{alg:gs_core_mech} has $p\le p^*$.

\end{proof}

\section{Proofs for Section \ref{sec:negative}}\label{sec:proofs:negative}

In this section, we provide the proofs for Theorem~\ref{thm:hardness2} and Theorem~\ref{thm:hardness}.

\begin{proof}{Proof of Theorem~\ref{thm:hardness2}}
    We will reduce the integer knapsack problem to an instance of our problem. Consider any knapsack instance with items $S$, where each item $j\in S$ has value $w_j$ and cost $c_j$, and we have a maximum budget $C$. We assume that $w_j$, $c_j$, and $C$ are all integers. The goal is to maximize    
    \begin{align*}
        \max & \sum_{j\in S}w_j x_j\\
         \text{s.t. } & \sum_{j\in S} c_j x_j\le C \\
          & x_j\in \{0,1\}\quad \forall j\in S
    \end{align*} 
    
    Now consider the following instance of the assignment game. Let the items be given by $S' = \{j_1, j_2\mid j\in S\}$. Then, for each knapsack item $j\in S$, define three buyers $i_j^1, i_j^2, i_j^3$ that have unit-demand, quasilinear utilities given by $v_{i_j^\ell}(j_1) = v_{i_j^\ell}(j_2) =  c_j$, $v_{i_j^\ell}(j') = 0$ for $j'\ne j_1, j_2$, $v_{i_j^\ell}(T) = \max_{j'\in T}v_{i_j^\ell}(j')$, and $r_{i_j^\ell}(p)  = p$.
    Thus, $i_j^1$, $i_j^2$, and $i_j^3$ only value the copies of item $j$, at $c_j$. Also, define a buyer $i_0$ with utility function as in Definition~\ref{def:nqbu} and valuation function
    $v_{i_0}(T) = 2C\sum_{j\in T}(\mathbf{1}_{\{j_1,j_2\}\cap T\ne\emptyset}) w_j$. 
    That is, $v_{i_0}$ is an additive valuation function over the items in $S'$, but only values at most one copy of each item. Now consider any competitive equilibrium $(\mu, p)$. If two copies $j_1,j_2$ of the same item $j$ have different prices, then no buyer including $i_j^1, i_j^2, i_j^3, i_0$ will demand the item with higher price, which means the outcome is infeasible. Thus, we know that $p_{j_1} = p_{j_2}$. If $p_{j_1} = p_{j_2}<c_j$, then all three buyers $i_j^1, i_j^2, i_j^3$ demand one of $j_1, j_2$ but only two can receive one, which would not be a competitive equilibrium. If $p_{j_1} = p_{j_2} > c_j$, then none of the buyers $i_j^1, i_j^2, i_j^3$ demand either $j_1, j_2$, so at most one buyer ($i_0$) can match to one of $j_1, j_2$. Thus, one of $j_1, j_2$ is unmatched but has a nonzero price, which is impossible. It follows that $p_{j_1} = p_{j_2} = c_j$ for all items $j\in S$. Then, consider the set of items $\mu(i_0)$ matched to $i_0$. If two copies $j_1,j_2$ of the same knapsack item $j$ are in $\mu(i_0)$, removing one from $\mu(i_0)$ would maintain the same value while reducing the price paid by $c_j$, improving utility. This is impossible at a competitive equilibrium, so we know that $\mu(i_0)$ contains at most one of $j_1,j_2$ for each $j\in S$. For any bundle $T$ where the sum of the costs $c_j$ is greater than $C$, we have
    $$
        u_{i_0}(T, p) = v_{i_0}(T) - r_{i_0}\left(\sum_{j_\ell\in T}p_{j_\ell}\right)= 2C\sum_{{j_\ell}\in T}w_{j_\ell} - r_{i_0}\left(\sum_{{j_\ell}\in T}c_j\right)\le -\infty
    $$
    where the first inequality follows from the fact that $\sum_{{j_\ell}\in T}c_j> C$ implies $\sum_{{j_\ell}\in T}c_j\ge C+1$ with integer data. We see that $i_0$ cannot be matched to any bundle $T$ whose costs sum to greater than $C$. Now consider $\mu(i_0)$, and any bundle $T$ where the sum of the costs is at most $C$. Thus,
    $u_{i_0}(\mu(i_0), p) \ge u_{i_0}(T, p)$ and $2C\sum_{{j_\ell}\in \mu(i_0)}w_{j_\ell} - r_{i_0}\left(\sum_{{j_\ell}\in \mu(i_0)}p_{j_\ell}\right)\ge 2C\sum_{{j_\ell}\in T}w_{j_\ell} - r_{i_0}\left(\sum_{{j_\ell}\in T}p_{j_\ell}\right).$
    This can be rewritten as $2C\sum_{{j_\ell}\in \mu(i_0)}w_{j_\ell} - r_{i_0}\left(\sum_{{j_\ell}\in \mu(i_0)}c_j\right) \ge 2C\sum_{{j_\ell}\in T}w_{j_\ell} - r_{i_0}\left(\sum_{{j_\ell}\in T}c_j\right).$
    Since the costs sum to at most $C$, we know by the definition of $r_{i_0}$ that 
    $r_{i_0}\left(\sum_{{j_\ell}\in \mu(i_0)}c_j\right),\ r_{i_0}\left(\sum_{{j_\ell}\in T}c_j\right)\le C.$ 
    As a result, since the $w_j$ are integer, we know that
    $\sum_{{j_\ell}\in \mu(i_0)}w_{j_\ell} \ge \sum_{{j_\ell}\in T}w_{j_\ell}.$
    Thus, $i_0$ must be matched to a bundle $\mu(i_0)$ with total cost at most $C$ that has the maximum total weight, which is exactly the optimal solution to the original integer knapsack problem. Thus, computing a competitive equilibrium is at least as hard as the NP-hard integer knapsack problem
\end{proof}

\begin{proof}{Proof of Theorem~\ref{thm:hardness}}
    We follow a very similar plan to the proof of Theorem~\ref{thm:hardness2}. Using the same integer knapsack problem, we again define an instance of the assignment game with items $S' = \{j_1, j_2\mid j\in S\}$, and buyers $i_j^1, i_j^2, i_j^3$ for each $j\in S$ with unit-demand, quasilinear utilities identical to those defined in the proof of Theorem~\ref{thm:hardness2}. Then, define a buyer $i_0$ with utility function as in Definition~\ref{def:nqbr} and valuation function
    $v_{i_0}(T) = 2C\sum_{j\in T}(\mathbf{1}_{\{j_1, j_2\}\cap T\ne \emptyset}) w_j$.
    As before, $v_{i_0}$ is an additive valuation function over the items in $S'$, but only values at most one copy of each item. For $r_{i_0}$, we have $C$ equal to the budget and $\alpha = 2C(1+\sum_{j\in S}w_j)$. Consider any competitive equilibrium $(\mu, p)$. As in the proof of Theorem~\ref{thm:hardness}, we know that $p_{j_1} = p_{j_2} = c_j$ for all items $j\in S$. We also know that that $\mu(i_0)$ contains at most one of $j_1,j_2$ for each $j\in S$. For any bundle $T$ where the sum of the costs $c_j$ is greater than $C$, we know that
    $$u_{i_0}(T, p) = v_{i_0}(T) - r_{i_0}\left(\sum_{j_\ell\in T}p_{j_\ell}\right)= 2C\sum_{{j_\ell}\in T}w_{j_\ell} - r_{i_0}\left(\sum_{{j_\ell}\in T}c_j\right)\le 2C\sum_{{j_\ell}\in T}w_{j_\ell} - 2C\left(\frac{1}{2} + 1 + \sum_{j\in S}w_j\right)\le - 3C.$$
    We see that $i_0$ cannot be matched to any bundle $T$ whose costs sum to greater than $C$. By identical reasoning to the proof of Theorem~\ref{thm:hardness2}, we see that $i_0$ must be matched to a bundle $\mu(i_0)$ with total cost at most $C$ that has the maximum total weight, which is exactly the optimal solution to the original integer knapsack problem. Thus, computing a competitive equilibrium is at least as hard as the integer knapsack problem, which is NP-hard.
 \end{proof}

\section{Payment functions and $r$-gross substitutes}\label{sec:r-gross}

Finally, we show that under non-separable payments, gross substitutes fails to generalize. The following definition extends the definition of gross substitutes to non-separable payments analogously to the way that it was generalized to the ITU model in Definition~\ref{def:defn_gs}.

\begin{definition}\label{def:payment_function}
    A \textit{payment function} is a function $r: 2^S\times \mathbb R^S \to \mathbb R$ that is continuous and strictly increasing in its second argument satisfying the following boundary conditions for any $p, p'\in \mathbb R^S$, $T\subseteq S$, and $j\in S$:
    \begin{enumerate}
        \item\label{it:payment:prop-1} $r(\emptyset, p) = 0$,
        \item\label{it:payment:prop-2} $r(T\cup\{j\}, p) = r(T, p)$ if $p_j = 0$,
        \item\label{it:payment:prop-3} $r(T, p) = r(T, p')$ if $j\not\in T$ and $p_{j'} = p'_{j'}$ for all $j'\ne j$,
        \item\label{it:payment:prop-4} $r(T\cup\{j\}, p)\to \infty$ as $p_j\to\infty$ and $r(T\cup\{j\}, p)\to -\infty$ as $p_j\to-\infty$.
    \end{enumerate}
    Given a payment function $r$, and a valuation function $v: 2^S\to \mathbb R$, define the utility function $u^r(T, p) = v(T) - r(T, p)$. We say $v$ satisfies \emph{$r$-gross substitutes} if, for any price vectors $p, p'\in \mathbb R^S$ such that $p\le p'$ and $T\in \arg\max_{R\subseteq S}u(R, p)$, there exists $T'\in \arg\max_{R\subseteq S}u(R, p')$ such that $T\cap \{j\mid p_j = p'_j\}\subseteq T'$.
\end{definition}

We can show that under any non-separability in $r$, $r$-gross substitutes is no longer equivalent to gross substitutes. These results also further justify our choice of separable effective price functions.

\begin{theorem}\label{thm:nonsep}
    Let $r: 2^S\times \mathbb R^S \to \mathbb R$ be a payment function, and let ${\cal V}_r$ be the set of valuations satisfying $r$-gross substitutes. Then ${\cal V}_r$ coincides with the set of functions satisfying gross substitutes if and only if there exist continuous, strictly increasing, surjective functions $q_{j}:\mathbb R\to\mathbb R$ such that $r(T, p) = \sum_{j\in T}q_{j}(p_j)$. 
\end{theorem}

\begin{proof}{Proof of Theorem~\ref{thm:nonsep}}
    Suppose there exit continuous, strictly increasing, surjective functions $q_{ij}:\mathbb{R}\rightarrow \mathbb{R}$ such that $r(T,p)=\sum_{j \in T} q_{ij} (p_j)$. We show that ${\cal V}_r$ coincides with the set of valuation functions satisfying gross substitutes. 

    Assume first that $v_i=v$ satisfies $r$-gross substitute. Fix any price vectors $p, p'$ with $p\le p'$, and let $T\subseteq S$ maximize $\mathring u(T',p)=v(T')-\sum_{j \in T'}p_j$. Since the effective price functions $q_{ij}$ are continuous, strictly increasing, and surjective, they admit inverse functions $q^{-1}_{ij}$. Define price vectors $\phi, \phi'$ by $\phi_j = q^{-1}_{ij}(p_j)$ and $\phi'_j = q^{-1}_{ij}(p'_j)$ for each $j\in S$. By monotonicity, $\phi\le \phi'$ as well. By definition, $\arg\max_{T'\subseteq S} u^r(T', \phi) = \arg\max_{T'\subseteq S}\mathring u(T',p)$, and so $T\in \arg\max_{T'\subseteq S} u^r(T', \phi)$. By $r$-gross substitutes, there exists $T^*\in \arg\max_{T'\subseteq S} u^r(T', \phi')$ such that $T\cap\{j\mid \phi_j = \phi'_j\}\subseteq T^*$. Since $\phi_j = \phi'_j$ if and only if $p_j = p'_j$, we see that there exists $T^*\in \arg\max_{T'\subseteq S}\mathring u(T',p)$ such that $T\cap\{j\mid p_j = p'_j\}\subseteq T^*$, giving gross substitutes.

    Now assume $v$ satisfies gross substitutes. Fix any price vectors $\phi, \phi'$ with $\phi\le \phi'$, and let $T\subseteq S$ maximize $u^r(T', \phi)$. Define $p, p'$ by $p_j = q_{ij}(\phi_j)$ and $p'_j = q_{ij}(\phi'_j)$ for each $j\in S$. By monotonicity, $p\le p'$ as well. By definition, $\arg\max_{T'\subseteq S} u^r(T', \phi) = \arg\max_{T'\subseteq S}\mathring u(T',p)$, and so $T\in \arg\max_{T'\subseteq S} \mathring u(T', p)$. By gross substitutes, there exists $T^*\in \arg\max_{T'\subseteq S}\mathring u(T', p')$ such that $T\cap\{j\mid p_j = p'_j\}\subseteq T^*$. Since $p_j = p'_j$ if and only if $\phi_j = \phi'_j$, we see that there exists $T^*\in \arg\max_{T'\subseteq S} u^r(T', \phi')$ such that $T\cap\{j\mid \phi_j = \phi'_j\}\subseteq T^*$, giving $r$-gross substitutes.

    To prove the opposite direction, we show the contrapositive. Assume that such functions $q_{j}$ do not exist. For each price vector $p$ and item $j$, define $q_j(x):=r(\{j\},p)$ for any $p$ with $p_j=x$. This is well-defined because $r(\{j\},p)$ is unaffected by prices of items outside $\{j\}$, as in Property~\ref{it:payment:prop-3} from the definition of payment function. If no bundle $T \subseteq S$ and price vector $p$ violated
    $$
    r(T,p)=\sum_{j\in T}q_j(p_j),
    $$
    then these $q_j$'s would give the separable representation, a contradiction. Hence there exist $T\subseteq S$ and $p$ such that
    $$
    r(T,p)\ne \sum_{j\in T}q_j(p_j).
    $$
    Choose such a $T$ of minimum cardinality. Then $|T|\ge 2$. Fix any $j\in T$ and let $T'=T\setminus\{j\}$. By minimality of $T$,
    $$
    r(T',p)=\sum_{\ell\in T'}q_\ell(p_\ell).
    $$
    If $r(T,p)=r(T',p)+r(\{j\},p)$, then since $r(\{j\},p)=q_j(p_j)$, we would get
    $$
    r(T,p)=\sum_{\ell\in T}q_\ell(p_\ell),
    $$
    contradicting the choice of $T$. Therefore,
    $$
    r(T,p)\ne r(T',p)+r(\{j\},p).
    $$
    
    We consider two cases.

    \textbf{Case 1}: $r(T, p) < r(T', p) + r(\{j\}, p)$. Let $\epsilon \in (0, (r(T', p) + r(\{j\}, p) - r(T, p))/2)$. Consider an additive valuation function $v$ where $v(\{\ell\}) = q_\ell(p_\ell)- \frac{\epsilon}{|T'|}$, for $\ell \in T'$, $v(\{j\}) = r(\{j\}, p) - \epsilon$, and $v(\{j'\}) = -M$ for all $j'\not\in T$, where $M = 2\max_{R\subseteq S}|r(R, p)|$. Clearly, any bundle with non-empty intersection with $S\setminus T$ has negative utility at price $p$.  Moreover, $$u(T)=v(T)-r(T,p)=r(T',p) + r(\{j\},p) - 2\epsilon -r(T,p) >0 \; \hbox{ and } \; u(\{j\})=r(\{j\},p)-\epsilon - r(\{j,p\})<0. $$ 
    So the optimal bundle at price $p$ contains at least one element of $T'$. 
    
    Now set $p'$ such that $p'_{j'} = p_{j'}$ for $j'\ne j$, and $p'_j$ is large enough such that $v(R) - r(R, p') < 0$ for any $R \subseteq S$ containing $j$ (which is possible because of Property~\ref{it:payment:prop-4} from the definition of payment function). We claim that the optimal bundle at $p'$ is $\emptyset$. Thus, $v$ satisfies gross substitutes (being additive) but does not satisfy $r$-gross substitutes. First observe that by Property~\ref{it:payment:prop-1},
    $
    u(\emptyset,p')=0.
    $ 
    Let now $R\subseteq S$, $R \neq \emptyset$. If $j \in R$, $u(R,p')<0$ by our choice of $p'_j$. Else If $R\setminus T'\neq \emptyset$, $u(R,p')=u(R,p)<0$, by what argued for prices $p$ and Property~\ref{it:payment:prop-3}. Last, for $R \subseteq T'$, by our choice of $T$, we have
    $$u(R,p')=v(R)-r(R,p')=\sum_{\ell \in R}q_\ell(p_\ell) - \frac{\epsilon}{|R|} - \sum_{\ell \in R} q_\ell(p'_\ell) = \sum_{\ell \in R}q_\ell(p_\ell) - \frac{\epsilon}{|R|} - \sum_{\ell \in R} q_\ell(p_\ell) < 0.$$
    The claim follows.

    \textbf{Case 2:} $r(T,p)>r(T',p)+r(\{j\},p)$. Let $\underline p$ be defined by $\underline p_j=0$ and
    $\underline p_\ell=p_\ell$ for all $\ell\neq j$, and set
    $$
        \delta:=r(T,p)-r(T',p)-r(\{j\},p)>0 .
    $$
    By the minimality of $T$, for every $R\subseteq T'$, $r(R,p)=\sum_{\ell\in R} q_\ell(p_\ell),$ and, for every $R\subsetneq T'$, $r(R\cup\{j\},p)=r(R,p)+r(\{j\},p).$ Also, by Property~\ref{it:payment:prop-2} and Property~\ref{it:payment:prop-3}, $r(R\cup\{j\},\underline p)=r(R,\underline p)=r(R,p)$ for all $R\subseteq T'$. Choose $\eta_\ell>0$ for $\ell\in T'$ such that $\eta:=\sum_{\ell\in T'}\eta_\ell<\delta$, and choose $C>\eta$. Define an additive valuation $v$ by
    $$
        v(\{\ell\})=q_\ell(p_\ell)+\eta_\ell \quad (\ell\in T'), 
        \qquad
        v(\{j\})=r(\{j\},p)+C,
    $$
    and set $v(\{\ell\})=-2\max_{R\subseteq S}|r(R, \underline p)|$ for $\ell\notin T$. Since $v$ is additive, it satisfies gross
    substitutes. At prices $\underline p$, for every $R\subseteq T'$,
    $$u(R\cup\{j\},\underline p)=v(R)+v(\{j\})-r(R\cup\{j\},\underline p)=r(\{j\},p)+C+\sum_{\ell\in R}\eta_\ell>v(T)-r(T,\underline p)=u(R,\underline p).$$
    As this utility is uniquely maximized at $R=T'$ and bundles or containing
    items outside $T$ are clearly not optimal, $T=T'\cup\{j\}$ is the unique demanded bundle at $\underline p$.
    
    Now consider again the original prices $p$ and recall that $p\ge \underline p$. For every $R\subsetneq T'$,
    $$u(R\cup\{j\},p)=v(R)+v(\{j\})-r(R\cup\{j\},p)=C+\sum_{\ell\in R}\eta_\ell ,$$
    whereas
    $$u(T,p)=v(T')+v(\{j\})-r(T,p)=C+\eta-\delta<C.$$
    Thus $T$ is not demanded at $p$. Moreover,
    $$u(T',p)=v(T')-r(T',p)=\eta<C=u(\{j\},p),$$
    so $T'$ is not demanded either. Since the only bundles containing $T'$ are $T'$ and $T$, no
    demanded bundle at $p$ contains $T'$, contradicting $r$-gross substitutes. This proves the contrapositive in
    Case 2.     
\end{proof}

\end{document}